\documentclass[journal]{IEEEtran}
\usepackage{soul}

\newcommand{\egl}[1]{{\color{red}{[EGL: #1]}}}
\newcommand{\tktk}[1]{{\color{blue}{[TT: #1]}}}
\newcommand{\hi}[1]{{\color{teal}{[HI: #1]}}}

\usepackage{cite}
\ifCLASSINFOpdf
  \usepackage[pdftex]{graphicx}
\else
\fi
\usepackage{amsmath}
\usepackage{amssymb}
\usepackage{bm}
\usepackage{flushend}
\usepackage{mathtools}

\usepackage{algcompatible}
\usepackage{algorithm}
\usepackage{algpseudocode}
\algrenewcommand\algorithmicindent{0em}
\makeatletter
\algnewcommand{\LineComment}[1]{\Statex \hskip\ALG@thistlm #1}

\makeatother
\usepackage{setspace}
\usepackage{upgreek}
\let\Algorithm\algorithm
\renewcommand\algorithm[1][]{\Algorithm[#1]\setstretch{1}}

\usepackage{color}

\newtheorem{Lemma}{Lemma}

\ifCLASSOPTIONcompsoc
  \usepackage[caption=false,font=normalsize,labelfont=sf,textfont=sf]{subfig}
\else
  \usepackage[caption=false,font=footnotesize]{subfig}
\fi
\usepackage[acronym,shortcuts]{glossaries}
\usepackage{tcolorbox}
\definecolor{mycolor}{RGB}{242,242,242}

\newacronym{5G}{5G}{fifth generation}
\newacronym{6G}{6G}{sixth generation}
\newacronym{ADR}{ADR}{antenna decentralized rate}
\newacronym{AER}{AER}{activity error rate}
\newacronym{AP}{AP}{access point}
\newacronym{ADC}{ADC}{analog-to-digital converter}
\newacronym{ABP}{ABP}{approximate belief propagation}
\newacronym{ADD}{ADD}{annealed discrete denoiser}
\newacronym{AMP}{AMP}{approximate message passing}
\newacronym{AUD}{AUD}{active user detection}
\newacronym{AWGN}{AWGN}{additive white Gaussian noise}
\newacronym{BLMMSE}{BLMMSE}{Bussgang linear MMSE}
\newacronym{BC}{BC}{belief combining}
\newacronym{BER}{BER}{bit error rate}
\newacronym{BiGAMP}{BiGAMP}{bilinear generalized approximate message passing}
\newacronym{BIP}{BIP}{bilinear inference problem}
\newacronym{CAMP}{CAMP}{convolutional AMP}
\newacronym{GAMP}{GAMP}{generalized AMP}
\newacronym{GAMP_full}{GAMP}{generalized approximate message passing}
\newacronym{GF}{GF}{grant-free}
\newacronym{BiGaBP}{BiGaBP}{bilinear Gaussian belief propagation}
\newacronym{BOD}{BOD}{Bayes-optimal denoiser}
\newacronym{BP}{BP}{belief propagation}
\newacronym{BS}{BS}{base station}
\newacronym{CAP}{CAP}{central AP}
\newacronym{CCU}{CCU}{central computing unit}
\newacronym{CDF}{CDF}{cumulative distribution function}
\newacronym{CLT}{CLT}{central limit theorem}
\newacronym{CPU}{CPU}{central processing unit}
\newacronym{CE}{CE}{channel estimation}
\newacronym{CF-mMIMO}{CF-mMIMO}{cell-free massive MIMO}
\newacronym{CSI}{CSI}{channel state information}
\newacronym{CSIDCO}{CSIDCO}{complex SIDCO}
\newacronym{DCC}{DCC}{dynamic cooperation clustering}
\newacronym{DFT}{DFT}{discrete Fourier transform}
\newacronym{DoF}{DoF}{degrees of freedom}
\newacronym{DQ}{DQ}{de-quantization}
\newacronym{DL}{DL}{deep learning}
\newacronym{DU}{DU}{deep unfolding}
\newacronym{eMBB}{eMBB}{enhanced mobile broadband}
\newacronym{ECF}{ECF}{estimate-compress-forward}
\newacronym{EP}{EP}{expectation propagation}
\newacronym{EPA}{EPA}{approximate EP}
\newacronym{FA}{FA}{false alarm}
\newacronym{FN}{FN}{factor node}
\newacronym{FG}{FG}{factor graph}
\newacronym{FTN}{FTN}{Faster-than-Nyquist}
\newacronym{GaBP}{GaBP}{Gaussian belief propagation}
\newacronym{GLM}{GLM}{generalized linear model}
\newacronym{GVAMP}{GVAMP}{generalized VAMP}
\newacronym{GMAMP}{GMAMP}{generalized MAMP}
\newacronym{IC}{IC}{interference cancellation}
\newacronym{IDD}{IDD}{iterative detection and decoding}
\newacronym{i.i.d.}{i.i.d.}{independent and identically distributed}
\newacronym{JACDE}{JACDE}{joint activity, channel and data estimation}
\newacronym{JACE}{JACE}{joint activity and channel estimation}
\newacronym{JCDE}{JCDE}{joint channel and data estimation}
\newacronym{KLD}{KLD}{Kullback-Leibler divergence}
\newacronym{LE}{LE}{linear estimator}
\newacronym{LSA}{LSA}{latent semantic analysis}
\newacronym{MAC}{MAC}{multiple-access channel}
\newacronym{MAMP}{MAMP}{memory AMP}
\newacronym{MAP}{MAP}{maximum \textit{a posteriori} probability}
\newacronym{MPDQ}{MPDQ}{message passing de-quantization}
\newacronym{MD}{MD}{miss-detection}
\newacronym{MF}{MF}{matched filter}
\newacronym{MFB}{MFB}{matched filter bound}
\newacronym{MF-EP}{MF-EP}{matched filter EP}
\newacronym{MM}{MM}{moment matching}
\newacronym{MNS}{MNS}{minimum norm solution}
\newacronym{MIMO}{MIMO}{multiple-input multiple-output}
\newacronym{MU-MIMO}{MU-MIMO}{multi-user MIMO}
\newacronym{MU}{MU}{multi-user}
\newacronym{mMIMO}{mMIMO}{Massive multiple-input multiple-output}
\newacronym{ML}{ML}{maximum likelihood}
\newacronym{MMSE}{MMSE}{minimum mean-square error}
\newacronym{mMTC}{mMTC}{massive machine type communications}
\newacronym{MMV-AMP}{MMV-AMP}{multiple measurement vector approximate message passing}
\newacronym{MSE}{MSE}{mean square error}
\newacronym{MUD}{MUD}{multi-user detection}
\newacronym{NLE}{NLE}{nonlinear estimator}
\newacronym{NR}{NR}{new radio}
\newacronym{NMSE}{NMSE}{normalized mean square error}
\newacronym{NMMSE}{NMMSE}{non-linear MMSE}
\newacronym{OAMP}{OAMP}{orthogonal AMP}
\newacronym{OFDM}{OFDM}{orthogonal frequency-division multiplexing}
\newacronym{PDA}{PDA}{probabilistic data association}
\newacronym{PDF}{PDF}{probability density function}
\newacronym{PMF}{PMF}{probability mass function}
\newacronym{PPP}{PPP}{Poisson point process}
\newacronym{PSK}{PSK}{phase-shift keying}
\newacronym{QP}{QP}{quadratic program}
\newacronym{QPSK}{QPSK}{quadrature PSK}
\newacronym{QAM}{QAM}{quadrature amplitude modulation}
\newacronym{SIDCO}{SIDCO}{sequential iterative decorrelation via convex optimization}
\newacronym{SD}{SD}{sphere decoding}
\newacronym{SE}{SE}{state evolution}
\newacronym{SGA}{SGA}{scalar Gaussian approximation}
\newacronym{SIC}{SIC}{soft interference cancellation}
\newacronym{SISO}{SISO}{single-input single-output}
\newacronym{SID}{SID}{self-iterative detection}
\newacronym{SLM}{SLM}{standard linear model}
\newacronym{SNR}{SNR}{signal-to-noise ratio}
\newacronym{Soft IC}{Soft IC}{soft interference cancellation}
\newacronym{SotA}{SotA}{state-of-the-art}
\newacronym{SVD}{SVD}{singular value decomposition}
\newacronym{SPA}{SPA}{sum-product algorithm}
\newacronym{TX}{TX}{transmit}
\newacronym{RX}{RX}{receive}
\newacronym{RMSE}{RMSE}{root mean square error}
\newacronym{UAMP}{UAMP}{unitary AMP}
\newacronym{UE}{UE}{user equipment}
\newacronym{ULA}{ULA}{uniform linear array}
\newacronym{URA}{URA}{unsourced random access}
\newacronym{URLLC}{URLLC}{ultra reliable low latency communications}
\newacronym{VAMP}{VAMP}{vector AMP}
\newacronym{VGA}{VGA}{vector Gaussian approximation}
\newacronym{VN}{VN}{variable node}
\newacronym{w.r.t.}{w.r.t.}{with respect to}
\newacronym{ZF}{ZF}{zero-forcing}
\newacronym{flops}{flops}{floating point operations}
\newacronym{CS}{CS}{compressed sensing}
\newacronym{MP}{MP}{message passing}
\newacronym{MPA}{MPA}{message passing algorithm}
\newacronym{DNN}{DNN}{deep neural network}
\newacronym{ASB}{ASB}{adaptively scaled belief}
\newacronym{LLR}{LLR}{log-likelihood ratio}
\newacronym{BAd-VAMP}{BAd-VAMP}{bilinear adaptive vector AMP}
\newacronym{LIP}{LIP}{linear inference problem}
\newacronym{AoA}{AoA}{angle of arrival}
\newacronym{LS}{LS}{least-squares}
\newacronym{mmWave}{mmWave}{millimeter-wave}
\newacronym{NLS}{NLS}{non-linear least-squares}
\newacronym{LoS}{LoS}{line-of-sight}
\newacronym{MoM}{MoM}{method of moments}
\newacronym{LMMSE}{LMMSE}{linearly constrained MMSE}
\newacronym{TDD}{TDD}{time-division duplexing}
\newacronym{FDD}{FDD}{frequency-division duplexing}
\newacronym{csi}{CSI}{channel state information}
\newacronym{bs}{BS}{base station}
\newacronym{ue}{UE}{user equipment}
\newacronym{RF}{RF}{radio frequency}
\newacronym{CRLB}{CRLB}{Cram\'er-Rao lower bound}
\usepackage{comment}

\begin{document}
%
% \setlength{\abovedisplayskip}{2pt} 
% \setlength{\belowdisplayskip}{2pt}
%
% paper title
% Titles are generally capitalized except for words such as a, an, and, as,
% at, but, by, for, in, nor, of, on, or, the, to and up, which are usually
% not capitalized unless they are the first or last word of the title.
% Linebreaks \\ can be used within to get better formatting as desired.
% Do not put math or special symbols in the title.
%\title{Layered Belief Propagation for Low-complexity Large MIMO Detection Based on Statistical Beams}
\title{
Reciprocity Calibration of Dual-Antenna Repeaters via MMSE Estimation
}
%
%
% author names and IEEE memberships
% note positions of commas and nonbreaking spaces ( ~ ) LaTeX will not break
% a structure at a ~ so this keeps an author's name from being broken across
% two lines.
% use \thanks{} to gain access to the first footnote area
% a separate \thanks must be used for each paragraph as LaTeX2e's \thanks
% was not built to handle multiple paragraphs
%

\author{
Shoma~Hara,~\IEEEmembership{Graduate Student Member,~IEEE},
~Takumi~Takahashi,~\IEEEmembership{Member,~IEEE},
~Hiroki~Iimori,~\IEEEmembership{Member,~IEEE}, 
~Hideki~Ochiai,~\IEEEmembership{Fellow,~IEEE},
~Erik~G.~Larsson,~\IEEEmembership{Fellow,~IEEE} \\
%~Shinsuke~Ibi,~\IEEEmembership{Senior Member,~IEEE},
%and~Hideki~Ochiai,~\IEEEmembership{Fellow,~IEEE}% <-this % stops a space
\thanks{
%This manuscript was submitted on November  th, 2024 \textit{(Corresponding author: Takumi Takahashi)}.
%This work was financially supported by JSPS KAKENHI Grant Numbers JP21H01332, JP22H01483, and JP23K13335. 
% Note: This manuscript is based on the findings obtained in our arXiv preprint, which has been submitted to IEEE TSP, and is currently under peer review. The confidential resubmission version is attached separately.

%This work was financially supported in part by JST, CRONOS, Japan Grant Number JPMJCS24N1, and in part by MIC/FORWARD under Grant JPMI240710001.

% This work was supported in part by JSPS KAKENHI under Grant JP23K13335, JP23K22754, and JP25H01111; in part by JST, CRONOS, Japan under Grant JPMJCS24N1; and in part by MIC/FORWARD under Grant JPMI240710001. \textit{(Corresponding author: Takumi Takahashi.)}

%This work was supported in part by JST, CRONOS, Japan under Grant JPMJCS24N1; and in part by MIC/FORWARD under Grant JPMI240710001. 
%\textit{(Corresponding author: Takumi Takahashi.)}

%This work was supported in part by JST, CRONOS, Japan under Grant JPMJCS24N1; and in part by MIC/FORWARD under Grant JPMI240710001.

S. Hara, T. Takahashi and H. Ochiai are with the Graduate School of Engineering,  University of Osaka, 2-1 Yamada-oka, Suita, 565-0871, Japan (e-mail: s-hara@wcs.comm.eng.osaka-u.ac.jp, \{takahashi, ochiai\}@comm.eng.osaka-u.ac.jp). H. Iimori is with Ericsson Research, Ericsson Japan K. K., Yokohama SYMPHOSTAGE West Tower 12F, 5-1-2 Minato Mirai, Yokohama, 220-0012, Japan (e-mail: hiroki.iimori@ericsson.com). E. G. Larsson is with the Department of Electrical Engineering (ISY), Link{\"o}ping University, 581~83 Link{\"o}ping, Sweden (e-mail: erik.g.larsson@liu.se).

A conference version of this work is planned to be submitted to a special session at IEEE SPAWC 2026. This work has been submitted to the IEEE for possible publication. Copyright may be transferred without notice, after which this version may no longer be accessible.

}% <-this % stops a space

%S. Ibi is with Faculty of Science and Engineering, Doshisha University 1-3 Tataramiyakodani, Kyotanabe, 610-0394, Japan (e-mail: sibi@mail.doshisha.ac.jp).}% <-this % stops a space
\vspace{-7mm}
}

% note the % following the last \IEEEmembership and also \thanks - 
% these prevent an unwanted space from occurring between the last author name
% and the end of the author line. i.e., if you had this:
% 
% \author{....lastname \thanks{...} \thanks{...} }
%                     ^------------^------------^----Do not want these spaces!
%
% a space would be appended to the last name and could cause every name on that
% line to be shifted left slightly. This is one of those "LaTeX things". For
% instance, "\textbf{A} \textbf{B}" will typeset as "A B" not "AB". To get
% "AB" then you have to do: "\textbf{A}\textbf{B}"
% \thanks is no different in this regard, so shield the last } of each \thanks
% that ends a line with a % and do not let a space in before the next \thanks.
% Spaces after \IEEEmembership other than the last one are OK (and needed) as
% you are supposed to have spaces between the names. For what it is worth,
% this is a minor point as most people would not even notice if the said evil
% space somehow managed to creep in.

% The paper headers
\markboth{Journal of \LaTeX\ Class Files,~Vol.~, No.~, November~2024}%
{T.~Takahashi \MakeLowercase{\textit{et al.}}: Bare Demo of IEEEtran.cls for IEEE Journals}
% The only time the second header will appear is for the odd numbered pages
% after the title page when using the twoside option.
% 
% *** Note that you probably will NOT want to include the author's ***
% *** name in the headers of peer review papers.                   ***
% You can use \ifCLASSOPTIONpeerreview for conditional compilation here if
% you desire.

% If you want to put a publisher's ID mark on the page you can do it like
% this:
%\IEEEpubid{0000--0000/00\$00.00~\copyright~2015 IEEE}
% Remember, if you use this you must call \IEEEpubidadjcol in the second
% column for its text to clear the IEEEpubid mark.

% use for special paper notices
%\IEEEspecialpapernotice{(Invited Paper)}

% make the title area
\maketitle

% As a general rule, do not put math, special symbols or citations
% in the abstract or keywords.
\vspace{-7mm}
\begin{abstract}
This paper proposes a novel Bayesian reciprocity calibration method that consistently ensures uplink and downlink channel reciprocity in repeater-assisted \ac{MIMO} systems.
The proposed algorithm is formulated under the \ac{MMSE} criterion.
Its Bayesian framework incorporates complete statistical knowledge of the signal model, noise, and prior distributions, enabling a coherent design that achieves both low computational complexity and high calibration accuracy.
To further enhance phase-alignment accuracy, which is critical for calibration tasks, we develop a von Mises denoiser that exploits the fact that the target parameters lie on the circle in the complex plane.
Simulation results demonstrate that the proposed \ac{MMSE} algorithm achieves substantially improved estimation accuracy compared with conventional deterministic \ac{NLS} methods, while maintaining comparable computational complexity.
Furthermore, the proposed method exhibits remarkably fast convergence, making it well suited for practical implementation.
\end{abstract}

\begin{IEEEkeywords}
Reciprocity calibration, dual-antenna repeater, MIMO systems, Bayesian inference, MMSE
\end{IEEEkeywords}

\glsresetall

\IEEEpeerreviewmaketitle

\vspace{-2mm}

\newcommand{\varw}{\sigma_{\mathrm{w}}^2}
\newcommand{\varx}{\sigma_{\mathrm{x}}^2}

\newcommand{\xab}[1]{{x}_{#1, \mathrm{A} \to \mathrm{B}}}
\newcommand{\vab}[1]{{v}_{#1, \mathrm{A} \to \mathrm{B}}}
\newcommand{\xb}[1]{{x}_{#1, \mathrm{B}}}
\newcommand{\vb}[1]{{v}_{#1, \mathrm{B}}}
\newcommand{\xba}[2]{{x}_{#1, #2, \mathrm{B} \to \mathrm{A}}}
\newcommand{\vba}[2]{{v}_{#1, #2,  \mathrm{B} \to \mathrm{A}}}
\newcommand{\xa}[2]{{x}_{#1, #2}^\mathrm{a}}
\newcommand{\va}[2]{{v}_{#1, #2}^ \mathrm{a}}

\newcommand{\olxm}[1]{\overline{x}_{#1}}
\newcommand{\olxnm}[2]{\overline{x}_{#1,#2}}
\newcommand{\olvm}[1]{\overline{v}_{#1}}
\newcommand{\olvnm}[2]{\overline{v}_{#1,#2}}
\newcommand{\hatxm}[1]{\hat{x}_{#1}}
\newcommand{\hatxnm}[2]{\hat{x}_{#1,#2}}
\newcommand{\hatvm}[1]{\hat{v}_{#1}}
\newcommand{\hatvnm}[2]{\hat{v}_{#1,#2}}
\newcommand{\ckxm}[1]{\check{x}_{#1}}
\newcommand{\ckxnm}[2]{\check{x}_{#1,#2}}
\newcommand{\ckvm}[1]{\check{v}_{#1}}
\newcommand{\ckvnm}[2]{\check{v}_{#1,#2}} 
\newcommand{\ckzn}[1]{\check{z}_{#1}} 
\newcommand{\psinmz}[2]{\psi_{#1,#2}^{\mathrm{z}}} 
\newcommand{\psinz}[1]{\psi_{#1}^{\mathrm{z}}} 
\newcommand{\psinms}[2]{\psi_{#1,#2}^{\mathrm{s}}} 
\newcommand{\psins}[1]{\psi_{#1}^{\mathrm{s}}} 
\newcommand{\1}{\mbox{1}\hspace{-0.25em}\mbox{l}}

\section{Introduction}
\label{Chap:intro}
% 
%Root ../main.tex
% 
%===============
% Section 1.
%===============

Massive \ac{MIMO} systems rely on \ac{TDD} channel reciprocity to enable scalable \ac{CSI} acquisition at the \ac{BS}~\cite{Marzetta2010,Larsson2014}. 
In \ac{TDD} operation, the \ac{BS} estimates uplink channels from pilot transmissions and reuses them for downlink precoding, leveraging the large antenna array to coherently combine low-power uplink signals. 
However, while reciprocity holds for the physical propagation channel, practical transceiver hardware exhibits non-reciprocal behavior due to mismatches between the \ac{TX} and \ac{RX} \ac{RF} chains~\cite{BourdouxRAWCON03}, which distort the effective uplink and downlink channels observed at the \ac{BS}. 
As a result, uplink-based channel estimates cannot be directly reused for downlink precoding without compensation, making reciprocity calibration a fundamental requirement for accurate beamforming and for realizing the full performance potential of reciprocity-based massive \ac{MIMO} systems~\cite{Bjornson2014_IT}.

The problem of reciprocity calibration in \ac{TDD}-based \ac{MIMO} systems was formally identified and formulated in early work~\cite{Kaltenberger2010}.
This line of research established a fundamental calibration model in which the observed uplink and downlink channels differ by multiplicative hardware responses and demonstrated that accurate downlink beamforming requires explicit compensation of these non-reciprocal effects.
Following this formulation, extensive research has developed reciprocity calibration methods for conventional massive \ac{MIMO} systems with co-located antenna arrays.
% and direct \ac{BS}–\ac{UE} links. 
%
Representative approaches include relative calibration techniques that exploit internal signal exchange or mutual coupling among antennas within the same array~\cite{Vieira2017}, as well as over-the-air schemes based on bidirectional pilot transmissions~\cite{JiangOTACalibration}. 
Under the assumption of a single-hop channel and co-located hardware, these methods can estimate the relative \ac{TX} and \ac{RX} responses of the \ac{BS} antenna elements up to a common scalar ambiguity, which is sufficient for multiuser downlink precoding since this scalar does not affect beamforming directions or interference suppression.

Recent cellular network architectures are evolving beyond conventional co-located \acp{BS} toward more spatially distributed deployments to improve coverage, spatial multiplexing capability, and user experience at the cell edge~\cite{Ngo2017,Emil2020cellfree,Papaza2020cellfree,Iimori2021GrantFree,Takahashi2023ADCcellfree}.
While distributed massive \ac{MIMO} systems have been extensively studied to improve spatial diversity, coverage uniformity, and cell-edge performance by geographically separating antenna elements, their practical deployment remains challenging. 
Distributed \ac{MIMO} architectures typically rely on high-capacity, low-latency fronthaul or backhaul connections to coordinate multiple remote radio units, which significantly increases deployment cost and operational complexity~\cite{Ammar2022}.
%, posing challenges in terms of infrastructure availability, synchronization, and long-term maintenance.

In light of these challenges, repeater-assisted massive \ac{MIMO} has emerged as a pragmatic alternative that enables network densification without the architectural complexity of fully distributed deployments~\cite{SaraComMag25, Vu2025RSMA, Jopanya2025SWARM, Iimori2023GLOBECOM, Andersson2026DynamicTDD, Topal2025RA_MIMO, Chowdhury2026DualAntenna, Bai2026RepeaterSwarm, Le2025ARRNOMA, Larsson2024StabilityRepeaters}. 
By deploying physically small and low-cost wireless repeaters within the cell, additional effective scattering paths are created, improving coverage and channel rank while keeping all multi-antenna signal processing centralized at the \ac{BS}. 
Unlike distributed access points, repeaters do not require dedicated backhaul, tight inter-site synchronization, or major changes to existing network architectures, making them attractive from both deployment and maintenance perspectives. 
Ongoing standardization efforts on network-controlled repeaters further highlight their relevance as a realistic step toward performance levels approaching those of distributed massive \ac{MIMO} systems~\cite{Carvalho2025NCRIntro}.

Conventional reciprocity calibration methods are developed under the assumption that the channel between the \ac{BS} and the \acp{UE} is a single-hop propagation channel, affected only by the \ac{TX} and \ac{RX} hardware chains at the transceiver endpoints. 
In repeater-assisted architectures, this assumption no longer holds. 
Repeaters introduce additional \ac{TX} and \ac{RX} chains with their own analog impairments, amplification gains, and phase responses, which become cascaded with the \ac{BS}-\ac{UE} channel~\cite{Nie2020}.
As a result, the effective end-to-end channel is composed not only of the propagation environment and the \ac{BS}/\ac{UE} hardware but also of unknown and potentially asymmetric forward and reverse responses of the repeater. 
This cascaded non-reciprocity fundamentally alters the structure of the calibration problem and renders standard massive \ac{MIMO} calibration techniques designed for co-located arrays and direct links insufficient.

Only recently has repeater-induced non-reciprocity been addressed explicitly in the literature, although earlier works, \textit{e.g.}, \cite{MaPIMRC2015}, had already suggested that repeaters may not operate effectively in \ac{TDD} systems due to their inherently non-reciprocal behavior.
To address this issue, a calibration method has recently been proposed in which the repeater-induced hardware asymmetries are explicitly modeled and incorporated into the reciprocity calibration problem through a coupled \ac{LS} formulation~\cite{Larsson2024}.
The resulting estimation problem is solved using a \ac{NLS} and alternating \ac{NLS} approach to handle the coupling among multiple unknown hardware responses.
%, thereby providing a general framework for calibrating repeater-assisted links.
%
While this formulation represents an important step toward repeater-aware reciprocity calibration, the proposed solution is developed within a deterministic \ac{LS} framework and treats the repeater-induced hardware impairments as unknown but unstructured quantities. 
In practice,  prior information on these impairments may be available.
%these impairments often exhibit statistical structure arising from hardware design, manufacturing tolerances, and operating conditions~\cite{Bjornson2014_IT,Gustavsson2014,Athley2015}. 
%
Leveraging such prior information has the potential to significantly improve calibration robustness and accuracy, especially under limited training overhead or low \ac{SNR} conditions. 
This observation motivates the development of the proposed calibration algorithm in this paper.
Our specific contributions are:

\begin{itemize}
\item We develop a Bayesian formulation of the repeater-aware reciprocity calibration problem by explicitly incorporating prior statistical models of the repeater-induced non-reciprocity and the measurement noise, enabling a principled treatment of hardware-induced impairments.
\item Building upon this formulation, we propose a novel \ac{MMSE}-based calibration algorithm that performs iterative Bayesian bilinear inference using \ac{PDA}~\cite{Ito2025bipda,Takahashi2024JCTDD,Rayan2025JCDRE}, where the \ac{MMSE} updates are implemented via von Mises denoisers and unknown prior statistics are estimated through the \ac{MoM}.
\item Through extensive simulations, we demonstrate that the proposed algorithm achieves significantly improved robustness and accuracy over the \ac{SotA} deterministic \ac{NLS} methods~\cite{Larsson2024} in large-scale and low \ac{SNR} regimes, while maintaining computational complexity comparable to the basic \ac{NLS} method. 
%Furthermore, the proposed approach remains well-conditioned in high-dimensional settings due to the regularization effect of the prior distributions, leading to fast convergence.
\end{itemize}

\textit{Notation}: Sets of real and complex-valued numbers are denoted by $\mathbb{R}$ and $\mathbb{C}$, respectively.
Vectors and matrices are denoted by lowercase and uppercase letters, respectively.
The $(i, j)$-th entry of a matrix $\bm{A}$ is denoted by $\bm{A}(i,j)$, and the $i$-th column of $\bm{A}$ is denoted by $(\bm{A})_i$.
The $a\times a$ square identity matrix is denoted by $\bm{I}_a$.
For scalars $a_1,\dots,a_M$, $\mathrm{diag}(a_1,\dots,a_M)$ denotes a diagonal matrix with the elements $a_1, \dots,a_M$ on its main diagonal.
The conjugate, transpose, and Hermitian transpose are denoted by $(\cdot)^*$, $(\cdot)^{\mathsf{T}}$, and $(\cdot)^{\mathsf{H}}$, respectively.
The trace and determinant of $\bm{A}$ are denoted by $\mathrm{tr}\left(\bm{A}\right)$ and $\mathrm{det}\left(\bm{A}\right)$, respectively.
The Kronecker product of matrices $\bm{A}$ and $\bm{B}$ is denoted by $\bm{A}\otimes \bm{B}$.
The stacking of the columns of a matrix $\bm{A}$ on top of one another is denoted by $\mathrm{vec}\left(\bm{A}\right)$.
% $\mathrm{Diag}(\bm a)$ denotes a diagonal matrix with the elements of the vector $\bm a$ on its main diagonal.
%
The symbol $\|\cdot\|_2$ and $\|\cdot\|_\mathrm{F}$ denote the Euclidean norm of a vector and the Frobenius norm of a matrix, respectively.
The real part of a complex quantity is denoted by $\Re\{\cdot\}$, and the imaginary unit is denoted by $\mathrm{j} \triangleq \sqrt{-1}$.
The magnitude and phase of a complex number $c$ are denoted by $|c|$ and $\mathrm{arg}\left( c \right)$, respectively.
The notation $a\sim\mathcal{P}$ indicates that a random variable $a$ follows a probability distribution $\mathcal{P}$.
The circularly symmetric complex Gaussian distribution with mean $\mu$ and covariance $\bm{\varLambda}$ is denoted by $\mathcal{CN}\left(\mu,\bm\varLambda\right)$.
The real Gaussian distribution with mean $\mu$ and variance $\sigma^2$ is denoted by $\mathcal{N}(\mu,\sigma^2)$.
The uniform distribution on $[a,b)$ is denoted by $\mathcal{U}[a,b)$.
The \ac{PDF} of $x$ is denoted by $p_{\mathsf{x}}(\cdot)$, and the conditional \ac{PDF} of $x$ given $y$ is denoted by $p_{\mathsf{x|y}}(\cdot|\cdot)$.
The expectation is denoted by $\mathbb{E}[\cdot]$.
Finally, the notation $\mathcal{O}(\cdot)$ denotes the computational complexity order.

% \vspace{-1mm}
% \section{Preliminaries}
% \label{Chap:intro}
% \input{TXT/1_intro}

\section{System Model}
\label{Chap:sys}

% 
%Root ../main.tex
% 
%===============
% Section 2.
%===============
% 
\begin{figure}[t]
\centering
\includegraphics[width=0.98\columnwidth,keepaspectratio=true]{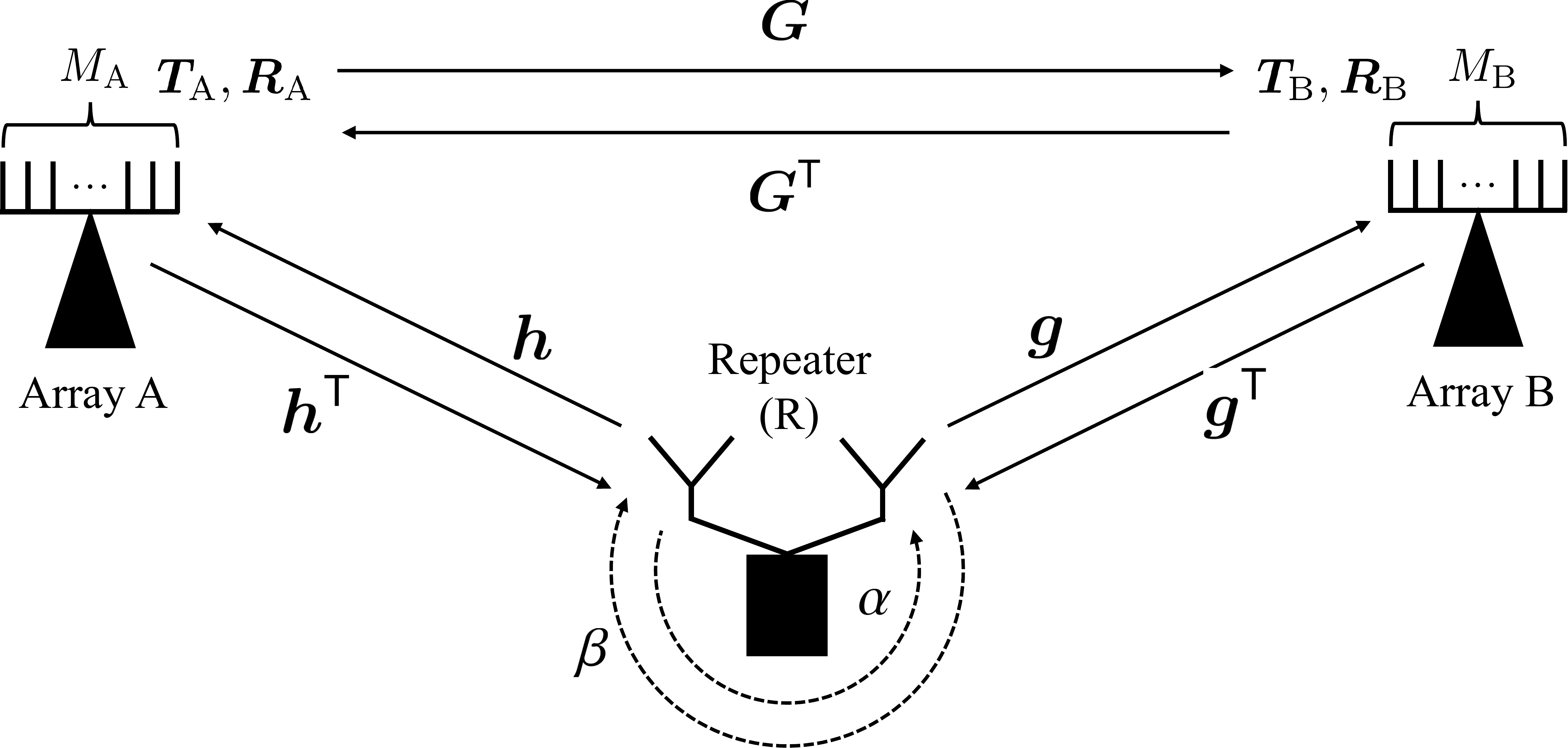}
\caption{Two antenna arrays, $\mathrm{A}$ and $\mathrm{B}$, and a repeater ($\mathrm{R}$). $\bm{G}$ denotes the propagation channel from $\mathrm{A}$ to $\mathrm{B}$ when $\mathrm{R}$ is turned off. Radio channels are represented by solid lines, whereas repeater gains are represented by dashed lines. The figure is adapted from~\cite{Larsson2024}.
% \egl{say that the figure is reproduced freely from [x]? (it looks identical)}}
% \tktk{Thank you. If we use it as a reproduction, we would probably need to contact the IEEE editorial office, so I’ll replace it with a slightly modified figure.}
}
\vspace{-4mm}
\label{fig:sys_model}
\end{figure}

Consider a communication scenario comprising two antenna arrays, denoted by $\mathrm{A}$ and $\mathrm{B}$, and a dual-antenna repeater $\mathrm{R}$, as illustrated in Fig.~\ref{fig:sys_model}~\cite{Larsson2024}. 
The numbers of antennas at arrays $\mathrm{A}$ and $\mathrm{B}$ are denoted by $M_{\mathrm{A}}$ and $M_{\mathrm{B}}$, respectively. 
In practice, $\mathrm{A}$ and $\mathrm{B}$ may represent a \ac{BS} and a mobile user terminal, or two distributed access points in a distributed \ac{MIMO} system.

The propagation channel between $\mathrm{A}$ and $\mathrm{B}$ consists of a direct path and an indirect path through the repeater $\mathrm{R}$. 
When the repeater is switched off, the direct channel from $\mathrm{A}$ to $\mathrm{B}$ is denoted by $\bm{G} \in \mathbb{C}^{M_{\mathrm{B}} \times M_{\mathrm{A}}}$, and, by reciprocity, the channel from $\mathrm{B}$ to $\mathrm{A}$ is given by $\bm{G}^{\mathsf{T}}$. 
We impose no structural assumptions on $\bm{G}$; it may include both \ac{LoS} and multipath components. 
Even when $\mathrm{R}$ is deactivated, it may scatter incident radio waves and thus serve as a reciprocal passive element, an effect implicitly captured in $\bm{G}$. 

The channel from $\mathrm{A}$ to $\mathrm{R}$ is denoted by $\bm{h}^{\mathsf{T}} \in \mathbb{C}^{1 \times M_{\mathrm{A}}}$, and the reverse channel from $\mathrm{R}$ to $\mathrm{A}$ is denoted by $\bm{h} \in \mathbb{C}^{M_{\mathrm{A}} \times 1}$. 
Similarly, the channel from $\mathrm{B}$ to $\mathrm{R}$ is represented by $\bm{g}^{\mathsf{T}} \in \mathbb{C}^{1 \times M_{\mathrm{B}}}$, and the reverse channel from $\mathrm{R}$ to $\mathrm{B}$ is denoted by $\bm{g} \in \mathbb{C}^{M_{\mathrm{B}} \times 1}$. 
When the repeater operates in its nominal configuration, the complex gains of the forward and reverse paths are denoted by $\alpha$ and $\beta$, respectively. 
In general, $\alpha \neq \beta$, implying that the repeater is nominally non-reciprocal.
This non-reciprocity arises from factors such as thermal fluctuations and hardware non-idealities.

In addition, the \ac{TX} and \ac{RX} branches of arrays $\mathrm{A}$ and $\mathrm{B}$ introduce hardware-induced non-reciprocity due to analog front-end impairments such as temperature drift and long-term component aging. 
This non-reciprocity is modeled by four diagonal matrices,
\begin{subequations}
\begin{eqnarray}
    \label{eq:ta}
    \bm{T}_{\mathrm{A}}
    \!\!&\!\!\triangleq\!\!&\!\!
    \mathrm{diag}\left(
    t_{\mathrm{A},1},t_{\mathrm{A},2}, \dots, t_{\mathrm{A}, M_{\mathrm{A}}}
    \right), \\
    \label{eq:ra}
    \bm{R}_{\mathrm{A}}
    \!\!&\!\!\triangleq\!\!&\!\!
    \mathrm{diag}\left(
    r_{\mathrm{A},1},r_{\mathrm{A},2}, \dots, r_{\mathrm{A},M_{\mathrm{A}}}
    \right), \\
    \label{eq:tb}
    \bm{T}_{\mathrm{B}}
    \!\!&\!\!\triangleq\!\!&\!\!
    \mathrm{diag}\left(
    t_{\mathrm{B},1},t_{\mathrm{B},2}, \dots, t_{\mathrm{B},M_{\mathrm{B}}}
    \right), \\
    \label{eq:rb}
    \bm{R}_{\mathrm{B}}
    \!\!&\!\!\triangleq\!\!&\!\!
    \mathrm{diag}\left(
    r_{\mathrm{B},1},r_{\mathrm{B},2}, \dots, r_{\mathrm{B},M_{\mathrm{B}}}
    \right), 
\end{eqnarray}
\label{eq:rc}%
\end{subequations}
whose diagonal entries represent the complex-valued reciprocity coefficients of the individual antenna elements, including  a possible drift of the oscillators at  $\mathrm{A}$ and $\mathrm{B}$.

%the diagonal of $\bm{R}_{\mathrm{A}}$ comprises $\{r_{\mathrm{A},1}, \dots, r_{\mathrm{A},M_{\mathrm{A}}}\}$ for the \ac{RX} chain of array $\mathrm{A}$, whereas that of $\bm{T}_{\mathrm{A}}$ comprises $\{t_{\mathrm{A},1}, \dots, t_{\mathrm{A}, M_{\mathrm{A}}}\}$ for the \ac{TX} chain. 
%Similarly, for array $\mathrm{B}$, we define $\bm{R}_{\mathrm{B}} = \mathrm{diag}(r_{B_1}, \dots, r_{B_{M_{\mathrm{B}}}})$ and $\bm{T}_{\mathrm{B}} = \mathrm{diag}(t_{B_1}, \dots, t_{B_{M_{\mathrm{B}}}})$. 
%

Using the above definitions, the effective noise-free channel from $\mathrm{A}$ to $\mathrm{B}$ can be expressed as
\begin{equation}
    \bm R_{\mathrm{B}}\left(\bm G + \alpha \bm g \bm h^{\mathsf{T}}\right)\bm T_{\mathrm{A}},
\end{equation}
while the channel from $\mathrm{B}$ to $\mathrm{A}$ is given by
\begin{equation}
     \bm R_{\mathrm{A}}\left(\bm G^{\mathsf{T}} + \beta \bm h \bm g^{\mathsf{T}}\right)\bm T_{\mathrm{B}}.
\end{equation}
In each expression, the first term represents the direct propagation channel, whereas the second term captures the repeater-assisted path.

For generality, neither array $\mathrm{A}$ nor $\mathrm{B}$ is assumed to be pre-calibrated.
Consequently, all parameters
\begin{equation}
\alpha,~ \beta,~ \bm G,~ \bm g,~ \bm h,~ \bm R_{\mathrm{A}},~ \bm T_{\mathrm{A}},~ \bm R_{\mathrm{B}},~ \bm T_{\mathrm{B}} \nonumber
\end{equation}
 are treated as \textit{a priori} unknown.

\section{State-of-the-Art Calibration Scheme}
\label{Chap:nls}
%===============
% Section 3.
%===============
%
We now describe the \ac{SotA} reciprocity calibration scheme for dual-antenna repeaters presented in~\cite{Larsson2024}. 
The key idea is to estimate the ratio between the forward and reverse repeater gains, $\alpha$ and $\beta$, using bi-directional pilot transmissions between arrays $\mathrm{A}$ and $\mathrm{B}$ under two distinct repeater configurations.
Specifically, two sets of measurements are conducted within one channel coherence interval:
\begin{itemize}
    \item[(i)] bi-directional transmission with $\mathrm{R}$ operating in its nominal configuration, and
    \item[(ii)] bi-directional transmission with $\mathrm{R}$ configured to apply a $\pi$-phase shift to both the forward and reverse gains.
\end{itemize}
This procedure yields four channel measurements as follows:
\begin{subequations}
\begin{eqnarray}
    \bm X_{\mathrm{AB}}^0 \!\!\!&=&\!\!\! \bm R_{\mathrm{B}}\left(\bm G + \alpha \bm g \bm h^{\mathsf{T}}\right)\bm T_{\mathrm{A}} + \bm W_{\mathrm{B}}^0, \\
    \bm X_{\mathrm{BA}}^0 \!\!\!&=&\!\!\! \bm R_{\mathrm{A}}\left(\bm G^{\mathsf{T}} + \beta \bm h \bm g^{\mathsf{T}}\right)\bm T_{\mathrm{B}} + \bm W_{\mathrm{A}}^0, \\
    \bm X_{\mathrm{AB}}^1 \!\!\!&=&\!\!\! \bm R_{\mathrm{B}}\left(\bm G - \alpha \bm g \bm h^{\mathsf{T}}\right)\bm T_{\mathrm{A}} + \bm W_{\mathrm{B}}^1, \\
    \bm X_{\mathrm{BA}}^1 \!\!\!&=&\!\!\! \bm R_{\mathrm{A}}\left(\bm G^{\mathsf{T}} - \beta \bm h \bm g^{\mathsf{T}}\right)\bm T_{\mathrm{B}} + \bm W_{\mathrm{A}}^1,
\end{eqnarray}
\label{eq:measurements}%
\end{subequations}
where $\bm W_{\mathrm{B}}^0 \in\mathbb{C}^{M_\mathrm{B}\times M_\mathrm{A}}$, $\bm W_{\mathrm{A}}^0\in\mathbb{C}^{M_\mathrm{A}\times M_\mathrm{B}}$, $\bm W_{\mathrm{B}}^1\in\mathbb{C}^{M_\mathrm{B}\times M_\mathrm{A}}$, and $\bm W_{\mathrm{A}}^1\in\mathbb{C}^{M_\mathrm{A}\times M_\mathrm{B}}$ denote the measurement noise matrices.

The objective of calibration is to determine the ratio
\begin{equation}
\label{eq:gamma}
    \gamma \triangleq \frac{\beta}{\alpha},
\end{equation}
since knowledge of $\gamma$ allows the repeater to be configured such that $\alpha = \beta$, thereby rendering it reciprocal to the network.

To address this estimation, the model is re-parameterized as
\begin{subequations}
\begin{eqnarray}
    \label{eq:H}
    \bm H
    \!\!&\!\!\triangleq\!\!&\!\!
    \bm R_{\mathrm{B}} \bm G \bm T_{\mathrm{A}} \in\mathbb{C}^{M_\mathrm{B}\times M_\mathrm{A}}, \\
    \label{eq:A}
    \bm A
    \!\!&\!\!\triangleq\!\!&\!\!
    \bm T_{\mathrm{A}}^{-1}\bm R_{\mathrm{A}} \in\mathbb{C}^{M_\mathrm{A}\times M_\mathrm{A}}, \\
    \label{eq:B}
    \bm B
    \!\!&\!\!\triangleq\!\!&\!\!
    \bm T_{\mathrm{B}}\bm R_{\mathrm{B}}^{-1} \in\mathbb{C}^{M_\mathrm{B}\times M_\mathrm{B}}, \\
    \label{eq:Z}
    \bm Z
    \!\!&\!\!\triangleq\!\!&\!\!
    \alpha \bm R_{\mathrm{B}}\bm g \bm h^{\mathsf{T}}\bm T_{\mathrm{A}} \in\mathbb{C}^{M_\mathrm{B}\times M_\mathrm{A}},
\end{eqnarray}
\label{eq:pram}
\end{subequations}
which leads to the equivalent measurement equations:
\begin{subequations}
\begin{eqnarray}
    \bm X_{\mathrm{AB}}^0
    \!\!&\!\!=\!\!&\!\!
    \bm H + \bm Z + \bm W_{\mathrm{B}}^0, \\
    \bm X_{\mathrm{BA}}^0
    \!\!&\!\!=\!\!&\!\!
    \bm A(\bm H + \gamma \bm Z)^{\mathsf{T}}\bm B + \bm W_{\mathrm{A}}^0, \\
    \bm X_{\mathrm{AB}}^1
    \!\!&\!\!=\!\!&\!\!
    \bm H - \bm Z + \bm W_{\mathrm{B}}^1, \\
    \bm X_{\mathrm{BA}}^1
    \!\!&\!\!=\!\!&\!\!
    \bm A(\bm H - \gamma \bm Z)^{\mathsf{T}}\bm B + \bm W_{\mathrm{A}}^1.
\end{eqnarray}
\end{subequations}

After straightforward calculations, the preprocessed measurements for the calibration task are obtained as follows:
\begin{subequations}
\begin{eqnarray}
\label{eq:R1}
\bm R_1
\!\!&\!\! \triangleq \!\!&\!\!
\frac{1}{2}\left(\bm X_{\mathrm{AB}}^0 + \bm X_{\mathrm{AB}}^1\right)
=
\bm H + \bm W_1, \\
\label{eq:R2}
\bm R_2 
\!\!&\!\! \triangleq \!\!&\!\!
\frac{1}{2}\left(\bm X_{\mathrm{AB}}^0 - \bm X_{\mathrm{AB}}^1\right)
=
\bm Z + \bm W_2, \\
\label{eq:R3}
\bm R_3
\!\!&\!\! \triangleq \!\!&\!\!
\frac{1}{2}\left(\bm X_{\mathrm{BA}}^0 + \bm X_{\mathrm{BA}}^1\right)
=
\bm A \bm H^{\mathsf{T}} \bm B + \bm W_3, \\
\label{eq:R4}
\bm R_4
\!\!&\!\! \triangleq \!\!&\!\!
\frac{1}{2}\left(\bm X_{\mathrm{BA}}^0 - \bm X_{\mathrm{BA}}^1\right)
=
\gamma \bm A \bm Z^{\mathsf{T}} \bm B + \bm W_4,
\end{eqnarray}
\label{eq:observation_reformulated}%
\end{subequations}
where
\begin{eqnarray}
    \bm{W}_1
    \!\!&\!\!\triangleq\!\!&\!\!
    \frac{1}{2}\left(\bm{W}_\mathrm{B}^0 + \bm{W}_\mathrm{B}^1\right),\quad
    \bm{W}_2
    \triangleq
    \frac{1}{2}\left(\bm{W}_\mathrm{B}^0 - \bm{W}_\mathrm{B}^1\right), \nonumber \\
    \bm{W}_3
    \!\!&\!\!\triangleq\!\!&\!\!
    \frac{1}{2}\left(\bm{W}_\mathrm{A}^0 + \bm{W}_\mathrm{A}^1\right),\quad
    \bm{W}_4
    \triangleq
    \frac{1}{2}\left(\bm{W}_\mathrm{A}^0 - \bm{W}_\mathrm{A}^1\right). \nonumber
\end{eqnarray}

Based on \eqref{eq:observation_reformulated}, the parameters defined in \eqref{eq:pram} are jointly estimated by minimizing the following \ac{NLS} criterion~\cite{Larsson2024}
\begin{eqnarray}
\label{eq:snls}
\!\!\!\!&\!\!\!\!&\!\!\!\!
    f = 
    \|\bm{R}_1 - \bm{H}\|^2_{\mathrm{F}} +
    \|\bm{R}_2 - \bm{Z}\|^2_{\mathrm{F}} \notag \\
\!\!\!\!&\!\!\!\!&\!\!\!\!
    \qquad +
    \|\bm{R}_3 - \bm{A}\bm{H}^\mathsf{T}\bm{B}\|^2_{\mathrm{F}} +
    \|\bm{R}_4 - \gamma\bm{A}\bm{Z}^\mathsf{T}\bm{B}\|^2_{\mathrm{F}}.
\end{eqnarray}
% \begin{eqnarray}
% \label{eq:nls}
%     \!\!\!\!&\!\!\!\!&\!\!\!\!
%     f=
%     \|\bm X_{\mathrm{AB}}^0 - (\bm H + \bm Z)\|^2_{\mathrm{F}}+
%     \|\bm X_{\mathrm{BA}}^0 - \bm A(\bm H + \gamma \bm Z)^{\mathsf{T}}\bm B\|^2_{\mathrm{F}} \nonumber \\
%     \!\!\!\!&\!\!\!\!&\!\!\!\!
%     +
%     \|\bm X_{\mathrm{AB}}^1 - (\bm H - \bm Z)\|^2_{\mathrm{F}}+
%     \|\bm X_{\mathrm{BA}}^1 - \bm A(\bm H - \gamma \bm Z)^{\mathsf{T}}\bm B\|^2_{\mathrm{F}}.
% \end{eqnarray}
%
The resulting solution provides a statistically consistent estimate of $\gamma$, enabling reciprocity calibration of the repeater.

\subsection{\ac{NLS} Algorithm}

% \egl{could this formulation with the pre-processed measurements be introduced upfront, to make this section a bit shorter?}
% \tktk{It’s been taken care of.}
Based on \eqref{eq:snls}, the following stepwise algorithm is employed for practical parameter estimation.

\subsubsection{Estimation of $\bm H$}

From the averaged measurements in \eqref{eq:R1},
% 
% \begin{equation}
% \label{R1}
%     \bm R_1 = \tfrac{1}{2}\left(\bm X_{\mathrm{AB}}^0 + \bm X_{\mathrm{AB}}^1\right),
% \end{equation}
% 
the minimizer \ac{w.r.t.} $\bm H$ is  
\begin{equation}
\label{eq:Hhat}
    \hat{\bm H} = \bm R_1.
\end{equation}

\subsubsection{Estimation of $\bm Z$}  
 
Using \eqref{eq:R2},
% 
% \begin{equation}
% \label{R2}
%     \bm R_2 = \tfrac{1}{2}\left(\bm X_{\mathrm{AB}}^0 - \bm X_{\mathrm{AB}}^1\right),
% \end{equation}
% 
the estimate of $\bm Z$ is obtained as
\begin{equation}
\label{eq:Zhat}
    \hat{\bm Z} = \mathcal{S}\{\bm R_2\},
\end{equation}
where $\mathcal{S}\{\cdot\}$ denotes the best rank-one approximation of a matrix, obtained from its dominant singular vector pair.

\subsubsection{Estimation of $\bm A$ and $\bm B$}
 
From \eqref{eq:R3},
% 
% \begin{equation}
% \label{R3}
%     \bm R_3 = \tfrac{1}{2}\left(\bm X_{\mathrm{BA}}^0 + \bm X_{\mathrm{BA}}^1\right),
% \end{equation}
% 
the matrices $\bm A$ and $\bm B$ are estimated by alternating projections.  
Starting from the initial values $\hat{\bm A} = \bm{I}_{M_\mathrm{A}}$ and $\hat{\bm B} = \bm{I}_{M_\mathrm{B}}$, each diagonal element is iteratively updated as
\begin{subequations}
\begin{eqnarray}
    \hat{\bm A}(i,i) \!\!\!&=&\!\!\! 
    \frac{ (\hat{\bm B}\hat{\bm H})_i^{\mathsf{H}} (\bm R_3^{\mathsf{T}})_i }
         { \| (\hat{\bm B}\hat{\bm H})_i \|_2^2 },\ \forall i\in\mathcal{A}, \\
    \hat{\bm B}(j,j) \!\!\!&=&\!\!\! 
    \frac{ (\hat{\bm A}\hat{\bm H}^{\mathsf{T}})_j^{\mathsf{H}} (\bm R_3)_j }
         { \| (\hat{\bm A}\hat{\bm H}^{\mathsf{T}})_j \|_2^2 },\ \forall j\in\mathcal{B},
\end{eqnarray}
\end{subequations}
where $\mathcal{A}\triangleq\left\{1,2,\ldots,M_\mathrm{A}\right\}$ and $\mathcal{B}\triangleq\left\{1,2,\ldots,M_\mathrm{B}\right\}$.
Normalization is applied at each iteration to prevent numerical instability~\cite{Larsson2024}.
 
\subsubsection{Estimation of $\gamma$}

Using \eqref{eq:R4},
\begin{equation}
\label{R4}
    \bm R_4 = \tfrac{1}{2}\left(\bm X_{\mathrm{BA}}^0 - \bm X_{\mathrm{BA}}^1\right),
\end{equation}
the gain ratio is estimated as
\begin{equation}
    \hat{\gamma} = 
    \frac{ \mathrm{tr}\!\left\{ (\hat{\bm A}\hat{\bm Z}^{\mathsf{T}} \hat{\bm B})^{\mathsf{H}} \bm R_4 \right\} }
         { \| \hat{\bm A}\hat{\bm Z}^{\mathsf{T}} \hat{\bm B} \|_\mathrm{F}^2 }.
\end{equation}

This \ac{NLS} algorithm yields an approximate solution to the original criterion given in \eqref{eq:snls}.
The procedure described above is summarized in Algorithm~\ref{alg:nls}.
% 
%Additional performance improvements can be obtained via alternating optimization, in which the updates of $\bm H$, $\bm A$, $\bm B$, $\bm Z$, and $\gamma$ are iteratively refined until convergence.
% 

%%%%%%%%%% ALG. 1 %%%%%%%%%%
%%%%%%%%%%%%%%%%%%%%%%%%%%%%%%%%%%%%%%%%%%%%
% Basic NLS
%%%%%%%%%%%%%%%%%%%%%%%%%%%%%%%%%%%%%%%%%%%%
\begin{algorithm}[t]
\caption{:\ NLS-based calibration algorithm}
\label{alg:nls}
\begin{algorithmic}[1]
\Require$\bm R_1,\bm R_2,\bm R_3,\bm R_4$
\Ensure $\hat{\bm H},\hat{\bm Z},\hat{\bm A},\hat{\bm B},\hat{\gamma}$

\vspace{0.5mm}
\LineComment{// 1) Estimation of $\bm H$} 
\State $\hat{\bm H} \gets \bm{R}_1$

\vspace{0.5mm}
\LineComment{// 2) Estimation of $\bm Z$} 
\State $\hat{\bm Z} \gets \mathcal{S}\{\bm{R}_2\}$ 

\vspace{0.5mm}
\LineComment{// 3) Estimation of $\bm A$ and $\bm B$} 
\State $\hat{\bm A} \gets \bm I_{M_{\mathrm{A}}},\ \hat{\bm B} \gets \bm I_{M_{\mathrm{B}}}$ \Comment{Initialization}

\Repeat

\State $\forall i\in\mathcal{A}: \hat{\bm A}(i,i)\gets \dfrac{(\hat{\bm B} \hat{\bm H})_i^\mathsf{H} (\bm R_3^\mathsf{T})_i}{\|(\hat{\bm B} \hat{\bm H})_i\|_2^2}$

\State $\forall j\in\mathcal{B}: \hat{\bm B}(j,j)\gets \dfrac{( \hat{\bm A} \hat{\bm H}^\mathsf{T} )_j^\mathsf{H} (\bm R_3)_j}{\|(\hat{\bm A} \hat{\bm H}^\mathsf{T})_j\|_2^2}$
% 
%   \For{each diagonal index $i$}
%     \State $\hat{\bm A}(i,i)\gets \dfrac{(\hat{\bm B} \hat{\bm H})_i^\mathsf{H} (\bm R_3^\mathsf{T})_i}{\|(\hat{\bm B} \hat{\bm H})_i\|_2^2}$
%   \EndFor
%   \For{each diagonal index $i$}
%     \State $\hat{\bm B}(i,i)\gets \dfrac{( \hat{\bm A} \hat{\bm H}^\mathsf{T} )_i^\mathsf{H} (\bm R_3)_i}{\|(\hat{\bm A} \hat{\bm H}^\mathsf{T})_i\|_2^2}$
%   \EndFor

\State $\hat{\bm A} \gets \hat{\bm A} \cdot \|\hat{\bm B}\|_\mathrm{F}$
\State $\hat{\bm B} \gets \hat{\bm B} / \|\hat{\bm B}\|_\mathrm{F}$
%\Comment{Normalization of $\hat{\bm B}$}

%\State Normalize $\hat{\bm A}$ and $\hat{\bm B}$

\Until{Num. of iterations reaches $N_{\mathrm{Iter}}$}

\vspace{0.5mm}
\LineComment{// 4) Estimation of $\gamma$}
\State $\hat\gamma \gets \dfrac{\operatorname{tr}\{ (\hat {\bm A} \hat {\bm Z}^{\mathsf{T}} \hat {\bm B})^\mathsf{H} \bm R_4 \}}{\| \hat {\bm A} \hat {\bm Z}^{\mathsf{T}} \hat {\bm B} \|_\mathrm{F}^2}$ 
\end{algorithmic}
\end{algorithm}

%%%%%%%%%%%%%%%%%%%%%%%%%%%%
%%%%%%%%%% ALG. 2 %%%%%%%%%%
%%%%%%%%%%%%%%%%%%%%%%%%%%%%%%%%%%%%%%%%%%%%
% Alternating-opt. NLS
%%%%%%%%%%%%%%%%%%%%%%%%%%%%%%%%%%%%%%%%%%%%
\begin{algorithm}[t!]
\caption{:\ Alternating-opt. NLS-based algorithm}
\label{alg:aonls}
\begin{algorithmic}[1]
\Require $\bm R_1,\bm R_2,\bm R_3,\bm R_4$
\Ensure $\hat{\bm H},\hat{\bm Z},\hat{\bm A},\hat{\bm B},\hat{\gamma}$

\vspace{0.5mm}
\noindent {/*\raisebox{3pt}[-5pt][0pt]{ ---------------- }} Initialization {\raisebox{3pt}[-5pt][0pt]{ ----------------- }}*/
\State Obtain initial estimates $\hat{\bm H},\hat{\bm Z},\hat{\bm A},\hat{\bm B},\hat{\gamma}$ using Alg. 1.

\vspace{0.5mm}
\noindent {/*\raisebox{3pt}[-5pt][0pt]{ --------- }} Alternating optimization {\raisebox{3pt}[-5pt][0pt]{ --------- }}*/
\Repeat
\LineComment{// 1) Update of $\bm H$} 
\State $\bm r \gets \left[\mathrm{vec}(\bm R_1);\ \mathrm{vec}(\bm R_3^\mathsf{T})\right]$, $\bm{\varXi} \gets \left[\bm I_{M_{\mathrm{A}}}\otimes \bm I_{M_{\mathrm{B}}};\ \hat{\bm A}\otimes \hat{\bm B}\right]$
\State $\mathrm{vec}(\hat{\bm H}) \gets (\bm{\varXi}^\mathsf{H}\bm{\varXi})^{-1}\bm{\varXi}^\mathsf{H} r$

\vspace{0.5mm}
\LineComment{// 2) Update of $\bm A$ and $\bm B$} 
\Repeat
  \State $\forall i\in\mathcal{A}: \hat{\bm A}(i,i)\gets
  \dfrac{
     (\hat{\bm B}\hat{\bm H})_i^\mathsf{H} (\bm R_3^\mathsf{T})_i
     + \hat{\gamma}^{*} (\hat{\bm B}\hat{\bm Z})_i^\mathsf{H} (\bm R_4^\mathsf{T})_i
  }{
     \|(\hat{\bm B}\hat{\bm H})_i\|_2^2 
     + |\hat{\gamma}|^2\|(\hat{\bm B}\hat{\bm Z})_i\|_2^2
  }$
  \State $\forall j\in\mathcal{B}: \hat{\bm B}(j,j)\gets
  \dfrac{
     (\hat{\bm A}\hat{\bm H}^\mathsf{T})_j^\mathsf{H} (\bm R_3)_j
     + \hat{\gamma}^{*} (\hat{\bm A}\hat{\bm Z}^\mathsf{T})_j^\mathsf{H} (\bm R_4)_j
  }{
     \|(\hat{\bm A}\hat{\bm H}^\mathsf{T})_j\|_2^2 
     + |\hat{\gamma}|^2\|(\hat{\bm A}\hat{\bm Z}^\mathsf{T})_j\|_2^2
  }$

\State $\hat{\bm A} \gets \hat{\bm A} \cdot \|\hat{\bm B}\|_\mathrm{F}$
\State $\hat{\bm B} \gets \hat{\bm B} / \|\hat{\bm B}\|_\mathrm{F}$
%\Comment{Normalize $\hat{\bm B}$}

\Until{Num. of iterations reaches $N_{\mathrm{Iter}}$.}

\vspace{0.5mm}
\LineComment{// 3) Update of $\bm Z$}
\State $\hat{\bm Z} \gets 
\mathcal{S}\!\left\{
  \bm R_2 +
  \dfrac{\hat{\gamma}^{*}\hat{\bm B}^{-1}\bm R_4^\mathsf{T}\hat{\bm A}^{-1}}{1+|\hat{\gamma}|^2}\,
\right\}$

\vspace{0.5mm}
\LineComment{// 4) Update of $\gamma$}
\State $\hat\gamma \gets 
\dfrac{
  \mathrm{tr}\!\left\{
    (\hat{\bm A}\hat{\bm Z}\hat{\bm B})^\mathsf{H}\bm R_4
  \right\}
}{
  \|\hat{\bm A}\hat{\bm Z}\hat{\bm B}\|_\mathrm{F}^2
}$

\Until{the objective function $f$ no longer decreases.}

\end{algorithmic}
\end{algorithm}

%%%%%%%%%%%%%%%%%%%%%%%%%%%%

\subsection{Alternating-Optimization NLS Algorithm}
\label{opt}

From \eqref{eq:observation_reformulated}, the information of $\bm{H}$ is contained not only in $\bm{R}_1$ but also in $\bm{R}_3$, and the information of $\bm{Z}$ is contained not only in $\bm{R}_2$ but also in $\bm{R}_4$.
Therefore, further performance improvement can be achieved by performing alternating optimization over all unknown variables, \textit{i.e.}, $\bm{H}$, $\bm{Z}$, $\bm{A}$, $\bm{B}$, and $\bm{\gamma}$.
For details, we refer the reader to~\cite{Larsson2024}, and for completeness, only the pseudocode is provided in Algorithm \ref{alg:aonls}.
However, this approach incurs a higher computational complexity.

\section{Proposed MMSE Algorithm}
\label{chap:prop}
% 
%Root ../main.tex
% 
%===============
% Section 5.
%===============

The \ac{SotA} methods are designed based on the \ac{NLS} criterion and therefore cannot exploit available statistical information, such as the \ac{SNR} and the prior distributions of the unknown variables.
Consequently, significant performance improvements can be expected by introducing appropriate mathematical models as prior information for the unknown variables and designing the algorithm according to the \ac{MMSE} criterion.
Motivated by this observation, this section develops a Bayesian inference algorithm that explicitly incorporates prior information and performs \ac{MMSE} estimation.

%\hl{Fundamentally, the problem is Bayes-optimal setting in nature when viewed from the perspective of estimating each individual variable.} \egl{not clear}\tktk{Yes, in the context of the present problem, this statement is not correct. Since not all prior distributions are known in this problem, it cannot be regarded as a Bayes-optimal setting.}
% To distinguish it from the well-known \ac{LMMSE} estimator, we refer to this proposed algorithm as \ac{NMMSE}.
% \egl{can't we just call it MMSE? No need to talk about LMMSE.}\tktk{We can!}
% \egl{okay, great, let's go for that}

\subsection{Prior Statistical Modeling of the Unknown Variables}

In wireless communication systems, the statistical characteristics and mathematical models of unknown variables can often be inferred from prior observations or estimations, particularly when these variables arise from hardware-induced effects or long-term statistical behaviors.
In this subsection, we introduce the statistical quantities and mathematical models explicitly employed in the proposed method.

\subsubsection{Complex-Valued Reciprocity Coefficients}

One of the well-established probabilistic models widely adopted to characterize hardware impairments in \ac{MIMO} arrays is the multiplicative stochastic impairment model~\cite{Gustavsson2014,Athley2015}.
In this model, the residual impairments after compensation are represented by multiplicative amplitude and phase errors.
Under this assumption, the complex-valued reciprocity coefficients introduced in \eqref{eq:rc} are modeled as
\begin{subequations}
\begin{eqnarray}
    t_{\mathrm{A},i}
    \!\!&\!\!=\!\!&\!\! \left(1+\epsilon_{\mathrm{TA},i}\right)e^{\mathrm{j}\theta_{\mathrm{TA},i}},\quad i\in\mathcal{A}, \\
    r_{\mathrm{A},i}
    \!\!&\!\!=\!\!&\!\! \left(1+\epsilon_{\mathrm{RA},i}\right)e^{\mathrm{j}\theta_{\mathrm{RA},i}},\quad i\in\mathcal{A}, \\
    t_{\mathrm{B},j}
    \!\!&\!\!=\!\!&\!\! \left(1+\epsilon_{\mathrm{TB},j}\right)e^{\mathrm{j}\theta_{\mathrm{TB},j}},\quad j\in\mathcal{B}, \\
    r_{\mathrm{B},j}
    \!\!&\!\!=\!\!&\!\! \left(1+\epsilon_{\mathrm{RB},j}\right)e^{\mathrm{j}\theta_{\mathrm{RB},j}},\quad j\in\mathcal{B}, 
\end{eqnarray}
\label{eq:rc_model}%
\end{subequations}
where $\epsilon_{\mathrm{TA},i}$, $\epsilon_{\mathrm{RA},i}$, $\epsilon_{\mathrm{TB},j}$, and $\epsilon_{\mathrm{RB},j}$ denote the amplitude errors, while $\theta_{\mathrm{TA},i}$, $\theta_{\mathrm{RA},i}$, $\theta_{\mathrm{TB},j}$, and $\theta_{\mathrm{RB},j}$ represent the phase errors, respectively.
%follow a real-valued Gaussian distribution $\mathcal{N}(0,\sigma_\epsilon^2)$, and  follow $\mathcal{N}(0,\sigma_\theta^2)$.
%
As discussed in Section II, the dominant sources of the complex-valued reciprocity coefficients at individual antenna elements are phase drift and phase errors induced by different oscillators employed at $\mathrm{A}$ and $\mathrm{B}$.
Therefore, when modeling the residual amplitude errors after compensation as real-valued Gaussian random variables following $\mathcal{N}(0,\sigma_\epsilon^2)$~\cite{Gustavsson2014}, we assume that their variance is sufficiently small, \textit{i.e.}, $\sigma_\epsilon\ll 1$.
% and $\sigma_\epsilon^2\approx 0$.

\subsubsection{Measurement Noise}

The observation noise may include, in addition to Gaussian noise originating from thermal noise,   spatially colored interference.
Moreover, depending on the sampling scheme, temporal correlations may also arise.
Therefore, in this paper, in order to appropriately account for such spatio-temporal correlations, the observation noise matrix in \eqref{eq:measurements} is modeled as a circularly symmetric multivariate complex Gaussian random matrix.
Assuming that the spatial (column-wise) and temporal (row-wise) correlations are separable and admit a Kronecker representation, the \acp{PDF} of the observation noise matrices in \eqref{eq:measurements} are expressed as
\begin{subequations}
\begin{eqnarray}
\!\!\!\!&\!\!\!\!\!\!\!\!&\!\!\!\!
p_{\bm{\mathsf{W}}_\mathrm{A}^\mathrm{x}}\left(\bm{W}_\mathrm{A}^\mathrm{x}\right)
=
\frac{\exp\left[-
\mathrm{tr}\left(
\bar{\bm{\varOmega}}_\mathrm{A}^{-1}
\bm{W}_\mathrm{A}^\mathrm{x}
\bar{\bm{\varPsi}}_\mathrm{A}^{-1}
\left(\bm{W}_\mathrm{A}^\mathrm{x}\right)^\mathrm{H}
\right)\right]}{\pi^{M_\mathrm{A}M_\mathrm{B}}\mathrm{det}\left(\bar{\bm{\varOmega}}_\mathrm{A}\right)^{M_\mathrm{B}}\mathrm{det}\left(\bar{\bm{\varPsi}}_\mathrm{A}\right)^{M_\mathrm{A}}},
\\
\!\!\!\!&\!\!\!\!\!\!\!\!&\!\!\!\!
p_{\bm{\mathsf{W}}_\mathrm{B}^\mathrm{x}}\left(\bm{W}_\mathrm{B}^\mathrm{x}\right)
=
\frac{\exp\left[-
\mathrm{tr}\left(
\bar{\bm{\varOmega}}_\mathrm{B}^{-1}
\bm{W}_\mathrm{B}^\mathrm{x}
\bar{\bm{\varPsi}}_\mathrm{B}^{-1}
\left(\bm{W}_\mathrm{B}^\mathrm{x}\right)^\mathrm{H}
\right)\right]}{\pi^{M_\mathrm{B}M_\mathrm{A}}\mathrm{det}\left(\bar{\bm{\varOmega}}_\mathrm{B}\right)^{M_\mathrm{A}}\mathrm{det}\left(\bar{\bm{\varPsi}}_\mathrm{B}\right)^{M_\mathrm{B}}},
\end{eqnarray}
\label{eq:covariance}%
\end{subequations}
where $\mathrm{x}\in\left\{0,1\right\}$.
The spatial correlation matrices at $\mathrm{A}$ and $\mathrm{B}$ are denoted by $\bar{\bm{\varOmega}}_\mathrm{A}\in\mathbb{C}^{M_\mathrm{A}\times M_\mathrm{A}}$ and $\bar{\bm{\varOmega}}_\mathrm{B}\in\mathbb{C}^{M_\mathrm{B}\times M_\mathrm{B}}$, respectively, while the temporal correlation matrices are denoted by $\bar{\bm{\varPsi}}_\mathrm{A}\in\mathbb{C}^{M_\mathrm{B}\times M_\mathrm{B}}$ and $\bar{\bm{\varPsi}}_\mathrm{B}\in\mathbb{C}^{M_\mathrm{A}\times M_\mathrm{A}}$.
For notational simplicity, we further define $\bm{\varOmega}_\mathrm{A} \triangleq \bar{\bm{\varOmega}}_\mathrm{A}/2$, $\bm{\varOmega}_\mathrm{B} \triangleq \bar{\bm{\varOmega}}_\mathrm{B}/2$, $\bm{\varSigma}_\mathrm{A} \triangleq \bar{\bm{\varSigma}}_\mathrm{A}/2$, and $\bm{\varSigma}_\mathrm{B} \triangleq \bar{\bm{\varSigma}}_\mathrm{B}/2$ to denote the covariance matrices of the pre-processed measurement noise in \eqref{eq:observation_reformulated}.

In what follows, these covariance matrices are assumed to be known (or can be accurately estimated \textit{a priori}) and are used in the proposed method.
However, if such information is not fully available, the covariance matrices can be constructed using only the available information, and the method can be executed accordingly.
Note that the proposed method does not require complete knowledge of the covariance matrices.
Even when the full covariance matrices are available, the proposed method can be implemented using only their diagonal elements (\textit{i.e.}, the variance matrices) to reduce the computational complexity.
The specific implementation of the proposed method is left to the system designer, depending on practical constraints.

\subsection{Overview}

%%%%%%%%%%%%%%%%%%%%%%%%%%%%%%%%%%%%%%
% FIG: FG
%%%%%%%%%%%%%%%%%%%%%%%%%%%%%%%%%%%%%%
\begin{figure}[t]
\centering
\includegraphics[width=0.9\columnwidth,keepaspectratio=true]{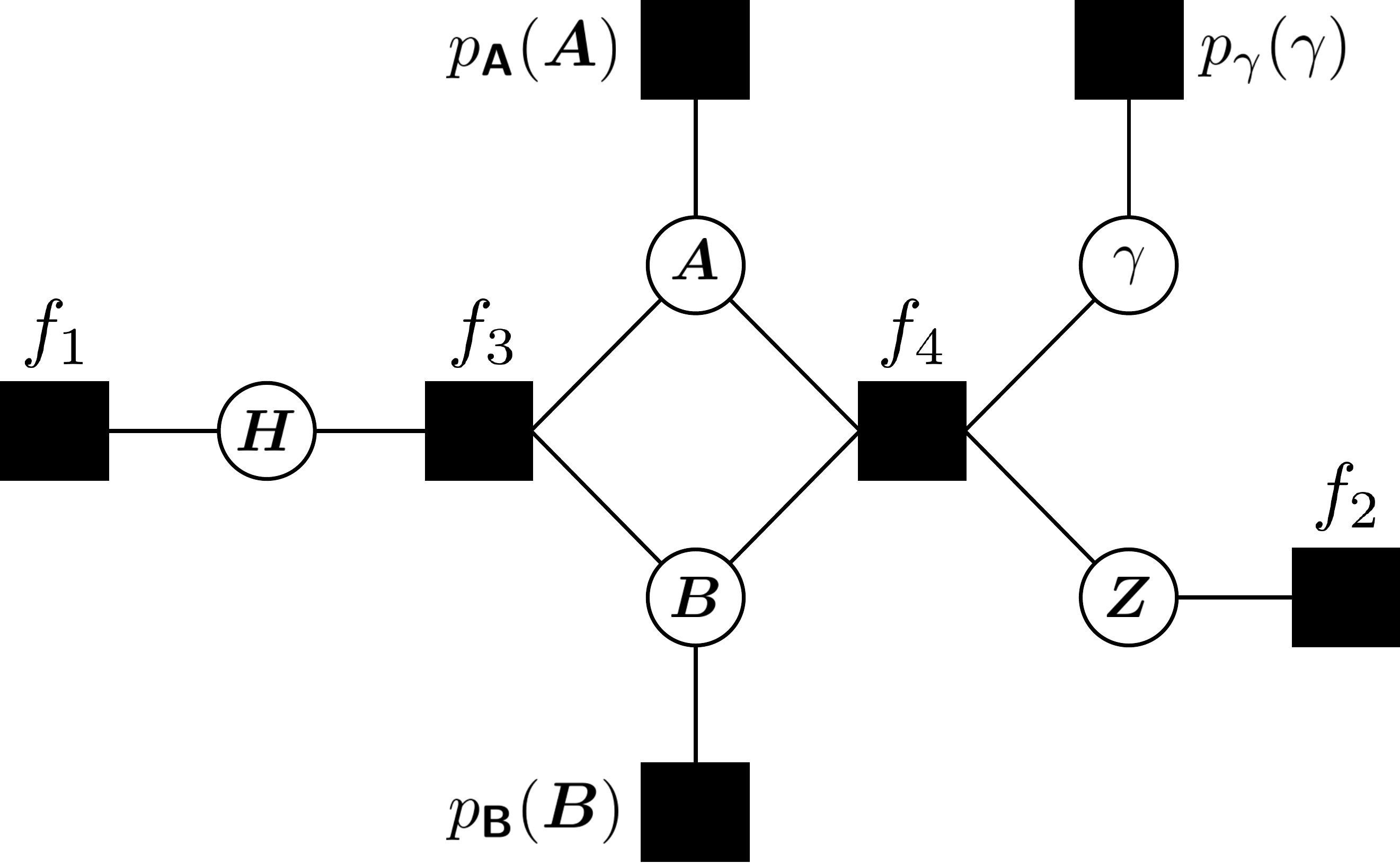}
\caption{Illustration of the \ac{FG} designed for the proposed \ac{MMSE} algorithm.}
\vspace{-4mm}
\label{fig:FG}
\end{figure}
%%%%%%%%%%%%%%%%%%%%%%%%%%%%%%%%%%%%%%

To provide an overview of the proposed algorithm, we present the \ac{FG} corresponding to the reciprocity calibration task in Fig.~\ref{fig:FG}.
In this graph, the square nodes ($\blacksquare$) represent \acp{FN} associated with the observations, where $f_1$, $f_2$, $f_3$, and $f_4$ correspond to \eqref{eq:R1}, \eqref{eq:R2}, \eqref{eq:R3}, and \eqref{eq:R4}, respectively.
The circular nodes ($\bigcirc$) represent \acp{VN} corresponding to the unknown variables.

When loops exist in the \ac{FG}, the estimation procedure plays a crucial role in determining the accuracy and stability of the algorithm.
The proposed algorithm follows an estimation procedure similar to that of the basic \ac{NLS} algorithm, in which the estimates of the unknown variables and their associated \acp{MSE}, if available, are sequentially propagated over the \ac{FG}.
Finally, by integrating the messages propagated over the \ac{FG} from all observations, $\gamma$ is estimated in the \ac{MMSE} sense.

The key factors that enable performance improvements over the \ac{NLS} algorithm are threefold: i) the explicit incorporation of the statistical properties of the measurement noise within the Bayesian framework; ii) the use of the prior statistical modeling introduced in \eqref{eq:rc_model} for the reciprocity coefficients (\textit{i.e.}, $\bm{A}$ and $\bm{B}$) as prior probability distributions; and iii) the estimation of the prior distribution of $\gamma$ via the \acf{MoM}, followed by \ac{MMSE} estimation based on the resulting prior.
In the following, we describe the details of the proposed algorithm.

\subsection{MMSE Algorithm}
\label{NMMSE}

\subsubsection{Estimation of $\bm H$}

From \eqref{eq:H}, since $\bm{H}$ includes the direct channel $\bm{G}$ from $\mathrm{A}$ to $\mathrm{B}$, it is practically difficult to assume a statistical prior model for $\bm{H}$.
Therefore, as in the \ac{NLS} algorithm, we use \eqref{eq:Hhat} as the point estimate of $\bm{H}$ and propagate the associated covariance matrices representing the estimation error information, \textit{i.e.}, $\bm{\varOmega}_\mathrm{B}$ and $\bm{\varPsi}_\mathrm{B}$.

\subsubsection{Estimation of $\bm Z$}

Similarly, from \eqref{eq:Z}, since $\bm{Z}$ includes the indirect channels $\bm{g}$ and $\bm{h}$ via $\mathrm{R}$, it is difficult to assume a statistical prior model for $\bm{Z}$.
Therefore, as in the \ac{NLS} algorithm, we propagate \eqref{eq:Zhat} as the point estimate of $\bm{Z}$.
Since the estimation error of the rank-one approximation cannot be analytically evaluated, only the estimate is propagated.

\subsubsection{Estimation of $\bm A$ and $\bm B$}

%%%%%%%%%%%%%%%%%%%%%%%%%%%%%%%%%%%%%%
% FIG: f3
%%%%%%%%%%%%%%%%%%%%%%%%%%%%%%%%%%%%%%
\begin{figure}[t]
\centering
\includegraphics[width=0.9\columnwidth,keepaspectratio=true]{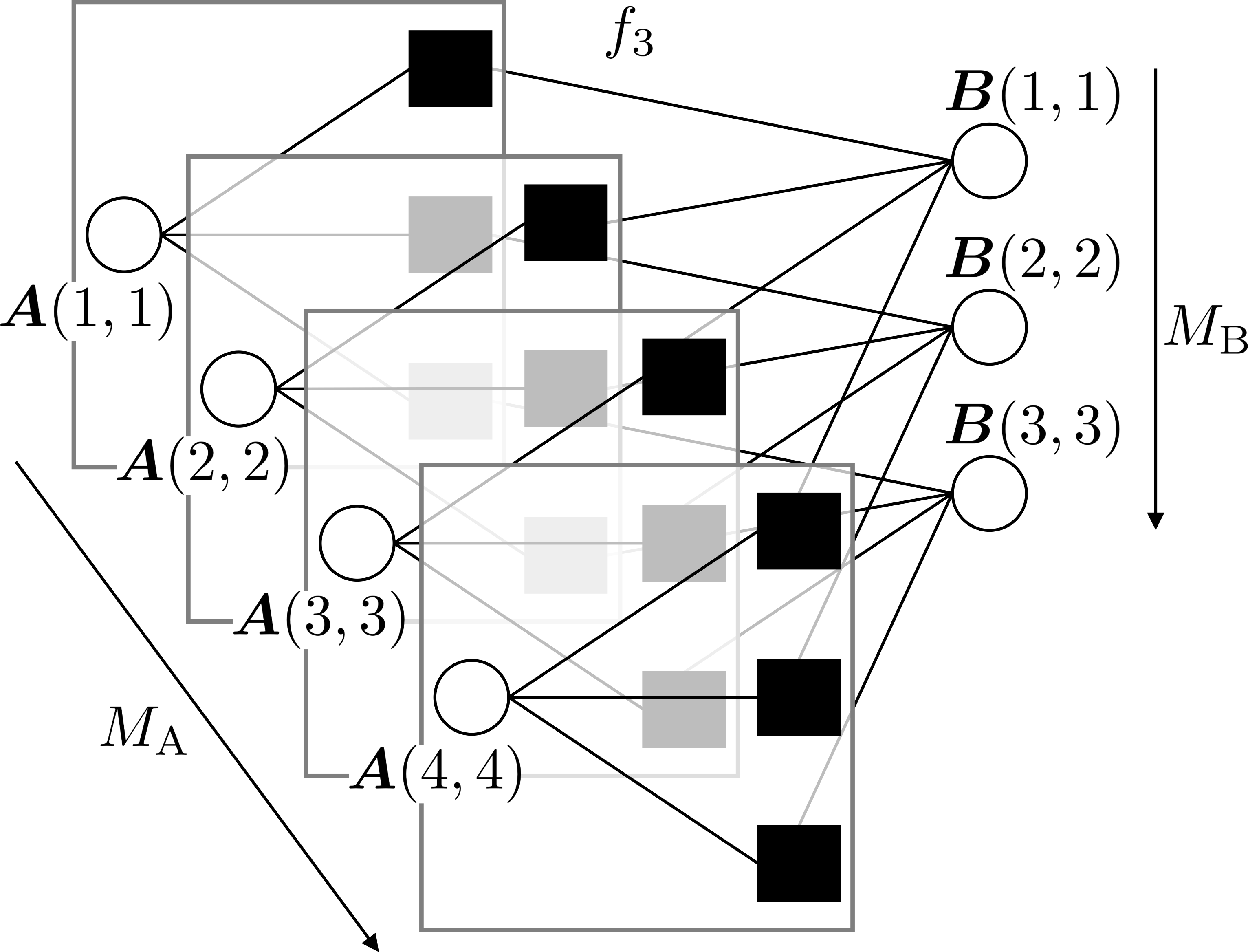}
\caption{Illustration of the tripartite \ac{FG} for bilinear inference of $\bm{A}$ and $\bm{B}$ with $(M_\mathrm{A},M_\mathrm{B}) = (4,3)$. }
\vspace{-3mm}
\label{fig:f3}
\end{figure}
%%%%%%%%%%%%%%%%%%%%%%%%%%%%%%%%%%%%%%

Based on \eqref{eq:R3}, we perform \ac{MMSE} estimation of $\bm{A}$ and $\bm{B}$.
To exploit the point estimate of $\bm{H}$ and its associated error covariance matrices, together with the prior statistical model introduced in \eqref{eq:rc_model}, we adopt an iterative alternating estimation approach based on \ac{PDA} within the Bayesian bilinear inference framework~\cite{Ito2025bipda,Takahashi2024JCTDD,Rayan2025JCDRE,Ito2024,Iimori2023JACE}.
To facilitate the understanding of the message update rules, Fig.~\ref{fig:f3} illustrates in detail the \ac{FG} composed of the \acp{FN} $f_3$ corresponding to \eqref{eq:R3} and the \acp{VN} associated with the unknown variables in $\bm{A}$ and $\bm{B}$.
As shown in the figure, the \ac{FG} centered at $f_3$ is relatively \textit{sparse}, and since no short loops 
%of length four 
exist, high-accuracy convergence can be expected with a relatively small number of iterations.

Denoting the estimates of $\bm{A}$ and $\bm{B}$ by $\hat{\bm{A}}$ and $\hat{\bm{B}}$, respectively, their \acp{MSE} for all $i\in\mathcal{A}$ and $j\in\mathcal{B}$ are given by
\begin{subequations}
\begin{eqnarray}
    \!\!\!\!\!\!\!\!\!\!\!\!
    \hat{v}_{\mathrm{A},i}
    \!\!&\!\!\triangleq\!\!&\!\!
    \mathbb{E}\left[|\tilde{\bm{A}}(i,i)|^2\right],\ \tilde{\bm{A}}(i,i)\triangleq \bm{A}(i,i)-\hat{\bm{A}}(i,i),\\
    \!\!\!\!\!\!\!\!\!\!\!\!
    \hat{v}_{\mathrm{B},j} 
    \!\!&\!\!\triangleq\!\!&\!\!
    \mathbb{E}\left[|\tilde{\bm{B}}(j,j)|^2\right],\ \tilde{\bm{B}}(j,j)\triangleq \bm{B}(j,j)-\hat{\bm{B}}(j,j),
\end{eqnarray}
\end{subequations}
where $\tilde{\bm{A}}$ and $\tilde{\bm{B}}$ represent the corresponding estimation error matrices.
At the first iteration, the estimates and their \acp{MSE} are appropriately initialized as $\hat{\bm A} = \bm{I}_{M_\mathrm{A}}$, $\hat{\bm B} = \bm{I}_{M_\mathrm{B}}$, and $\hat{v}_{\mathrm{A},i}=\hat{v}_{\mathrm{B},j} = 1$ for all $i\in\mathcal{A}$ and $j\in\mathcal{B}$.

First, to estimate the $(i,i)$-th entry of $\bm{A}$, we focus on the $i$-th row of the observation in \eqref{eq:R3}, which yields the following observation equation:
\begin{equation}
\label{eq:R3i}
    \left(\bm R_3^{\mathsf{T}}\right)_i = \left(\bm B \bm H\right)_i \bm A(i,i) + \left(\bm W_3^{\mathsf{T}}\right)_i \in \mathbb{C}^{M_{\mathsf{B}} \times 1}.
\end{equation}
Substituting $\bm{H} = \hat{\bm{H}} + \tilde{\bm{H}}$ and $\bm{B} = \hat{\bm{B}} + \tilde{\bm{B}}$ into \eqref{eq:R3i} yields
\begin{eqnarray}
\label{eq:R3i_v2}
\!\!\!\!&\!\!\!\!\!\!\!\!&\!\!\!\!
\left(\bm R_3^{\mathsf{T}}\right)_i
=
\left(\hat{\bm{B}}\hat{\bm{H}}\right)_i\bm{A}(i,i) \nonumber \\
\!\!\!\!&\!\!\!\!\!\!\!\!&\!\!\!\!
\qquad\qquad+
\left(\tilde{\bm{B}}\hat{\bm{H}}\right)_i\bm{A}(i,i)
+
\left(\hat{\bm{B}}\tilde{\bm{H}}\right)_i\bm{A}(i,i) \nonumber \\
\!\!\!\!&\!\!\!\!\!\!\!\!&\!\!\!\!
\qquad\qquad+
\left(\tilde{\bm{B}}\tilde{\bm{H}}\right)_i\bm{A}(i,i)
+
\left(\bm W_3^{\mathsf{T}}\right)_i.
\end{eqnarray}
In accordance with the \ac{CLT}, the terms in the second and third lines of \eqref{eq:R3i_v2} are collectively approximated by a multivariate complex Gaussian random vector.
This approximation is referred to as \ac{VGA}~\cite{Chockalingam2014}.
Accordingly, the conditional \ac{PDF} of \eqref{eq:R3i_v2}, given $\bm{A}(i,i)$, can be expressed as
\begin{eqnarray}
\label{eq:pdfA}
\!\!\!\!&\!\!\!\!\!\!\!\!&\!\!\!\!
p_{\left(\bm{\mathsf{R}}_3^{\mathsf{T}}\right)_i|\bm{\mathsf{A}}(i,i)}
\left(
\left(\bm{R}_3^{\mathsf{T}}\right)_i|\bm{A}(i,i)
\right) \nonumber \\
\!\!\!\!&\!\!\!\!\!\!\!\!&\!\!\!\!
\quad
\propto
\exp\Bigl[
-\left(
\left(\bm{R}_3^{\mathsf{T}}\right)_i-\left(\hat{\bm{B}}\hat{\bm{H}}\right)_i\bm{A}(i,i)
\right)^\mathsf{H}
% \mathrm{Diag}\left(
% \bm{v}_{\mathrm{A},i}
% \right)^{-1}
\bm{V}_{\mathrm{A},i}^{-1}
\nonumber \\
\!\!\!\!&\!\!\!\!\!\!\!\!&\!\!\!\!
\qquad\qquad\qquad
\times \left(
\left(\bm{R}_3^{\mathsf{T}}\right)_i-\left(\hat{\bm{B}}\hat{\bm{H}}\right)_i\bm{A}(i,i)
\right)
\Bigr],
\end{eqnarray}
where
\begin{eqnarray}
\label{eq:VAi}
\!\!\!\!&\!\!\!\!\!\!\!\!&\!\!\!\!
\bm{V}_{\mathrm{A},i}
\triangleq
\mathbb{E}\Bigl[
\left(
\left(\bm R_3^{\mathsf{T}}\right)_i-\left(\hat{\bm{B}}\hat{\bm{H}}\right)_i\bm{A}(i,i)
\right) \nonumber \\
\!\!\!\!&\!\!\!\!\!\!\!\!&\!\!\!\!
\qquad\qquad\qquad
\times
\left(
\left(\bm R_3^{\mathsf{T}}\right)_i-\left(\hat{\bm{B}}\hat{\bm{H}}\right)_i\bm{A}(i,i)
\right)^\mathsf{H}
\Bigr] \nonumber \\
\!\!\!\!&\!\!\!\!\!\!\!\!&\!\!\!\!
\quad=
\bm{\varPsi}_\mathrm{A}+\hat{\bm{B}}\bm{\varOmega}_\mathrm{B}\hat{\bm{B}}^\mathsf{H} +
\mathrm{diag}\left(v_{\mathrm{A},i1},\dots,v_{\mathrm{A},iM_\mathrm{B}}\right),
\end{eqnarray}
%$\bm v_{\mathrm{A},i} \triangleq \left[v_{\mathrm{A},i1},\dots,v_{\mathrm{A},iM_\mathrm{B}}\right]^{\mathsf{T}}\in\mathbb{R}^{M_\mathrm{B}\times 1}$ with
with
\begin{equation}
v_{\mathrm{A},ij} \triangleq 
\left(
|\hat{\bm H}(j, i)|^2 + \bm{\varOmega}_\mathrm{B}(j,j)
\right)
\hat{v}_{\mathrm{B},j}.
\end{equation}

% \begin{eqnarray}
% \label{eq:vAi}
% \!\!\!\!&\!\!\!\!\!\!\!\!&\!\!\!\!
% v_{\mathrm{A},ij}
% \triangleq
% \mathbb{E}\left[
% \left|\left(\tilde{\bm{B}}\hat{\bm{H}}\right)_i
% +
% \left(\hat{\bm{B}}\tilde{\bm{H}}\right)_i
% +
% \left(\tilde{\bm{B}}\tilde{\bm{H}}\right)_i\right|^2
% %\left(\bm R_3^{\mathsf{T}}\right)_i-\left(\hat{\bm{B}}\hat{\bm{H}}\right)_i\bm{A}(i,i)|^2
% \right]|\bm{A}(i,i)|^2 \nonumber \\
% \!\!\!\!&\!\!\!\!\!\!\!\!&\!\!\!\!
% \quad=
% |\hat{\bm H}(j, i)|^2
% \hat{v}_{\mathrm{B},j}
% +
% \hat{v}_{\mathrm{H}}
% |\hat{\bm B}(j,j)|^2
% +
% \hat{v}_{\mathrm{H}}
% \hat{v}_{\mathrm{B},j}.
% \end{eqnarray}

Next, by expanding and rearranging the exponent in \eqref{eq:pdfA} and completing the square \ac{w.r.t.} $\bm{A}(i,i)$, we have
\begin{eqnarray}
\label{eq:likA}
\!\!\!\!&\!\!\!\!\!\!\!\!&\!\!\!\!
p_{\left(\bm{\mathsf{R}}_3^{\mathsf{T}}\right)_i|\bm{\mathsf{A}}(i,i)}
\left(
\left(\bm{R}_3^{\mathsf{T}}\right)_i|\bm{A}(i,i)
\right) \nonumber \\
\!\!\!\!&\!\!\!\!\!\!\!\!&\!\!\!\!
\qquad
\propto
\exp\left[-\frac{|\bar{\bm{A}}(i,i)-\bm{A}(i,i)|^2}{\bar{v}_{\mathrm{A},i}}\right],
\end{eqnarray}
with
\begin{subequations}
\begin{eqnarray}
\bar{\bm{A}}(i,i)
\!\!&\!\!=\!\!&\!\!
\frac{1}{\psi_{\mathrm{A},i}}
\left(\hat{\bm B}\hat{\bm H}\right)_i^{\mathsf{H}} 
\bm{V}_{\mathrm{A},i}^{-1}
\left(\bm R_3^{\mathsf{T}}\right)_i, \\
\bar{v}_{\mathrm{A},i}
\!\!&\!\!=\!\!&\!\!
\frac{1}{\psi_{\mathrm{A},i}},
\end{eqnarray}
\label{eq:barA}%
\end{subequations}
where
\begin{equation}
\psi_{\mathrm{A},i} \triangleq \left(\hat{\bm B}\hat{\bm H}\right)_i^{\mathsf{H}} \bm{V}_{\mathrm{A},i}^{-1} \left(\hat{\bm B}\hat{\bm H}\right)_i.
\end{equation}
From \eqref{eq:likA}, $\bar{\bm{A}}(i,i)$ can be regarded as an \ac{AWGN} observation of $\bm{A}(i.i)$.

Finally, based on the prior statistical model in \eqref{eq:rc_model}, we consider the prior distribution of $\bm{A}(i,i)$.
By substituting \eqref{eq:ta} and \eqref{eq:ra} into \eqref{eq:A}, we obtain
\begin{eqnarray}
\label{eq:Aii}
    \bm{A}(i,i)
    \!\!&\!\!=\!\!&\!\!
    \frac{r_{\mathrm{A},i}}{t_{\mathrm{A},i}}
    = 
    \frac{\left(1+\epsilon_{\mathrm{RA},i}\right)e^{\mathrm{j}\theta_{\mathrm{RA},i}}}
    {\left(1+\epsilon_{\mathrm{TA},i}\right)e^{\mathrm{j}\theta_{\mathrm{TA},i}}} =
    \left[1+\frac{\epsilon_i}{1+\epsilon_{\mathrm{TA},i}}\right]e^{\mathrm{j}\theta_i} \nonumber \\
    \!\!&\!\!\approx\!\!&\!\!
    e^{\mathrm{j}\theta_i} + \left(1-\epsilon_{\mathrm{TA},i}\right)\epsilon_ie^{\mathrm{j}\theta_i} \approx
    e^{\mathrm{j}\theta_i} + \epsilon_ie^{\mathrm{j}\theta_i}
\end{eqnarray}
where $\epsilon_i \triangleq \epsilon_{\mathrm{RA},i}-\epsilon_{\mathrm{TA},i}$ and $\theta_i \triangleq \theta_{\mathrm{RA},i}-\theta_{\mathrm{TA},i}$.
%, with $\epsilon_i\sim\mathcal{N}\left(0,2\sigma_\epsilon^2\right)$ and $\theta_i\sim\mathcal{N}\left(0,2\sigma_\theta^2\right)$.
%
The above approximations are obtained under the assumption that $\sigma_\epsilon\ll 1$, where the first approximation uses the first-order expansion $1/\left(1+\epsilon_{\mathrm{TA},i}\right)\approx 1-\epsilon_{\mathrm{TA},i}$, and the second approximation neglects the second-order term $\epsilon_{\mathrm{TA},i}\epsilon_i$.
Since the second term $\epsilon_ie^{\mathrm{j}\theta_i}$ has zero mean and a total variance of $2\sigma_\epsilon^2$, its contribution is, with overwhelming probability, sufficiently small compared with that of the first term; therefore, the second term is neglected in this paper.
In this case, the prior distribution of $\bm{A}(i,i)$ can be modeled using a von Mises distribution based on an exponential-family approximation.

\begin{Lemma}
When $\bm{A}(i,i)$ lies on the complex unit circle, the \ac{MMSE} estimate of $\bm{A}(i,i)$ can be computed using the von Mises denoiser as follows:
\begin{equation}
    \hat{\bm{A}}(i,i)
    =
    \eta\left(\bar{\bm{A}}(i,i); \bar{v}_{\mathrm{A},i}\right) 
    =
    \frac{I_1(|\zeta_{\mathrm{A}}|)}{I_0(|\zeta_{\mathrm{A}}|)}
e^{\mathrm{j}\cdot\mathrm{arg}\left(\zeta_{\mathrm{A}}\right)},
\end{equation}
where $I_n(\cdot)$ denotes the modified Bessel function of the first kind of order $n$ and
\begin{eqnarray}
    \zeta_{\mathrm{A}} = \frac{2\bar{\bm{A}}(i,i)}{\bar{v}_{\mathrm{A},i}}.
\end{eqnarray}
The corresponding posterior \ac{MSE} is given by
\begin{equation}
\hat{v}_{\mathrm{A},i}
=
\bar{v}_{\mathrm{A},i}\cdot \frac{\partial \eta(\bar{\bm{A}}(i,i); \bar{v}_{\mathrm{A},i})}{\partial \bar{\bm{A}}(i,i)}
=
1 - \left(\frac{I_1(|\zeta_{\mathrm{A}}|)}{I_0(|\zeta_{\mathrm{A}}|)}\right)^{2}.
\end{equation}
\end{Lemma}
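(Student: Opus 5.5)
The plan is to compute the posterior of $\bm{A}(i,i)$ explicitly on the unit circle. The key observation is that the Gaussian likelihood in \eqref{eq:likA}, restricted to the circle, is itself a von Mises density in the phase. I will write $\bm{A}(i,i)=e^{\mathrm{j}\theta}$ and take the prior on the circle to be non-informative, i.e., $\theta\sim\mathcal{U}[0,2\pi)$, which is a von Mises prior with zero concentration. I will treat the AWGN observation model \eqref{eq:likA}, with $\bar{\bm{A}}(i,i)$ and $\bar{v}_{\mathrm{A},i}$ from \eqref{eq:barA}, as exact.

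\textbf{Step 1: the posterior is von Mises.} Expand the exponent using $|e^{\mathrm{j}\theta}|=1$:
\begin{equation}
|\bar{\bm{A}}(i,i)-e^{\mathrm{j}\theta}|^2=|\bar{\bm{A}}(i,i)|^2+1-2\Re\{\bar{\bm{A}}(i,i)^*e^{\mathrm{j}\theta}\}.
\end{equation}
The first two terms do not depend on $\theta$. The last term equals $2|\bar{\bm{A}}(i,i)|\cos(\theta-\arg\bar{\bm{A}}(i,i))$. Hence the posterior of $\theta$ is proportional to $\exp\left[|\zeta_{\mathrm{A}}|\cos(\theta-\arg\zeta_{\mathrm{A}})\right]$ with $\zeta_{\mathrm{A}}=2\bar{\bm{A}}(i,i)/\bar{v}_{\mathrm{A},i}$. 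This is a von Mises law with concentration $|\zeta_{\mathrm{A}}|$ and mean direction $\arg\zeta_{\mathrm{A}}$. Its normalizing constant is $2\pi I_0(|\zeta_{\mathrm{A}}|)$, by the integral representation $I_n(\kappa)=\frac{1}{2\pi}\int_0^{2\pi}e^{\kappa\cos\phi}\cos(n\phi)\,d\phi$.

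\textbf{Step 2: the posterior mean.} The MMSE estimate is $\mathbb{E}[e^{\mathrm{j}\theta}\mid\bar{\bm{A}}(i,i)]$. Substitute $\phi=\theta-\arg\zeta_{\mathrm{A}}$. The $\sin\phi$ part integrates to zero because the integrand is odd, and the $\cos\phi$ part yields $I_1$ by the same integral representation. This gives $\frac{I_1(|\zeta_{\mathrm{A}}|)}{I_0(|\zeta_{\mathrm{A}}|)}e^{\mathrm{j}\arg\zeta_{\mathrm{A}}}$, which is $\eta$.

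\textbf{Step 3: the posterior MSE.} Since $|\bm{A}(i,i)|=1$, the posterior variance is $\mathbb{E}|\bm{A}(i,i)|^2-|\hat{\bm{A}}(i,i)|^2=1-(I_1/I_0)^2$. This is the right-hand expression in the lemma.

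\textbf{Step 4: the derivative identity.} To show this also equals $\bar{v}_{\mathrm{A},i}\,\partial\eta/\partial\bar{\bm{A}}(i,i)$, I will prove a general Tweedie-type identity for complex AWGN rather than differentiate Bessel ratios. Write $x=\bar{\bm{A}}(i,i)$ and $v=\bar{v}_{\mathrm{A},i}$. The posterior of $a=\bm{A}(i,i)$ is proportional to $p(a)\exp[(x^*a+xa^*-|a|^2)/v]$. Take the Wirtinger derivative with respect to $x$, holding $x^*$ fixed. Differentiating the ratio of integrals gives $\partial\,\mathbb{E}[a\mid x]/\partial x=\frac{1}{v}\left(\mathbb{E}[|a|^2\mid x]-|\mathbb{E}[a\mid x]|^2\right)$. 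This holds for any prior $p$, the circle prior included.

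The main obstacle is the interpretation in Step 4. The derivative must be read as a Wirtinger derivative $\partial/\partial x$, not an ordinary real derivative. If one instead differentiates $\eta$ directly, the direct computation must reproduce $1-(I_1/I_0)^2$. I would check that consistency using $I_0'=I_1$ and $I_1'(\kappa)=I_0(\kappa)-I_1(\kappa)/\kappa$, noting that the radial and tangential parts of $\eta$ contribute with the $1/\kappa$ terms cancelling in the combined Wirtinger derivative. A minor point to state explicitly is that the uniform-phase prior is assumed. If a von Mises prior with concentration $\kappa_0$ and mean direction $\mu_0$ were used instead, $\zeta_{\mathrm{A}}$ would be replaced by $\zeta_{\mathrm{A}}+\kappa_0e^{\mathrm{j}\mu_0}$, with the same argument.
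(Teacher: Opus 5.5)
Your proposal is correct and follows essentially the same route as the paper, whose proof specializes the Appendix~A denoiser to $\beta=0$, $r=1$. In both, the phase posterior is again von Mises with parameter $\zeta_{\mathrm{A}}$, and the mean $\frac{I_1(|\zeta_{\mathrm{A}}|)}{I_0(|\zeta_{\mathrm{A}}|)}e^{\mathrm{j}\arg(\zeta_{\mathrm{A}})}$ follows from the Bessel integral. Your one genuine addition is Step~4: the paper only asserts the identity $v\,\partial\eta/\partial y=\mathbb{E}[|x-\eta|^2\mid y]$ as known, whereas you prove it and correctly note it holds only when $\partial/\partial\bar{\bm{A}}(i,i)$ is read as a Wirtinger derivative (your cancellation $g'(\kappa)+g(\kappa)/\kappa=1-g(\kappa)^2$, with $g=I_1/I_0$, confirms this).
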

\begin{IEEEproof}
    The result follows by evaluating the von Mises denoiser in Appendix A at $\beta=0$ and $r=1$.
\end{IEEEproof}

In practice, the (unwrapped) phase error variance is generally difficult to estimate accurately, and consequently the corresponding concentration parameter $\beta$ of the von Mises distribution cannot be reliably determined.
Therefore, in this paper, we adopt a non-informative prior on the phase in the denoising process by setting $\beta=0$, which corresponds to a circularly uniform denoiser.

This completes the \ac{MMSE} update of $\hat{\bm{A}}$.

% Since the diagonal elements of $\bm{A}$ are distributed on the unit circle in the complex plane, their \ac{MMSE} estimates and the corresponding \acp{MSE} can be obtained using the von Mises denoiser derived in Appendix C at $\beta=0$ and $r = 1$, \textit{i.e.},
% %
% \begin{subequations}
% \begin{eqnarray}
% \hat{\bm{A}}(i,i)
% \!\!&\!\!=\!\!&\!\!
% \eta\left(\bar{\bm{A}}(i,i); \bar{v}_{\mathrm{A},i} \right),\\ 
% %     =
% %     |\alpha|\frac{I_1(|\zeta_{\mathrm{z}}|)}{I_0(|\zeta_{\mathrm{z}}|)}
% % e^{\mathrm{j}\cdot\mathrm{arg}\left(\zeta_{\mathrm{z}}\right)},
% \hat{v}_{\mathrm{A},i}
% \!\!&\!\!=\!\!&\!\!
% \bar{v}_{\mathrm{A},i}\cdot \frac{\partial \eta(\bar{\bm{A}}(i,i); \bar{v}_{\mathrm{A},i})}{\partial \bar{\bm{A}}(i,i)}.
% \end{eqnarray}
% \end{subequations}

The estimation of $\bm{B}(j,j)$ is performed in exactly the same manner as that of $\bm{A}(i,i)$.
First, to estimate the $(j,j)$-th entry of $\bm{B}$, we focus on the $j$-th column of the observation in \eqref{eq:R3}, which yields the following observation equation:
\begin{equation}
\label{eq:R3j}
\left(\bm R_3\right)_j = \left(\bm A \bm H^{\mathsf{T}}\right)_j \bm B(j,j) + \left(\bm W_3\right)_j \in \mathbb{C}^{M_{\mathsf{A}} \times 1}.
\end{equation}
Substituting $\bm{H} = \hat{\bm{H}} + \tilde{\bm{H}}$ and $\bm{A} = \hat{\bm{A}} + \tilde{\bm{A}}$ into \eqref{eq:R3j} yields
\begin{eqnarray}
\label{eq:R3j_v2}
\!\!\!\!&\!\!\!\!\!\!\!\!&\!\!\!\!
\left(\bm R_3\right)_j
=
\left(\hat{\bm{A}}\hat{\bm{H}}^{\mathsf{T}}\right)_j \bm{B}(j,j)  \nonumber \\
\!\!\!\!&\!\!\!\!\!\!\!\!&\!\!\!\!
\qquad\qquad+
\left(\tilde{\bm{A}}\hat{\bm{H}}^{\mathsf{T}}\right)_j \bm{B}(j,j)
+
\left(\hat{\bm{A}}\tilde{\bm{H}}^{\mathsf{T}}\right)_j \bm{B}(j,j) \nonumber \\
\!\!\!\!&\!\!\!\!\!\!\!\!&\!\!\!\!
\qquad\qquad+
\left(\tilde{\bm{A}}\tilde{\bm{H}}^{\mathsf{T}}\right)_j \bm{B}(j,j)
+
\left(\bm W_3\right)_j.
\end{eqnarray}
In accordance with the \ac{CLT}, the terms in the second and third lines of \eqref{eq:R3j_v2} are collectively approximated by a multivariate complex Gaussian random vector.
Accordingly, the conditional \ac{PDF} of \eqref{eq:R3j_v2}, given $\bm{B}(j,j)$, can be expressed as
\begin{eqnarray}
\label{eq:pdfB}
\!\!\!\!&\!\!\!\!\!\!\!\!&\!\!\!\!
p_{\left(\bm{\mathsf{R}}_3\right)_j|\bm{\mathsf{B}}(j,j)}
\left(
\left(\bm{R}_3\right)_j|\bm{B}(j,j)
\right) \nonumber \\
\!\!\!\!&\!\!\!\!\!\!\!\!&\!\!\!\!
\quad
\propto
\exp\Bigl[
-\left(
\left(\bm{R}_3\right)_j-\left(\hat{\bm{A}}\hat{\bm{H}}^{\mathsf{T}}\right)_j \bm{B}(j,j)
\right)^\mathsf{H}
\bm{V}_{\mathrm{B},j}^{-1}
% \mathrm{Diag}\left(
% \bm{v}_{\mathrm{B},j}
% \right)^{-1}
\nonumber \\
\!\!\!\!&\!\!\!\!\!\!\!\!&\!\!\!\!
\qquad\qquad\qquad
\times \left(
\left(\bm{R}_3\right)_j-\left(\hat{\bm{A}}\hat{\bm{H}}^{\mathsf{T}}\right)_j \bm{B}(j,j)
\right)
\Bigr],
\end{eqnarray}
where
\begin{eqnarray}
\label{eq:VBj}
\!\!\!\!&\!\!\!\!\!\!\!\!&\!\!\!\!
\bm{V}_{\mathrm{B},j}
\triangleq
\mathbb{E}\Bigl[
\left(
\left(\bm{R}_3\right)_j-\left(\hat{\bm{A}}\hat{\bm{H}}^{\mathsf{T}}\right)_j \bm{B}(j,j)
\right) \nonumber \\
\!\!\!\!&\!\!\!\!\!\!\!\!&\!\!\!\!
\qquad\qquad\qquad
\times
\left(
\left(\bm{R}_3\right)_j-\left(\hat{\bm{A}}\hat{\bm{H}}^{\mathsf{T}}\right)_j \bm{B}(j,j)
\right)^\mathsf{H}
\Bigr] \nonumber \\
\!\!\!\!&\!\!\!\!\!\!\!\!&\!\!\!\!
\quad=
\bm{\varOmega}_\mathrm{A}+\hat{\bm{A}}\bm{\varPsi}_\mathrm{B}\hat{\bm{A}}^\mathsf{H} +
\mathrm{diag}\left(v_{\mathrm{B},j1},\dots,v_{\mathrm{B},jM_\mathrm{A}}\right),
\end{eqnarray}
%$\bm v_{\mathrm{A},i} \triangleq \left[v_{\mathrm{A},i1},\dots,v_{\mathrm{A},iM_\mathrm{B}}\right]^{\mathsf{T}}\in\mathbb{R}^{M_\mathrm{B}\times 1}$ with
with
\begin{equation}
v_{\mathrm{B},ji} \triangleq 
\left(
|\hat{\bm H}(j, i)|^2 + \bm{\varPsi}_\mathrm{B}(i,i)
\right)
\hat{v}_{\mathrm{A},i}.
\end{equation}

% where $\bm v_{\mathrm{B},j} \triangleq \left[v_{\mathrm{B},j1},\dots,v_{\mathrm{B},jM_\mathrm{A}}\right]^{\mathsf{T}}\in\mathbb{R}^{M_\mathrm{A}\times 1}$ with
% %
% \begin{eqnarray}
% \label{eq:vBj}
% \!\!\!\!&\!\!\!\!\!\!\!\!&\!\!\!\!
% v_{\mathrm{B},ji}
% \triangleq
% \mathbb{E}\left[
% |\left(\bm{R}_3\right)_j-\left(\hat{\bm{A}}\hat{\bm{H}}^{\mathsf{T}}\right)_j \bm{B}(j,j)|^2
% \right] \nonumber \\
% \!\!\!\!&\!\!\!\!\!\!\!\!&\!\!\!\!
% \quad=
% \frac{\sigma^2}{2}
% +
% \hat{v}_{\mathrm{A},i}
% |\hat{\bm H}(j, i)|^2
% +
% \hat{v}_{\mathrm{H}}
% |\hat{\bm A}(i,i)|^2
% +
% \hat{v}_{\mathrm{A},i}
% \hat{v}_{\mathrm{H}}.
% \end{eqnarray}

Next, by expanding and rearranging the exponent in \eqref{eq:pdfB} and completing the square \ac{w.r.t.} $\bm{B}(j,j)$, we have
\begin{eqnarray}
\label{eq:likB}
\!\!\!\!&\!\!\!\!\!\!\!\!&\!\!\!\!
p_{\left(\bm{\mathsf{R}}_3\right)_j|\bm{\mathsf{B}}(j,j)}
\left(
\left(\bm{R}_3\right)_j|\bm{B}(j,j)
\right) \nonumber \\
\!\!\!\!&\!\!\!\!\!\!\!\!&\!\!\!\!
\qquad
\propto
\exp\left[-\frac{|\bar{\bm{B}}(j,j)-\bm{B}(j,j)|^2}{\bar{v}_{\mathrm{B},j}}\right],
\end{eqnarray}
with
\begin{subequations}
\begin{eqnarray}
\bar{\bm{B}}(j,j)
\!\!&\!\!=\!\!&\!\!
\frac{1}{\psi_{\mathrm{B},j}}
\left(\hat{\bm A}\hat{\bm H}^{\mathsf{T}}\right)_j^{\mathsf{H}}
\bm{V}_{\mathrm{B},j}^{-1} \left(\bm R_3\right)_j, \\
\bar{v}_{\mathrm{B},j}
\!\!&\!\!=\!\!&\!\!
\frac{1}{\psi_{\mathrm{B},j}},
\end{eqnarray}
\label{eq:barB}%
\end{subequations}
where
\begin{equation}
\psi_{\mathrm{B},j} \triangleq \left(\hat{\bm A}\hat{\bm H}^{\mathsf{T}}\right)_j^{\mathsf{H}} \bm{V}_{\mathrm{B},j}^{-1} \left(\hat{\bm A}\hat{\bm H}^{\mathsf{T}}\right)_j.
\end{equation}

Finally, the prior distribution of $\bm{B}(j,j)$ can also be modeled using a von Mises distribution for all $j\in\mathcal{B}$, following exactly the same procedure as in \eqref{eq:Aii}.
By applying the von Mises denoiser with $\beta=0$ and $r = 1$, the corresponding \ac{MMSE} estimates and their associated \acp{MSE} can be obtained, \textit{i.e.},
\begin{subequations}
\begin{eqnarray}
\!\!\!\!&\!\!\!\!\!\!\!\!&\!\!\!\!\!\!\!\!
\hat{\bm{B}}(j,j)
=
\eta\left(\bar{\bm{B}}(j,j); \bar{v}_{\mathrm{B},j} \right)
=
\frac{I_1(|\zeta_{\mathrm{B}}|)}{I_0(|\zeta_{\mathrm{B}}|)}
e^{\mathrm{j}\cdot\mathrm{arg}\left(\zeta_{\mathrm{B}}\right)},\\ 
\!\!\!\!&\!\!\!\!\!\!\!\!&\!\!\!\!\!\!\!\!
\hat{v}_{\mathrm{B},j}
=
\bar{v}_{\mathrm{B},j}\cdot \frac{\partial \eta\left(\bar{\bm{B}}(j,j); \bar{v}_{\mathrm{B},j}\right)}{\partial \bar{\bm{B}}(j,j)}
=
1 - \left(\frac{I_1(|\zeta_{\mathrm{B}}|)}{I_0(|\zeta_{\mathrm{B}}|)}\right)^{2}\!\!\!,
\end{eqnarray}
\end{subequations}
where
\begin{equation}
    \zeta_\mathrm{B} = \frac{2\bar{\bm{B}}(j,j)}{\bar{v}_{\mathrm{B},j}}.
\end{equation}

\subsubsection{Estimation of $\gamma$}

All estimates obtained in the preceding steps are substituted into \eqref{eq:R4}, which yields
\begin{eqnarray}
\label{eq:R4v2}
\bm R_4
\!\!&\!\!=\!\!&\!\!
\gamma
\left(\hat{\bm A} + \tilde{\bm A}\right)
\hat{\bm Z}^{\mathsf{T}}
\left(\hat{\bm B} + \tilde{\bm B}\right) + \bm W_4 \nonumber\\
\!\!&\!\!=\!\!&\!\!
\gamma \underbrace{
\hat{\bm A} \hat{\bm Z}^{\mathsf{T}} \hat{\bm B}
}_{\triangleq \hat{\bm{D}}}
+
\gamma
\underbrace{
\left(
\tilde{\bm A} \hat{\bm Z}^{\mathsf{T}} \hat{\bm B} 
%+ \hat{\bm A} \tilde{\bm Z}^{\mathsf{T}} \hat{\bm B} 
+ \hat{\bm A} \hat{\bm Z}^{\mathsf{T}} \tilde{\bm B} 
%+ \tilde{\bm A} \tilde{\bm Z}^{\mathsf{T}} \hat{\bm B} \nonumber\\
+
\tilde{\bm A} \hat{\bm Z}^{\mathsf{T}} \tilde{\bm B} 
%+ \hat{\bm A} \tilde{\bm Z}^{\mathsf{T}} \tilde{\bm B} 
%+ \tilde{\bm A} \tilde{\bm Z}^{\mathsf{T}} \tilde{\bm B}
\right)
}_{\triangleq \tilde{\bm{D}}}
+ \bm W_4 \nonumber \\
\!\!&\!\!=\!\!&\!\!
\gamma\hat{\bm{D}} + \gamma\tilde{\bm{D}} + \bm W_4,
\end{eqnarray}
where we assume $\hat{\bm{Z}}\approx \bm{Z}$.
% \begin{eqnarray}
% \label{eq:R4v2}
% \bm R_4
% \!\!&\!\!=\!\!&\!\!
% \gamma
% \left(\hat{\bm A} + \tilde{\bm A}\right)
% \left(\hat{\bm Z} + \tilde{\bm Z}\right)^{\mathsf{T}}
% \left(\hat{\bm B} + \tilde{\bm B}\right) + \bm W_4 \nonumber\\
% \!\!&\!\!=\!\!&\!\!
% \gamma\hat{\bm A} \hat{\bm Z}^{\mathsf{T}} \hat{\bm B} \nonumber \\
% \!\!&\!\!\quad\!\!&\!\!
% +
% \gamma \Bigl( \tilde{\bm A} \hat{\bm Z}^{\mathsf{T}} \hat{\bm B} 
% + \hat{\bm A} \tilde{\bm Z}^{\mathsf{T}} \hat{\bm B} 
% + \hat{\bm A} \hat{\bm Z}^{\mathsf{T}} \tilde{\bm B}
% + \tilde{\bm A} \tilde{\bm Z}^{\mathsf{T}} \hat{\bm B} \nonumber\\
% \!\!&\!\!\quad\!\!&\!\!
% +
% \tilde{\bm A} \hat{\bm Z}^{\mathsf{T}} \tilde{\bm B} 
% + \hat{\bm A} \tilde{\bm Z}^{\mathsf{T}} \tilde{\bm B} 
% + \tilde{\bm A} \tilde{\bm Z}^{\mathsf{T}} \tilde{\bm B}\Bigr)
% + \bm W_4 \nonumber \\
% \!\!&\!\!=\!\!&\!\!
% \gamma\hat{\bm{D}} + \gamma\tilde{\bm{D}} + \bm W_4,
% \end{eqnarray}
%
% where we define
% %
% \begin{eqnarray}
% \hat{\bm{D}}
% \!\!&\!\!\triangleq\!\!&\!\!
% \hat{\bm A} \hat{\bm Z}^{\mathsf{T}} \hat{\bm B}, \nonumber \\
% \tilde{\bm{D}}
% \!\!&\!\!\triangleq\!\!&\!\!
% \tilde{\bm A} \hat{\bm Z}^{\mathsf{T}} \hat{\bm B} 
% + \hat{\bm A} \tilde{\bm Z}^{\mathsf{T}} \hat{\bm B} 
% + \hat{\bm A} \hat{\bm Z}^{\mathsf{T}} \tilde{\bm B}
% + \tilde{\bm A} \tilde{\bm Z}^{\mathsf{T}} \hat{\bm B} \nonumber\\
% \!\!&\!\!\quad\!\!&\!\!
% +
% \tilde{\bm A} \hat{\bm Z}^{\mathsf{T}} \tilde{\bm B} 
% + \hat{\bm A} \tilde{\bm Z}^{\mathsf{T}} \tilde{\bm B} 
% + \tilde{\bm A} \tilde{\bm Z}^{\mathsf{T}} \tilde{\bm B}.
% \end{eqnarray}
%
For \ac{MMSE} estimation of $\gamma$, \eqref{eq:R4v2} is rewritten in the following vectorized form:
\begin{equation}
\label{eq:R4vec}
    \mathrm{vec}\left(\bm{R}_4\right) = \underbrace{\mathrm{vec}\left(\hat{\bm{D}}\right)}_{\triangleq \hat{\bm{d}}}\gamma +
    \underbrace{\mathrm{vec}\left(\tilde{\bm{D}}\right)}_{\triangleq \tilde{\bm{d}}}\gamma +
    \underbrace{\mathrm{vec}\left(\bm W_4\right)}_{\triangleq \bm{w}_4}.
\end{equation}
In accordance with the \ac{CLT}, the second and third terms in \eqref{eq:R4vec} are collectively approximated by a multivariate complex Gaussian random vector.
Accordingly, the conditional \ac{PDF} of \eqref{eq:R4vec}, given $\bm{R}_4$, can be expressed as
\begin{eqnarray}
\label{eq:pdfgamma}
\!\!\!\!&\!\!\!\!\!\!\!\!&\!\!\!\!
p_{\mathrm{vec}\left(\bm{\mathsf{R}}_4\right)|\mathrm{\gamma}}
\left(
\mathrm{vec}\left(\bm{R}_4\right)|\gamma
\right) \\
\!\!\!\!&\!\!\!\!\!\!\!\!&\!\!\!\!
\quad
\propto
\exp\Bigl[
-\left(
\mathrm{vec}\left(\bm{R}_4\right)-\hat{\bm{d}}\gamma
\right)^\mathsf{H}
% \mathrm{Diag}\left(
% \bm{v}_{\gamma}
% \right)^{-1}
\bm{V}_\mathrm{\gamma}^{-1}
% \nonumber \\
% \!\!\!\!&\!\!\!\!\!\!\!\!&\!\!\!\!
% \qquad\qquad\qquad
% \times 
\left(
\mathrm{vec}\left(\bm{R}_4\right)-\hat{\bm{d}}\gamma
\right)
\Bigr],\nonumber 
\end{eqnarray}
where
\begin{eqnarray}
\label{eq:Vg}
\!\!\!\!&\!\!\!\!\!\!\!\!&\!\!\!\!
\bm{V}_\mathrm{\gamma}
\triangleq
\mathbb{E}\Bigl[
\left(
\mathrm{vec}\left(\bm{R}_4\right)-\hat{\bm{d}}\gamma
\right) 
\left(
\mathrm{vec}\left(\bm{R}_4\right)-\hat{\bm{d}}\gamma
\right)^\mathsf{H}
\Bigr] \nonumber \\
\!\!\!\!&\!\!\!\!\!\!\!\!&\!\!\!\!
\quad=
\bm{\varSigma}_\mathrm{A} +
\bm{V}_\mathrm{\gamma}^\mathrm{D},
\end{eqnarray}
%$\bm v_{\mathrm{A},i} \triangleq \left[v_{\mathrm{A},i1},\dots,v_{\mathrm{A},iM_\mathrm{B}}\right]^{\mathsf{T}}\in\mathbb{R}^{M_\mathrm{B}\times 1}$ with
with $\bm{\varSigma}_\mathrm{A}\triangleq \bm{\varPsi}_\mathrm{A}\otimes\bm{\varOmega}_\mathrm{A}$, $\bm{V}_\mathrm{\gamma}^\mathrm{D}\triangleq\mathrm{diag}\left(v_{\gamma,11},\dots,v_{\gamma,M_{\mathrm{A}}M_{\mathrm{B}}}\right)$, 
\begin{eqnarray}
\label{eq:vgamma}
\!\!\!\!&\!\!\!\!\!\!\!\!&\!\!\!\!
v_{\gamma,ij} \triangleq 
\phi_\gamma|\hat{\bm Z}(j,i)|^2 \nonumber \\
\!\!\!\!&\!\!\!\!\!\!\!\!&\!\!\!\!
\qquad
\times
\left(
\hat{v}_{\mathrm{A},i} |\hat{\bm B}(j,j)|^2
+ |\hat{\bm A}(i,i)|^2\hat{v}_{\mathrm{B},j} 
+ \hat{v}_{\mathrm{A},i}\hat{v}_{\mathrm{B},j} 
\right),
\end{eqnarray}
% where $\bm v_{\gamma} \triangleq \left[v_{\gamma,11},\dots,v_{\gamma,M_{\mathrm{A}}M_{\mathrm{B}}}\right]^{\mathrm{T}}\in\mathbb{R}^{M_\mathrm{A}M_\mathrm{A}\times 1}$ with
%
% \begin{eqnarray}
% \label{eq:vGij}
% \!\!\!\!&\!\!\!\!\!\!\!\!&\!\!\!\!
% v_{\gamma,ij}
% \triangleq
% \mathbb{E}\left[
% |\bm{R}_4(i,j)-\hat{\bm{D}}(i,j)\gamma|^2
% \right] \nonumber \\
% \!\!\!\!&\!\!\!\!\!\!\!\!&\!\!\!\!
% \quad
% =\frac{\sigma^2}{2} + 
% \Bigl( v_{\mathrm{A},i}|\hat{\bm Z}(j,i)|^2 |\hat{\bm B}(j,j)|^2
% + |\hat{\bm A}(i,i)|^2|\hat{\bm Z}(j,i)|^2 v_{\mathrm{B},j} 
% \nonumber\\
% \!\!\!\!&\!\!\!\!\!\!\!\!&\!\!\!\!
% \quad
% +|\hat{\bm A}(i,i)|^2|\hat{\bm B}(j,j)|^2 v_{\mathrm{Z}}(j,i) 
% + v_{\mathrm{A},i}v_{\mathrm{B},j}|\hat{\bm Z}(j,i)|^2 
% \nonumber\\
% \!\!\!\!&\!\!\!\!\!\!\!\!&\!\!\!\!
% \quad
% + v_{\mathrm{A},i}v_{\mathrm{Z}}(j,i)|\hat{\bm B}(j,j)|^2 
% + v_{\mathrm{B},j}v_{\mathrm{Z}}(j,i)|\hat{\bm A}(i,i)|^2 \nonumber\\
% \!\!\!\!&\!\!\!\!\!\!\!\!&\!\!\!\!
% \quad
% + v_{\mathrm{A},i}v_{\mathrm{B},j}v_{\mathrm{Z}}(j,i)\Bigr)\times \phi_\gamma,
% \end{eqnarray}
%
and $\phi_\gamma \triangleq \mathbb{E}\left[|\gamma|^2\right]$.

By expanding and rearranging the exponent in \eqref{eq:pdfgamma} and completing the square \ac{w.r.t.} $\gamma$, we have
\begin{equation}
\label{eq:likgamma}
p_{\mathrm{vec}\left(\bm{\mathsf{R}}_4\right)|\mathrm{\gamma}}
\left(
\mathrm{vec}\left(\bm{R}_4\right)|\gamma
\right) 
\propto
\exp\left[-\frac{|\bar{\gamma}-\gamma|^2}{\bar{v}_{\gamma}}\right],
\end{equation}
with
\begin{equation}
\bar{\gamma}
=
\frac{\hat{\bm{d}}^{\mathsf{H}}
\bm{V}_\mathrm{\gamma}^{-1}
%\mathrm{Diag}\left(\bm{v}_{\gamma}\right)^{-1} 
\mathrm{vec}\left(\bm{R}_4\right)}{\psi_{\gamma}},\quad
\bar{v}_{\gamma}
=
\frac{1}{\psi_{\gamma}},
\label{eq:bargamma}%
\end{equation}
where
\begin{equation}
\psi_{\gamma} \triangleq \hat{\bm{d}}^{\mathsf{H}} \bm{V}_\mathrm{\gamma}^{-1} \hat{\bm{d}}.
\end{equation}
%
% From \eqref{eq:gamma}, $\gamma$ represents the complex gain of the forward and reverse paths, for which specifying a particular prior distribution is difficult.
% %
% In such cases, it is common to adopt a complex Gaussian distribution as the prior.
% %
% Assuming that $\phi_\gamma \triangleq \mathbb{E}\left[|\gamma|^2\right]$ is known, the \ac{MMSE} estimate of $\gamma$ and its corresponding \ac{MSE}, derived using \eqref{eq:likgamma} as the likelihood function, are given by \tktk{The performance can be further improved by using the von Mises denoiser here. We plan to revise this accordingly.}
% \egl{OK}
% %
% \begin{eqnarray}
%     \hat{\gamma} = \frac{\phi_\gamma}{\phi_\gamma+\bar{v}_\gamma}\bar{\gamma}, \quad
%     \hat{v}_\gamma = \frac{\phi_\gamma\bar{v}_\gamma}{\phi_\gamma+\bar{v}_\gamma}.
% \end{eqnarray}
Since $\phi_\gamma$ used in \eqref{eq:vgamma} is generally unavailable, its treatment requires further discussion.
Ideally, the true long-term statistic $\phi_\gamma \triangleq \mathbb{E}\left[|\gamma|^2\right]$ is available during system setup, in which case it can be directly used.
When $\phi_\gamma$ is unknown, several alternatives can be considered.
The simplest approach is to operate the system with $\phi_\gamma = 1$, noting that the ultimate goal of calibration is to achieve $\alpha = \beta$.
This corresponds to determining the reliability of the estimate solely based on the \ac{SNR}.
In this paper, as a more adaptive alternative, we approximate the true long-term statistic $\phi_\gamma$ by an instantaneous estimate of $|\gamma|^2$, obtained via the \ac{MoM} described below.

% \begin{Lemma}
% From \eqref{eq:R2}, the elements of $\bm{R}_2$ are statistically independent, and hence a consistent estimate of $|\alpha|$ can be obtained via the \ac{MoM} as
% %
% \begin{equation}
% \label{eq:alphahat}
%     \hat{|\alpha|}
%     = \sqrt{
%         \frac{1}{M_\mathrm{A}M_\mathrm{B}}\sum_{i\in\mathcal{A}}\sum_{j\in\mathcal{B}} |\bm{R}_2(j,i)|^{2} - \frac{\sigma^2}{2}
%       }.
% \end{equation}
% \end{Lemma}

% \begin{IEEEproof}
%     The result follows directly by evaluating the \ac{MoM} estimator derived in Appendix C at $r=|\alpha|$ and $v=\sigma^2/2$.
% \end{IEEEproof}

\begin{Lemma}
A consistent estimate of $|\gamma|^2$ based on the observation in \eqref{eq:R4} can be obtained via the \acf{MoM} as follows:
\begin{equation}
\label{eq:mom}
    |\check{\gamma}|^2 =\frac{|q|^2-u}{u^2+s},
\end{equation}
where
\begin{subequations}
\begin{eqnarray}
    \label{eq:q}
    q
    \!\!&\!\!\triangleq\!\!&\!\!
    \hat{\bm{d}}^\mathsf{H}
    \bm{\varSigma}_\mathrm{A}^{-1}
    \mathrm{vec}\left(\bm{R}_4\right), \\
    u
    \!\!&\!\!\triangleq\!\!&\!\!
    \hat{\bm{d}}^\mathsf{H}
    \bm{\varSigma}_\mathrm{A}^{-1}
    \hat{\bm{d}}, \\
    s
    \!\!&\!\!\triangleq\!\!&\!\!
    \hat{\bm{d}}^\mathsf{H}
    \bm{\varSigma}_\mathrm{A}^{-1}
    \bm{V}_\mathrm{\gamma}^\mathrm{D}
    \bm{\varSigma}_\mathrm{A}^{-1}
    \hat{\bm{d}}.
\end{eqnarray}
\end{subequations}

% \begin{equation}
%     \acute{\bm R}_4(i,j)
%     = \frac{\bm R_4(i,j)}{\bigl|\hat{\bm Z}(j,i)\bigr|},
%     \quad i\in\mathcal{A},\; j\in\mathcal{B},
% \end{equation}
% % 
% \begin{equation}
% \label{eq:abs2_gamma}
%     |\hat{\gamma}|
%     =\sqrt{
%         \frac{1}{M_\mathrm{A}M_\mathrm{B}}
%         \sum_{i\in\mathcal{A}}\sum_{j\in\mathcal{B}}
%         \left(
%             \bigl|\acute{\bm{R}}_4(i,j)\bigr|^{2}
%             -
%             \frac{\sigma^2}{2\bigl|\hat{\bm Z}(j,i)\bigr|^2}
%         \right)
%         }.
% \end{equation}
\end{Lemma}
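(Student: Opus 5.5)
The plan is to compute the first two moments of the whitened matched-filter statistic $q$ under the \ac{VGA} model \eqref{eq:R4vec}, treating $\hat{\bm d}$ as fixed. We will work with $\mathrm{vec}(\bm R_4)=\hat{\bm d}\gamma+\tilde{\bm d}\gamma+\bm w_4$, where $\tilde{\bm d}$ is zero mean with (diagonal) covariance such that $\mathbb{E}[|\gamma|^2\tilde{\bm d}\tilde{\bm d}^{\mathsf H}]=\bm V_\gamma^{\mathrm D}$. Here $\bm V_\gamma^{\mathrm D}$ is read with $\phi_\gamma$ factored out appropriately (see below). The noise $\bm w_4$ is zero mean with covariance $\bm\varSigma_\mathrm{A}$, and independent of $\tilde{\bm d}$. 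Projecting onto $\bm\varSigma_\mathrm{A}^{-1}\hat{\bm d}$ gives
\begin{equation}
q=u\gamma+\gamma\,\hat{\bm d}^{\mathsf H}\bm\varSigma_\mathrm{A}^{-1}\tilde{\bm d}+\hat{\bm d}^{\mathsf H}\bm\varSigma_\mathrm{A}^{-1}\bm w_4 .
\end{equation}

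The next step is to take $\mathbb{E}[|q|^2]$ conditioned on $\gamma$. Cross terms vanish because both perturbations have zero mean and are mutually independent. The signal term gives $u^2|\gamma|^2$. The noise term gives $\hat{\bm d}^{\mathsf H}\bm\varSigma_\mathrm{A}^{-1}\bm\varSigma_\mathrm{A}\bm\varSigma_\mathrm{A}^{-1}\hat{\bm d}=u$. The estimation-error term gives $|\gamma|^2\hat{\bm d}^{\mathsf H}\bm\varSigma_\mathrm{A}^{-1}\mathbb{E}[\tilde{\bm d}\tilde{\bm d}^{\mathsf H}]\bm\varSigma_\mathrm{A}^{-1}\hat{\bm d}$. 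Identifying $\mathbb{E}[\tilde{\bm d}\tilde{\bm d}^{\mathsf H}]$ with $\bm V_\gamma^{\mathrm D}$ evaluated at unit $\phi_\gamma$, i.e.\ the bracket in \eqref{eq:vgamma} times $|\hat{\bm Z}(j,i)|^2$, turns this term into $|\gamma|^2 s$. Hence
\begin{equation}
\mathbb{E}\left[|q|^2\right]=|\gamma|^2\left(u^2+s\right)+u .
\end{equation}

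The \ac{MoM} step then replaces $\mathbb{E}[|q|^2]$ by its single observed realization $|q|^2$ and solves for $|\gamma|^2$, which yields \eqref{eq:mom}. For consistency, we argue that $q/u$ concentrates in the large-array regime $M_\mathrm{A}M_\mathrm{B}\to\infty$. The quantities $u$ and $s$ grow linearly in the number of entries. The fluctuations of the two zero-mean terms, normalized by $u$, have variances of order $1/u$ and $s/u^2$, and both vanish. Therefore $|q|^2/(u^2+s)-u/(u^2+s)\to|\gamma|^2$ in probability.

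The main obstacle is the bookkeeping of $\phi_\gamma$ inside $\bm V_\gamma^{\mathrm D}$, since \eqref{eq:vgamma} already contains $\phi_\gamma$, which is what we are trying to estimate. I will handle this by stating explicitly that $s$ is computed with $\phi_\gamma=1$ in \eqref{eq:vgamma}, so that the $|\gamma|^2$ dependence is carried separately. A second point to justify is the independence and zero-mean property of $\tilde{\bm d}$ with respect to $\bm w_4$. This follows because $\tilde{\bm A}$ and $\tilde{\bm B}$ are built from $\bm R_3$ (noise $\bm W_3$), which is independent of $\bm W_4$ by the $\pm$ combination of the two measurements. It also requires neglecting the cross moments of $\tilde{\bm A}$ and $\tilde{\bm B}$, consistent with the \ac{VGA} used throughout.
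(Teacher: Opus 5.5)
Your proposal is correct and follows essentially the paper's own argument: write $q=u\gamma+\gamma\,\hat{\bm d}^{\mathsf H}\bm\varSigma_\mathrm{A}^{-1}\tilde{\bm d}+\hat{\bm d}^{\mathsf H}\bm\varSigma_\mathrm{A}^{-1}\bm w_4$, use the zero means and the independence of $\tilde{\bm d}$ and $\bm w_4$ to get $\mathbb{E}[|q|^2]=(u^2+s)|\gamma|^2+u$, and equate this with the single sample $|q|^2$.

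Two of your additions go beyond the paper. First, your convention that $s$ be computed with $\phi_\gamma=1$ in \eqref{eq:vgamma} is what makes the identity $\mathbb{E}[|\hat{\bm d}^{\mathsf H}\bm\varSigma_\mathrm{A}^{-1}\tilde{\bm d}|^2]=s$ hold; the paper glosses over this point. Second, the paper gives no concentration argument for consistency, so your large-array sketch is extra. It rests on the VGA diagonal error covariance and on $u$ growing linearly with $M_\mathrm{A}M_\mathrm{B}$, and you should state these as assumptions.
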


\begin{IEEEproof}
From the definition in \eqref{eq:q} and the vectorized observation model in \eqref{eq:R4vec}, we have
\begin{equation}
\label{eq:q2}
    q=
    \underbrace{\hat{\bm{d}}^\mathsf{H}\bm{\varSigma}_\mathrm{A}^{-1}\hat{\bm{d}}}_{=u}\gamma
    + \underbrace{\hat{\bm{d}}^\mathsf{H}\bm{\varSigma}_\mathrm{A}^{-1}\tilde{\bm{d}}}_{\triangleq \kappa_1}\gamma
    + \underbrace{\hat{\bm{d}}^\mathsf{H}\bm{\varSigma}_\mathrm{A}^{-1}\bm{w}_4}_{\triangleq\kappa_2}.
\end{equation}
Since $\mathbb{E}[\tilde{\bm{d}}]=\mathbb{E}\left[\bm{w}_4\right]=\bm{0}$, it follows that $\mathbb{E}\left[\kappa_1\right]=\mathbb{E}\left[\kappa_2\right]=0$.
Moreover, 
\begin{subequations}
\begin{eqnarray}
    \mathbb{E}\left[|\kappa_1|^2\right]
    \!\!&\!\!=\!\!&\!\!
    \hat{\bm{d}}^\mathsf{H}
    \bm{\varSigma}_\mathrm{A}^{-1}
    \bm{V}_\mathrm{\gamma}^\mathrm{D}
    \bm{\varSigma}_\mathrm{A}^{-1}
    \hat{\bm{d}} = s, \\
    \mathbb{E}\left[|\kappa_2|^2\right]
    \!\!&\!\!=\!\!&\!\!
    \hat{\bm{d}}^\mathsf{H}
    \bm{\varSigma}_\mathrm{A}^{-1}
    \hat{\bm{d}} = u.
\end{eqnarray}
\end{subequations}
By squaring both sides of \eqref{eq:q2} and taking the expectation, we obtain
\begin{equation}
\mathbb{E}\left[|q|^2\right]
=
\left(u^2+s\right)|\gamma|^2+u,
% +2\Re\left\{u\gamma \mathbb{E}\left[(\gamma\kappa_1+\kappa_2)^*\right]\right\}+2\Re\left\{|\gamma|^2\mathbb{E}\left[\kappa_1\right]\right\}
\end{equation}
where we use the independence of $\tilde{\bm{d}}$ and $\bm{w}_4$.
Equating the theoretical moment $\mathbb{E}\left[|q|^2\right]$ with its single-sample counterpart $|q|^2$ and solving for $|\gamma|^2$ yields \eqref{eq:mom}.
\end{IEEEproof}

Finally, since an estimate of $|\gamma|$ is obtained by taking the square root of \eqref{eq:mom}, we can model the prior distribution of $\gamma$ as a circularly uniform distribution on the complex circle with radius $|\check{\gamma}|$.
By applying the von Mises denoiser with $\beta=0$ and $r=|\check{\gamma}|$, the corresponding \ac{MMSE} estimate and its associated \ac{MSE} can be obtained, \textit{i.e.},
\begin{subequations}
\begin{eqnarray}
\!\!\!\!&\!\!\!\!\!\!\!\!&\!\!\!\!\!\!\!\!
\hat{\gamma}
=
\eta\left(\bar{\gamma}; \bar{v}_{\gamma} \right)
=
|\check{\gamma}|
\frac{I_1(|\zeta_{\mathrm{\gamma}}|)}{I_0(|\zeta_{\mathrm{\gamma}}|)}
e^{\mathrm{j}\cdot\mathrm{arg}\left(\zeta_{\mathrm{\gamma}}\right)},\\ 
\!\!\!\!&\!\!\!\!\!\!\!\!&\!\!\!\!\!\!\!\!
\hat{v}_{\gamma}
=
\bar{v}_{\mathrm{\gamma}}\cdot \frac{\partial \eta\left(\bar{\gamma}; \bar{v}_{\mathrm{\gamma}}\right)}{\partial \bar{\gamma}}
=
|\check{\gamma}|^2\left(1 - \left(\frac{I_1(|\zeta_{\mathrm{\gamma}}|)}{I_0(|\zeta_{\mathrm{\gamma}}|)}\right)\right)^{2}\!\!\!,
\end{eqnarray}
\end{subequations}
where
\begin{equation}
    \zeta_\mathrm{\gamma} = \frac{2|\check{\gamma}|}{\bar{v}_{\mathrm{\gamma}}}\bar{\gamma}.
\end{equation}

\begin{algorithm}[t]
\caption{:\ MMSE-based calibration algorithm}
\label{alg:mmse}
\begin{algorithmic}[1]
\Require$\bm R_1,\bm R_2,\bm R_3,\bm R_4,(\bm{\varOmega}_\mathrm{A},\bm{\varOmega}_\mathrm{B},\bm{\varPsi}_\mathrm{A},\bm{\varPsi}_\mathrm{B}$ if available)
\Ensure $\hat{\bm H},\hat{\bm Z},\hat{\bm A},\hat{\bm B},\hat{\gamma}$

\vspace{0.5mm}
\LineComment{// 1) Estimation of $\bm H$} 
\State $\hat{\bm H} \gets \bm{R}_1$

\vspace{0.5mm}
\LineComment{// 2) Estimation of $\bm Z$} 
\State $\hat{\bm Z} \gets \mathcal{S}\{\bm{R}_2\}$ 

% \vspace{0.5mm}
% \LineComment{// 2) Estimation of $\bm Z$} 
% \State $\hat{|\alpha|}
% = \left(
% \frac{1}{M_\mathrm{A}M_\mathrm{B}}\sum_{i\in\mathcal{A}}\sum_{j\in\mathcal{B}} |\bm{R}_2(j,i)|^{2} - \frac{\sigma^2}{2}\right)^{1/2}$

% \State $\forall i\in\mathcal{A},\forall j\in\mathcal{B}: \zeta_{\mathrm{z}} = \frac{4\hat{|\alpha|}}{\sigma^2}\bm{R}_2(j,i)$

% \State $\forall i\in\mathcal{A},\forall j\in\mathcal{B}: \hat{\bm{Z}}(j,i)
% \gets
% \hat{|\alpha|}\frac{I_1(|\zeta_{\mathrm{z}}|)}{I_0(|\zeta_{\mathrm{z}}|)}
% e^{\mathrm{j}\cdot\mathrm{arg}\left(\zeta_{\mathrm{z}}\right)}$
      
% \State $\forall i\in\mathcal{A},\forall j\in\mathcal{B}: \hat{v}_{\mathrm{z},ij}
% \gets
%     \hat{|\alpha|}^2\left(1 - \left(\frac{I_1(|\zeta_{\mathrm{z}}|)}{I_0(|\zeta_{\mathrm{z}}|)}\right)^{2}\right)$

\vspace{0.5mm}
\LineComment{// 3) Estimation of $\bm{A}$ and $\bm{B}$} 

\vspace{0.5mm}
\State $\hat{\bm A} \gets \bm I_{M_\mathrm{A}},\ \hat{\bm B} \gets \bm I_{M_{\mathrm{B}}}$  \Comment{Initialization}
\State $\forall i\in\mathcal{A},\forall j\in\mathcal{B}: \hat{v}_{\mathrm{A},i}\gets1, \hat{v}_{\mathrm{B},j}\gets1$ 
\Repeat
\State $\forall i\in\mathcal{A}:$ Compute $\bm{V}_{\mathrm{A},i}$  using \eqref{eq:VAi}.

\State $\forall i\in\mathcal{A}: \psi_{\mathrm{A},i} \gets \left(\hat{\bm B}\hat{\bm H}\right)_i^{\mathsf{H}} \bm{V}_{\mathrm{A},i}^{-1} \left(\hat{\bm B}\hat{\bm H}\right)_i$

\State $\forall i\in\mathcal{A}: \bar{\bm{A}}(i,i)
\gets
\frac{1}{\psi_{\mathrm{A},i}}
\left(\hat{\bm B}\hat{\bm H}\right)_i^{\mathsf{H}} \bm{V}_{\mathrm{A},i}^{-1} \left(\bm R_3^{\mathsf{T}}\right)_i$

\State $\forall i\in\mathcal{A}: \bar{v}_{\mathrm{A},i}
\gets
\frac{1}{\psi_{\mathrm{A},i}}$

\State $\forall i\in\mathcal{A}: \zeta_{\mathrm{A}} \gets \frac{2}{\bar{v}_{\mathrm{A},i}}\bar{\bm{A}}(i,i)$

\State $\forall i\in\mathcal{A}: \hat{\bm{A}}(i,i)
\gets
\frac{I_1(|\zeta_{\mathrm{A}}|)}{I_0(|\zeta_{\mathrm{A}}|)}
e^{\mathrm{j}\cdot\mathrm{arg}\left(\zeta_{\mathrm{A}}\right)}$
      
\State $\forall i\in\mathcal{A}: \hat{v}_{\mathrm{A},i}
\gets
    \left(1 - \left(\frac{I_1(|\zeta_{\mathrm{A}}|)}{I_0(|\zeta_{\mathrm{A}}|)}\right)^{2}\right)$

\State $\forall j\in\mathcal{B}:$ Compute $\bm{V}_{\mathrm{B},j}$ using \eqref{eq:VBj}.

\State $\forall j\in\mathcal{B}: \psi_{\mathrm{B},j} \gets \left(\hat{\bm A}\hat{\bm H}^{\mathsf{T}}\right)_j^{\mathsf{H}} \bm{V}_{\mathrm{B},j}^{-1} \left(\hat{\bm A}\hat{\bm H}^{\mathsf{T}}\right)_j$

\State $\forall j\in\mathcal{B}: \bar{\bm{B}}(j,j) \gets \frac{1}{\psi_{\mathrm{B},j}}
\left(\hat{\bm A}\hat{\bm H}^{\mathsf{T}}\right)_j^{\mathsf{H}} \bm{V}_{\mathrm{B},j}^{-1} \left(\bm R_3\right)_j$

\State $\forall j\in\mathcal{B}: \bar{v}_{\mathrm{B},j}
\gets
\frac{1}{\psi_{\mathrm{B},j}}$

\State $\forall j\in\mathcal{B}: \zeta_{\mathrm{B}} \gets \frac{2}{\bar{v}_{\mathrm{B},j}}\bar{\bm{B}}(j,j)$

\State $\forall j\in\mathcal{B}: \hat{\bm{B}}(j,j)
\gets
\frac{I_1(|\zeta_{\mathrm{B}}|)}{I_0(|\zeta_{\mathrm{B}}|)}
e^{\mathrm{j}\cdot\mathrm{arg}\left(\zeta_{\mathrm{B}}\right)}$
      
\State $\forall j\in\mathcal{B}: \hat{v}_{\mathrm{B},j}
\gets
    \left(1 - \left(\frac{I_1(|\zeta_{\mathrm{B}}|)}{I_0(|\zeta_{\mathrm{B}}|)}\right)^{2}\right)$

% \State $\hat{\bm A}(i,i) = \dfrac{I_1(|\zeta_{\bar{\mathrm{A}}}|)}{I_1(|\zeta_{\bar{\mathrm{A}}}|}e^{\mathrm{j}\cdot\arg(\zeta_{\bar{\mathrm{A}}})}$
% \State $\hat{v}_{\mathrm{A}, i} = 1 - \left(\dfrac{I_1(|\zeta_{\bar{\mathrm{A}}}|)}{I_1(|\zeta_{\bar{\mathrm{A}}}|}\right)^2$

%\State Calculate effective variances $\bm v_{\mathrm{B},j}$
% \State $\bar{\bm B}(j,j)\gets\dfrac{(\hat{\bm A}\hat{\bm H}^{\mathsf{T}})_j^{\mathsf{H}} \mathrm{Diag}(\bm v_{\mathrm{B},j})^{-1} (\bm R_3)_j}{(\hat{\bm A}\hat{\bm H}^{\mathsf{T}})_j^{\mathsf{H}} \mathrm{Diag}(\bm v_{\mathrm{B},j})^{-1} (\hat{\bm A}\hat{\bm H}^{\mathsf{T}})_j}$
% \State $\bar{v}_{\mathrm{B},j}\gets\dfrac{1}{(\hat{\bm A}\hat{\bm H}^{\mathsf{T}})_j^{\mathsf{H}} \mathrm{Diag}(\bm v_{\mathrm{B},j})^{-1} (\hat{\bm A}\hat{\bm H}^{\mathsf{T}})_j}$

% \State $\hat{\bm B}(j,j) = \dfrac{I_1(|\zeta_{\bar{\mathrm{B}}}|)}{I_1(|\zeta_{\bar{\mathrm{B}}}|}e^{\mathrm{j}\cdot\arg(\zeta_{\bar{\mathrm{B}}})}$
% \State $\hat{v}_{\mathrm{B}, i} = 1 - \left(\dfrac{I_1(|\zeta_{\bar{\mathrm{B}}}|)}{I_1(|\zeta_{\bar{\mathrm{B}}}|}\right)^2$

\Until{Num. of iterations reaches $N_{\mathrm{Iter}}$}

\vspace{0.5mm}
\LineComment{// 4) Estimation of $\gamma$} 

\State Compute $\bm{V}_{\mathrm{\gamma}}$ and $\bm{V}_{\mathrm{\gamma}}^\mathrm{D}$ using \eqref{eq:Vg}.

\State $\hat{\bm{d}} \gets \mathrm{vec}\left(\hat{\bm A} \hat{\bm Z}^{\mathsf{T}} \hat{\bm B}\right), \quad q\gets \hat{\bm{d}}^\mathsf{H}
    \bm{\varSigma}_\mathrm{A}^{-1}
    \mathrm{vec}\left(\bm{R}_4\right)$
    
\State $u
    \gets
    \hat{\bm{d}}^\mathsf{H}
    \bm{\varSigma}_\mathrm{A}^{-1}
    \hat{\bm{d}},\quad
    s
    \gets
    \hat{\bm{d}}^\mathsf{H}
    \bm{\varSigma}_\mathrm{A}^{-1}
    \bm{V}_\mathrm{\gamma}^\mathrm{D}
    \bm{\varSigma}_\mathrm{A}^{-1}
    \hat{\bm{d}}.$
\State $|\check{\gamma}|^2 \gets \frac{|q|^2-u}{u^2+s}$

% \State $|\hat{\gamma}|
%     =\sqrt{
%         \frac{1}{M_\mathrm{A}M_\mathrm{B}}
%         \sum_{i\in\mathcal{A}}\sum_{j\in\mathcal{B}}
%         \left(
%             \bigl|\acute{\bm{R}}_4(i,j)\bigr|^{2}
%             -
%             \frac{\sigma^2}{2\bigl|\hat{\bm Z}(j,i)\bigr|^2}
%         \right)
%         }$
%\State Compute $\bm v_{\gamma}$ using \eqref{eq:vGij}.

\State $\psi_{\gamma} \gets \hat{\bm{d}}^{\mathsf{H}} \bm{V}_{\mathrm{\gamma}}^{-1} \hat{\bm{d}}$

\State $\bar{\gamma}
\gets
\frac{1}{\psi_{\gamma}}\hat{\bm{d}}^{\mathsf{H}}\bm{V}_{\mathrm{\gamma}}^{-1} \mathrm{vec}\left(\bm{R}_4\right)$

\State $\zeta_{\mathrm{\gamma}} = \frac{2\hat{|\gamma|}}{\bar{v}_{\gamma}}\bar{\gamma}$

\State $\hat\gamma
\gets
{|\check{\gamma}|}\frac{I_1(|\zeta_{\mathrm{\gamma}}|)}{I_0(|\zeta_{\mathrm{\gamma}}|)}
e^{\mathrm{j}\cdot\mathrm{arg}\left(\zeta_{\mathrm{\gamma}}\right)}$
% \vspace{0.5mm}
% \LineComment{// Termination}
% \State $\hat{\gamma} = \dfrac{\bm t^{\mathsf{H}}(\mathrm{Diag}(\bm v_{\mathrm{eff},\gamma}))^{-1}\bm r_4}{1 + \bm t^{\mathsf{H}}(\mathrm{Diag}(\bm v_{\mathrm{eff},\gamma}))^{-1}\bm t}$ 
\end{algorithmic}
\end{algorithm}

With the above derivations, the proposed algorithm is specified and summarized in Algorithm~\ref{alg:mmse}.

\subsection{Complexity Analysis}

Finally, we evaluate the computational complexity of the proposed \ac{MMSE} algorithm (Algorithm \ref{alg:mmse}) in comparison with the basic \ac{NLS} (Algorithm \ref{alg:nls}) and the alternating-optimization \ac{NLS} (Algorithm \ref{alg:aonls}).
For all methods, the estimation of $\bm{Z}$ requires computing the best rank-one approximation of $\bm{R}_2$ via the operator $\mathcal{S}\{\cdot\}$, which, in the worst case, generally involves a full \ac{SVD}.
Hence, the computational complexity is on the order of
\begin{equation}
\label{eq:C_rank-one}
    C_\mathcal{S} = \mathcal{O}\left(M_{\mathrm{A}}M_{\mathrm{B}}\min\{M_{\mathrm{A}},M_{\mathrm{B}}\}\right).
\end{equation}
However, in practice, Krylov subspace methods such as the power method or the Lanczos algorithm can be employed to efficiently compute the largest singular value and its corresponding singular vectors iteratively, which reduces the computational cost to $\mathcal{O}(\tau M_{\mathrm{A}}M_{\mathrm{B}})$, where $\tau$ denotes the number of iterations and is typically a small constant in practice.

%This section evaluates the computational complexity, in terms of Big-O order, of Basic NLS (Algorithm \ref{alg:nls}), Alternating-opt. NLS (Algorithm \ref{alg:aonls}), and the proposed MMSE method described in Algorithm Algorithm \ref{alg:nmmse}.
% 
%Since the rank-one approximation $\mathcal{S}\{\cdot\}$ commonly appearing in all methods corresponds to computing the dominant singular component, we adopt the following upper bound when a full SVD is used:
  
\subsubsection{NLS (Algorithm \ref{alg:nls})}

For the basic \ac{NLS} algorithm, the dominant computational cost arises from the rank-one approximation for estimating $\bm Z$ and the $N_{\mathrm{Iter}}$ alternating updates for estimating $\bm A$ and $\bm B$.
Since $\bm{A}$ and $\bm{B}$ are diagonal matrices, their updates reduce to inner-product computations and element-wise operations; hence, the computational complexity per iteration for updating $\hat{\bm{A}}$ and $\hat{\bm{B}}$ is on the order of $\mathcal{O}\left(M_{\mathrm{A}}M_{\mathrm{B}}\right)$.
Moreover, the final estimation of $\gamma$ is also performed using only element-wise operations and trace computations, which requires $\mathcal{O}\left(M_{\mathrm{A}}M_{\mathrm{B}}\right)$ complexity.
Therefore, the overall computational complexity is given by
\begin{equation}
\label{eq:C_NLS}
    C_{\mathrm{NLS}} = \mathcal{O}\left(C_\mathcal{S} + N_{\mathrm{Iter}}M_{\mathrm{A}}M_{\mathrm{B}}\right).
\end{equation}

\subsubsection{Alternating-optimization. NLS (Algorithm \ref{alg:aonls})}

The alternating-optimization \ac{NLS} first obtains an initial estimate using the basic \ac{NLS}, and then improves the estimation accuracy through alternating optimization with outer iterations.
In each outer iteration, the rank-one approximation appears once for updating $\hat{\bm{Z}}$, and the inner iterations for updating $\hat{\bm{A}}$ and $\hat{\bm{B}}$ are repeated $N_{\mathrm{Iter}}$ times.
Compared with Algorithm \ref{alg:nls}, the newly introduced update of $\hat{\bm{H}}$ is formulated as a matrix inversion based on the least-squares criterion.
However, since $\bm{A}$ and $\bm{B}$ are diagonal matrices, this operation can be implemented element-wise, which reduces the computational complexity to the order of $\mathcal{O}\left(M_{\mathrm{A}}M_{\mathrm{B}}\right)$.
%Furthermore, although the update of $\bm H$ is expressed via matrix inversion based on least squares, exploiting the fact that $\bm A$ and $\bm B$ are diagonal allows it to be implemented with element-wise operations, so that the dominant term can be reduced to $\mathcal{O}\left(M_{\mathrm{A}}M_{\mathrm{B}}\right)$.
% 
Including the cost of the initial estimation by the basic \ac{NLS}, the overall computational complexity is given by
\begin{equation}
\label{eq:C_AO-NLS}
    C_{\mathrm{AO-NLS}} = \mathcal{O}\left((N_{\mathrm{opt}}+1)(C_\mathcal{S} + N_{\mathrm{Iter}}M_{\mathrm{A}}M_{\mathrm{B}})\right),
\end{equation}
where $N_{\mathrm{opt}}$ denotes the number of outer iterations required for convergence, \textit{i.e.}, until the decrease in the cost function becomes sufficiently small.

\subsubsection{MMSE (Algorithm \ref{alg:mmse})}

In Algorithm \ref{alg:mmse}, the dominant factors determining the computational complexity are the estimation of $\bm{A}$ and $\bm{B}$, as well as the estimation of $\gamma$.
Note that the denoising operations based on the von Mises denoiser are all performed in a scalar-wise manner and can be efficiently implemented using table lookups of the modified Bessel functions.
As discussed above, the proposed \ac{MMSE} algorithm is generalized to incorporate knowledge of the covariance matrices of the observation noise, and its computational complexity largely depends on how much of this information is exploited in the estimation process.
In the following, we evaluate two cases: i) using the full covariance matrices in \eqref{eq:covariance}, and ii) using only their diagonal elements (\textit{i.e.}, the variances).

\textbf{i) Full covariance matrices:} When the full covariance matrix is utilized, the estimation of $\bm{A}$ involves matrix inversions of $\bm V_{\mathrm{A},i}$ for all $i\in\mathcal{A}$, which results in a computational complexity on the order of $\mathcal{O}\left(M_{\mathrm{A}}M_{\mathrm{B}}^3\right)$ per update.
Similarly, the update of $\hat{\bm{B}}$ requires $\mathcal{O}\left(M_{\mathrm{A}}^3M_{\mathrm{B}}\right)$ operations.
These alternating updates are repeated $N_\mathrm{Iter}$ times.
Furthermore, the estimation of $\gamma$ involves a matrix inversion of $\bm V_{\gamma}\in\mathbb{C}^{M_{\mathrm{A}}M_{\mathrm{B}}\times M_{\mathrm{A}}M_{\mathrm{B}}}$, whose computational complexity is on the order of $\mathcal{O}\left(M_{\mathrm{A}}^3M_{\mathrm{B}}^3\right)$.
Therefore, the overall computational complexity of the algorithm is given by
\begin{eqnarray}
\!\!\!\!\!\!&\!\!\!\!\!\!\!\!&\!\!\!\!\!\!
C_{\mathrm{MMSE}} \nonumber \\
\!\!\!\!\!\!&\!\!\!\!\!\!\!\!&\!\!\!\!\!\!
= 
\mathcal{O}\Big(
    C_\mathcal{S}
    + N_{\mathrm{Iter}}\left(
        M_{\mathrm{A}}M_{\mathrm{B}}^3
        + M_{\mathrm{A}}^3M_{\mathrm{B}}
    \right) + M_{\mathrm{A}}^3M_{\mathrm{B}}^3
\Big).
\end{eqnarray}
Since the covariance matrices represent long-term statistics, they typically vary slowly over time.
Hence, the matrix inversion associated with $\bm{\varSigma}_\mathrm{A}$ can be regarded as an offline operation and excluded from the online complexity evaluation.
In this case, the estimation of $\gamma$ is dominated by matrix–vector multiplications, and the overall computational complexity can be reduced to
\begin{eqnarray}
\!\!\!\!\!\!&\!\!\!\!\!\!\!\!&\!\!\!\!\!\!
C_{\mathrm{MMSE}} \nonumber \\
\!\!\!\!\!\!&\!\!\!\!\!\!\!\!&\!\!\!\!\!\!
= 
\mathcal{O}\Big(
    C_\mathcal{S}
    + N_{\mathrm{Iter}}\left(
        M_{\mathrm{A}}M_{\mathrm{B}}^3
        + M_{\mathrm{A}}^3M_{\mathrm{B}}
    \right) + M_{\mathrm{A}}^2M_{\mathrm{B}}^2
\Big).
\end{eqnarray}
It is worth noting that achieving improved estimation accuracy by exploiting the full covariance information inevitably comes at the cost of increased computational complexity.

\textbf{ii) Diagonal variance matrices:} When only the diagonal elements of the covariance matrices are used, the matrix inversions of $\bm V_{\mathrm{A},i}$, $\bm V_{\mathrm{B},j}$, and $\bm V_\mathrm{\gamma}$ are simplified to element-wise reciprocal operations.
Consequently, the per-iteration complexity for estimating $\bm A$ and $\bm B$, as well as for estimating $\gamma$, is reduced to $\mathcal{O}\left(M_{\mathrm{A}}M_{\mathrm{B}}\right)$, which is comparable to that of the \ac{NLS}-based algorithms.
Therefore, in this case, the computational complexity is given by
\begin{equation}
\label{eq:C_MMSE-diag}
    C_{\mathrm{MMSE}} = \mathcal{O}\left(C_\mathcal{S} + N_{\mathrm{Iter}}M_{\mathrm{A}}M_{\mathrm{B}}\right).
\end{equation}
In other words, when the prior knowledge of the observation noise is limited to the variance matrix, the proposed \ac{MMSE} algorithm operates with a computational complexity on the same order as that of the basic \ac{NLS} algorithm.

\section{Performance Assessment}
\label{chap:simu}

%\subsection{Comparison with Conventional Methods}

%Throughout this paper, $\bm{G}$ is modeled as a Rayleigh fading channel whose all entries are \ac{i.i.d.} complex Gaussian random variables with zero mean and unit variance, \textit{i.e.}, $\mathcal{CN}(0,\sigma_\mathrm{G}^2)$. \egl{I suggest wait with introducing this assumption until it is really needed.}\tktk{I agree. In fact, this assumption is not strictly necessary for the algorithm. We will use this in the simulation settings.} \egl{sounds good}

% The vectors $\bm{h}$ and $\bm{g}$ are chosen as random columns of the $M_{\mathrm{A}} \times M_{\mathrm{A}}$ and $M_{\mathrm{B}} \times M_{\mathrm{B}}$ \ac{DFT} matrices, respectively, thereby modeling \ac{LoS} propagation between $\mathrm{A}$ and $\mathrm{R}$ and between $\mathrm{R}$ and $\mathrm{B}$. \egl{same here, when is this assumption needed? It is quite specific. I would wait with introducing it until it's needed.}\tktk{As you pointed out, this assumption is not necessary. Thanks to this comment, we realized that the algorithm can be designed even without this assumption.}\egl{OK!}

% The inverse noise variance, $1/\sigma^2$, corresponds to the \ac{SNR} observed at each antenna of $\mathrm{B}$ when the repeater $\mathrm{R}$ is inactive and the single antenna of $\mathrm{A}$ transmits a unit-power pilot signal.

%%%%%%%%%%%%%%%%%%%%%%%%%%%%%%%%%%%%%%
% FIG: RMSE
%%%%%%%%%%%%%%%%%%%%%%%%%%%%%%%%%%%%%%
\begin{figure*}[!t]
\begin{center}
	\subfloat[$(M_{\mathrm{A}},M_{\mathrm{B}}) = (4,3)$.]{
	\includegraphics[width=0.98\columnwidth,keepaspectratio=true]{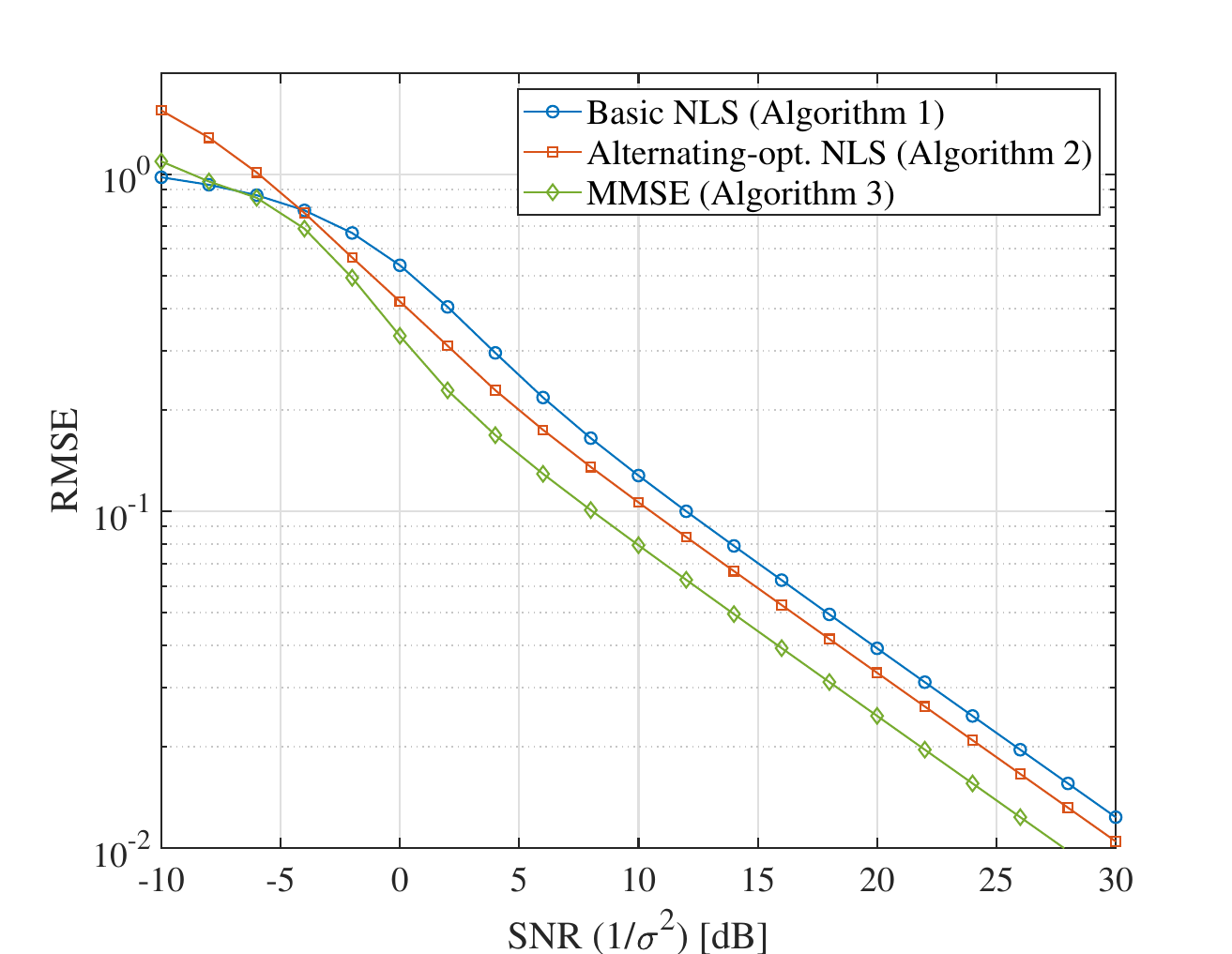}
	\label{fig:(4-3)}
	}
 	\hspace{1mm}
	\subfloat[$(M_{\mathrm{A}},M_{\mathrm{B}}) = (8,8)$.]{
	\includegraphics[width=0.98\columnwidth,keepaspectratio=true]{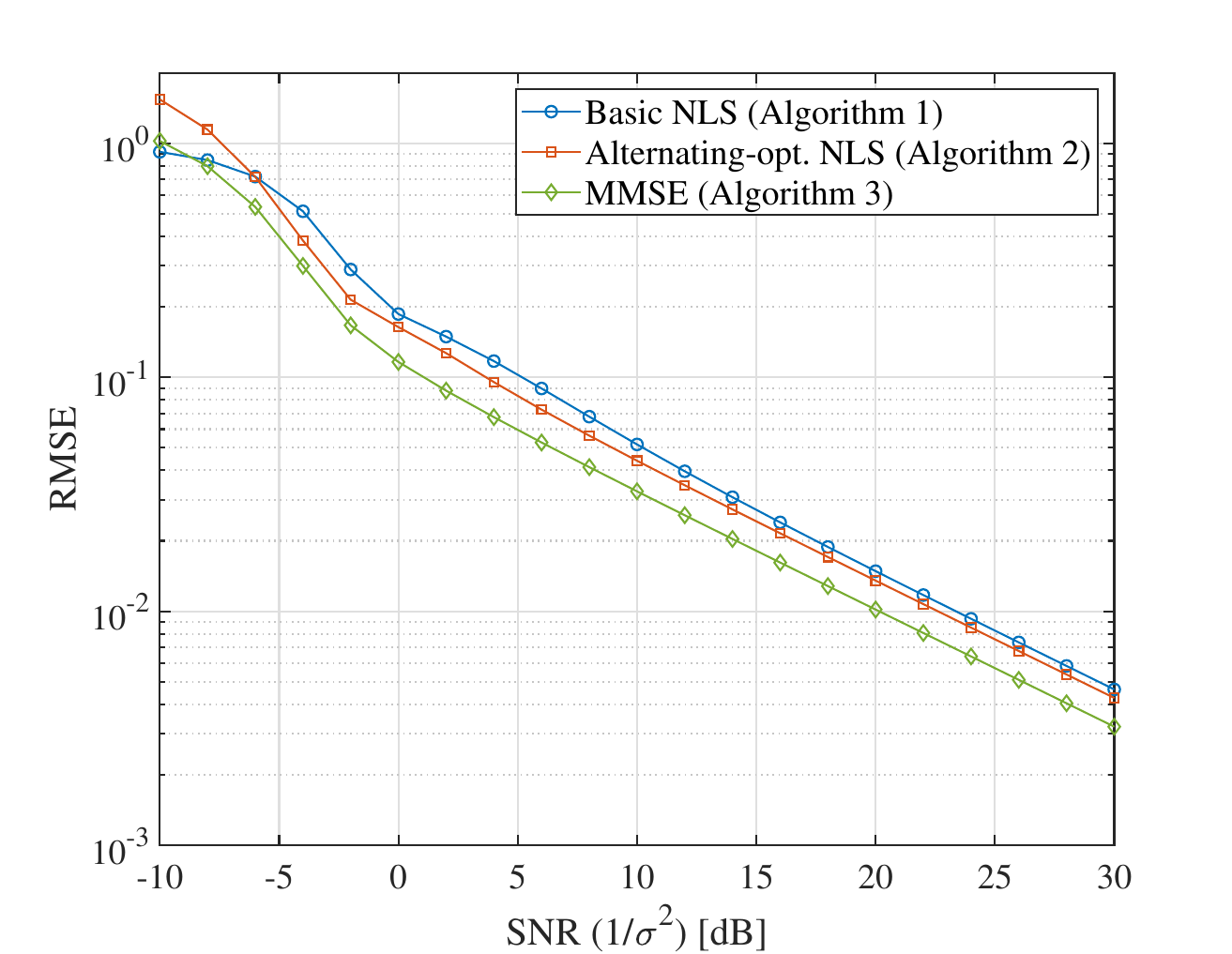}
	\label{fig:(8-8)}
	}
 	\vspace{1mm}
	\subfloat[$(M_{\mathrm{A}},M_{\mathrm{B}}) = (32,16)$.]{
	\includegraphics[width=0.98\columnwidth,keepaspectratio=true]{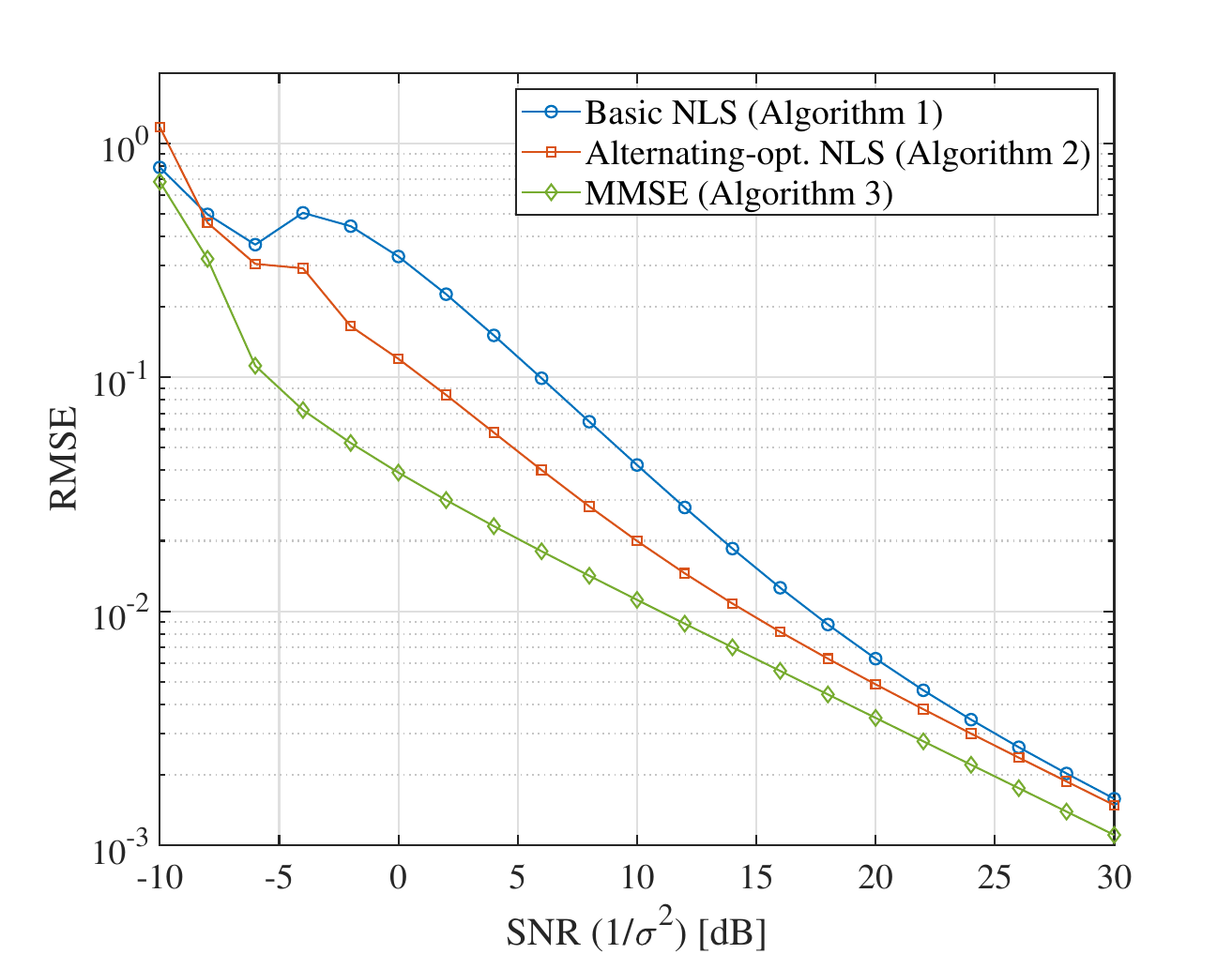}
	\label{fig:(32-16)}
	}
 	\hspace{1mm}
	\subfloat[$(M_{\mathrm{A}},M_{\mathrm{B}}) = (64,32)$.]{
	\includegraphics[width=0.98\columnwidth,keepaspectratio=true]{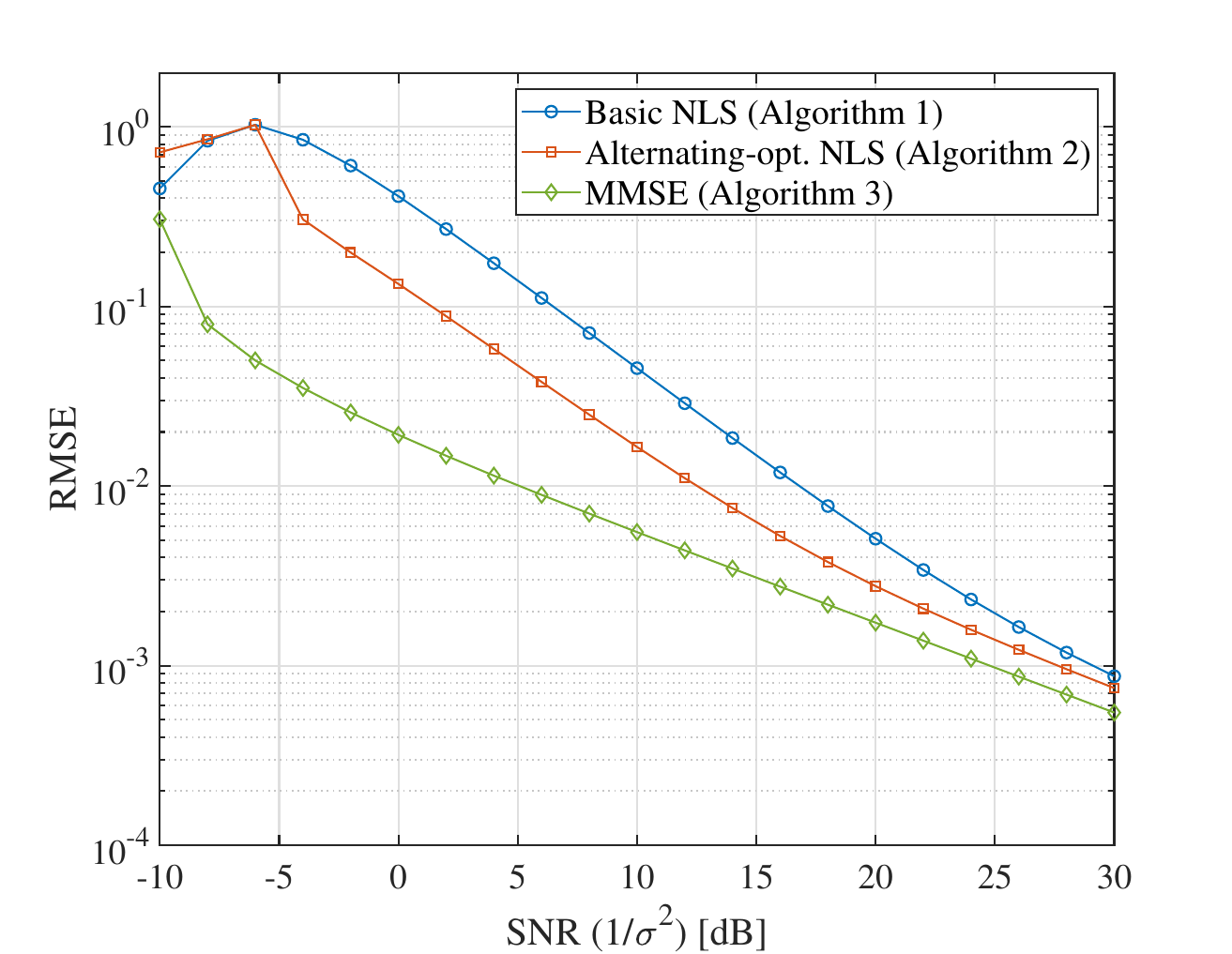}
	\label{fig:(64-32)}
	}
    \vspace{2mm}
	\caption{\Ac{RMSE} of $\hat{\gamma}$ for different antenna configurations.
    % \egl{increase the font size a bit in the figures?}
    % \tktk{Thank you. It has been revised.}
    }
	\label{fig:rmse}
	\vspace{-4mm}
\end{center}
\end{figure*}
%%%%%%%%%%%%%%%%%%%%%%%%%%%%%%%%%%%%%%

To verify the effectiveness of the proposed method, computer simulations are conducted.
As an illustrative example, following~\cite{Larsson2024}, the system parameters listed in Table~\ref{tab:param} are adopted.
The vectors $\bm{h}$ and $\bm{g}$ are chosen as random columns of the $M_{\mathrm{A}} \times M_{\mathrm{A}}$ and $M_{\mathrm{B}} \times M_{\mathrm{B}}$ \ac{DFT} matrices, respectively, which model \ac{LoS} propagation between $\mathrm{A}$ and $\mathrm{R}$, and between $\mathrm{R}$ and $\mathrm{B}$.
The direct channel $\bm{G}$ between $\mathrm{A}$ and $\mathrm{B}$, excluding $\mathrm{R}$, consists of \ac{i.i.d.} $\mathcal{CN}(0,1)$ entries, modeling Rayleigh fading.
The diagonal elements of $\bm{T}_\mathrm{A}$, $\bm{R}_\mathrm{A}$, $\bm{T}_\mathrm{B}$, and $\bm{R}_\mathrm{B}$ in \eqref{eq:rc} have unit magnitude and phases uniformly distributed over $[-\pi, \pi]$~\cite{Vieira2014GCOM}.
%\egl{I think in practice, the phases within each array would be approximately constant but there may be a phase error between the arrays. I.e. $t_{Ai} \approx c$ and $t_{Bi} \approx c e^{j\phi}$ for some $\phi$ uniformly distributed over $[-\pi, \pi]$. It seems a bit of a stretch that the phase between antennas in an array would be drift relative to each other so fast that these phases are i.i.d. Or maybe I misunderstand what you're modeling here}
%\egl{But these elements are not independent, correct? Could we clarify that? What I meant is that perhaps *within an array* we don't have large phase errors (reciprocity calibration within the array can be done infrequently) but *between the two arrays*the phase may drift more quickly because of oscillator drifts. }
%
This setting corresponds to a phase-dominant non-reciprocity scenario, where amplitude mismatches are negligible and non-reciprocity is mainly caused by phase drift/phase errors, as explained in Section II.
%
%Although the phase errors are modeled as Gaussian random variables in \eqref{eq:rc_model}, their variances are highly scenario-dependent in practice; hence, we adopt a worst-case evaluation for the phase uncertainty, which can be interpreted as the limiting case $\sigma_\theta^2\to \infty$.
%
The entries of the noise matrices $\bm W_{\mathrm{B}}^0$, $\bm W_{\mathrm{A}}^0$, $\bm W_{\mathrm{B}}^1$, and $\bm W_{\mathrm{A}}^1$ in \eqref{eq:measurements} are independently drawn from $\mathcal{CN}(0,\sigma^2)$, where the inverse noise variance, $1/\sigma^2$, represents the \ac{SNR} measured at any antenna of $\mathrm{B}$ when a single antenna of $\mathrm{A}$ transmits with unit-power and the repeater $\mathrm{R}$ is inactive.
This corresponds to modeling the observation noise as \ac{AWGN}, which is equivalent to setting $\bar{\bm{\varOmega}}_\mathrm{A}=\bar{\bm{\varPsi}}_\mathrm{B} = \sigma^2\bm{I}_{M_\mathrm{A}}$ and $\bar{\bm{\varOmega}}_\mathrm{B}=\bar{\bm{\varPsi}}_\mathrm{A} = \sigma^2\bm{I}_{M_\mathrm{B}}$ in \eqref{eq:covariance}.

%%%%%%%%%%%%%%%%%%%%%%%%%%%%%%%%%%%%%%
% Table: Pram.
%%%%%%%%%%%%%%%%%%%%%%%%%%%%%%%%%%%%%%
\begin{table}[t]
%\vspace{-5mm}
  \caption{Simulation Parameters}
  \label{tab:param}
  \centering
  \scalebox{0.9}{
  \begin{tabular}{c c c} 
    \hline
    Parameter & Symbol & Value \\
    \hline
    %Fading variance & $\sigma^2_{\mathrm{G}}$ & $1$ \\
    Forward gain of the repeater & $|\alpha|^2$ [dB]& $10$ \\
    Reverse gain of the repeater & $|\beta|^2$ [dB]& $10$ \\
    Num. of iterations for $\bm A, \bm B$ estimation & $N_{\mathrm{Iter}}$ & $100$ \\
    %Num. of outer iterations in Algorithm \ref{alg:aonls}  & $N_{\mathrm{opt}}$ & $25$ \\
    Num. of simulation trials & - & $10^5$ \\
    \hline
  \end{tabular}
  }
\vspace{-5mm}
\end{table}
%%%%%%%%%%%%%%%%%%%%%%%%%%%%%%%%%%%%%%

Under the above settings, it is worth noting that the amount of additional prior information exploited by the proposed \ac{MMSE} algorithm, compared with the \ac{NLS}-based methods, is minimal, and its computational complexity is of the same order, as indicated in \eqref{eq:C_MMSE-diag}.

\subsection{RMSE Performance}

Fig.~\ref{fig:rmse} shows the \ac{RMSE} of $\hat{\gamma}$ as a function of the \ac{SNR} for the \ac{NLS} (Algorithm \ref{alg:nls}), the alternating-optimization \ac{NLS} (Algorithm \ref{alg:aonls}), and the proposed \ac{MMSE} (Algorithm \ref{alg:mmse}).
To evaluate the impact of the transmit and receive array sizes, four antenna configurations are considered, namely, $(M_\mathrm{A},M_\mathrm{B})=(4,3)$, $(8,8)$, $(32,16)$, and $(64,32)$.
These results clearly highlight the fundamental differences between the proposed \ac{MMSE} algorithm and the existing \ac{NLS}-based methods.

We first focus on the results for relatively small array sizes shown in Figs.~\ref{fig:rmse}(a) and \ref{fig:rmse}(b).
In the moderate-to-high \ac{SNR} regime (\ac{SNR} $>5$ dB), all methods are observed to operate stably.
For both the \ac{NLS} and \ac{MMSE}-based methods, the \ac{RMSE} decreases approximately linearly with \ac{SNR} (\textit{e.g.}, a $20$ dB increase in \ac{SNR} leads to roughly a tenfold reduction in \ac{RMSE}).
This behavior can be explained by the fact that the observation noise is Gaussian and the curvature of the log-likelihood function, \textit{i.e.}, the Fisher information, scales proportionally with \ac{SNR}.
In this regime, the local linearization around the true parameter is valid and the Jacobian matrix is full rank, so that the error scaling of the \ac{NLS} and \ac{MMSE} estimators coincides, which is   consistent with estimation theory.
Nevertheless, since the \ac{MMSE} estimator exploits the prior statistical modeling as well as the noise variance, it consistently achieves better performance in terms of the constant factor.
Specifically, the proposed \ac{MMSE} method provides approximately $4$ dB and $2$ dB performance gains over the basic \ac{NLS} and the alternating-optimization \ac{NLS}, respectively.
It is also worth noting that the alternating-optimization \ac{NLS} incurs an additional computational cost proportional to $N_\mathrm{opt}$, whereas the proposed \ac{MMSE} method achieves these gains with a computational complexity comparable to that of the basic \ac{NLS}. %which further highlights its practical advantage.

We next consider the results for relatively large array sizes shown in Figs.~\ref{fig:rmse}(c) and \ref{fig:rmse}(d), where the difference between \ac{NLS} and \ac{MMSE} becomes more pronounced.
As the array dimension increases, the \ac{NLS} estimator, which essentially solves a geometric optimization problem, becomes increasingly sensitive to high-dimensional effects.
In particular, the singular value distribution of the Jacobian matrix tends to deteriorate, and the curvature of the log-likelihood function becomes small along many directions.
As a consequence, the error covariance matrix of the \ac{NLS} estimator becomes ill-conditioned, and the resulting \ac{RMSE} no longer scales proportionally with the \ac{SNR}, not even in the  high-\ac{SNR} regime.
Since calibration tasks are typically performed under low-\ac{SNR} conditions, such behavior of the \ac{NLS} estimator is undesirable in practical systems.

In contrast, for the \ac{MMSE} estimator, the prior distribution introduces a regularization term into the Hessian matrix, which keeps the estimation problem well-conditioned.
As a result, even in high-dimensional and low-\ac{SNR} regimes, the \ac{RMSE} exhibits a stable behavior and remains close to the ideal $1/\mathrm{SNR}$ scaling.
Moreover, as the dimensionality increases, the posterior distribution tends to concentrate due to the concentration-of-measure phenomenon, which leads to a self-averaging effect that mitigates the impact of local nonlinearities and noise fluctuations.
This effect further stabilizes the \ac{MMSE} estimator and improves the accuracy of the Gaussian approximations based on the \ac{CLT}, which are extensively employed in the proposed algorithm.
Consequently, the proposed \ac{MMSE} method follows the theoretical $1/\mathrm{SNR}$ scaling even at low \ac{SNR}\footnote{Although these numerical results are consistent with theoretical scaling laws, deriving a rigorous closed-form \ac{CRLB} for the present bilinear inference problem with latent variables is nontrivial and is left for future work.}.
In particular, for the largest array configuration in Fig.~\ref{fig:rmse}(d), the proposed \ac{MMSE} method achieves performance gains of approximately $14$ dB and $10$ dB over the basic \ac{NLS} and the alternating-optimization \ac{NLS}, respectively, at \ac{RMSE} $=10^{-1}$, indicating that large-scale arrays are especially beneficial for the proposed \ac{MMSE} approach.

\subsection{Convergence Behavior of $\bm A$ and $\bm B$ Estimation}

%%%%%%%%%%%%%%%%%%%%%%%%%%%%%%%%%%%%%%
% FIG: Convergence
%%%%%%%%%%%%%%%%%%%%%%%%%%%%%%%%%%%%%%
\begin{figure}[!t]
\begin{center}
	\subfloat[$(M_{\mathrm{A}},M_{\mathrm{B}}) = (4,3)$.]{
	\includegraphics[width=0.98\columnwidth,keepaspectratio=true]{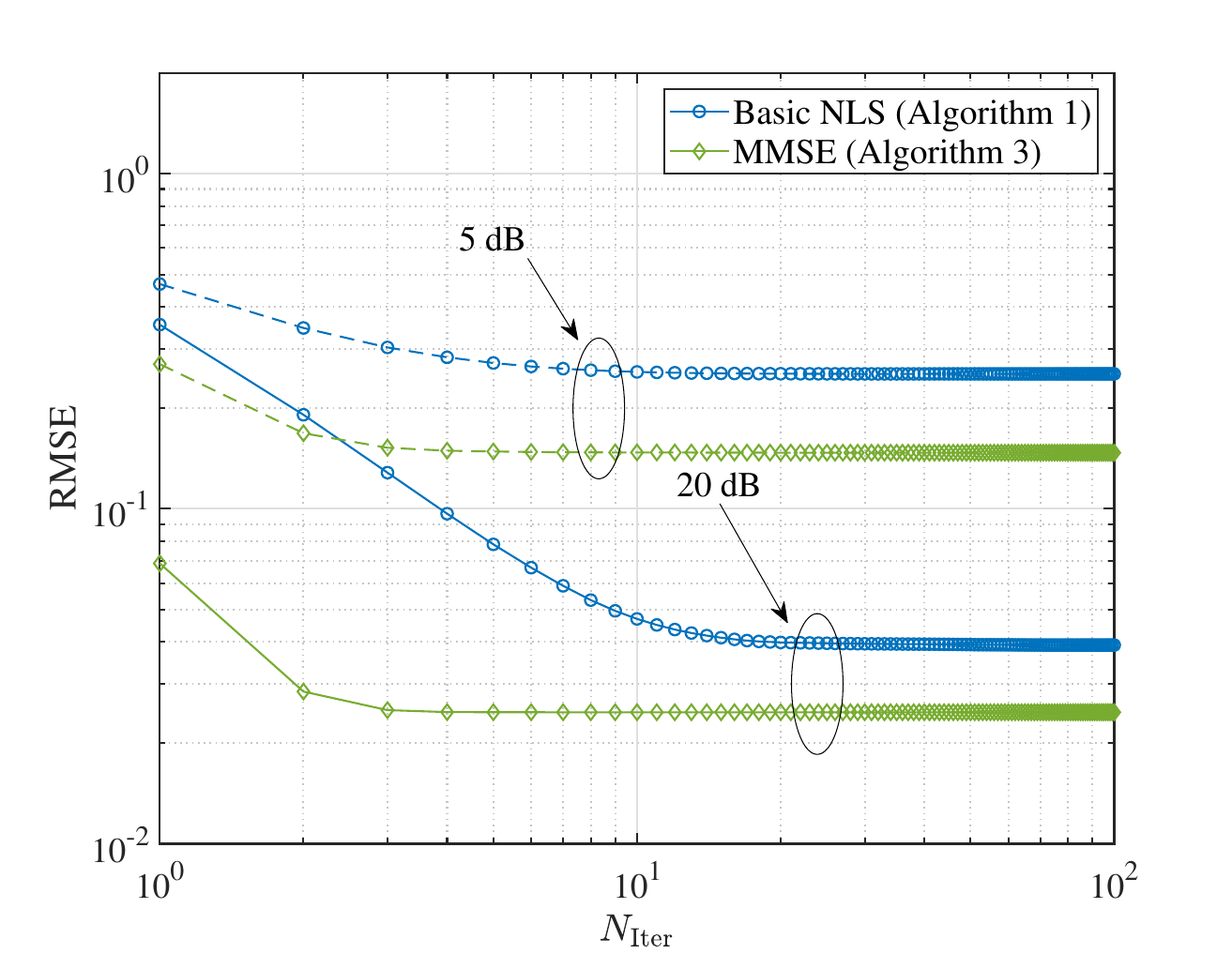}
	\label{fig:20dB_(4-3)}
	}
    \vspace{0mm}
	\subfloat[$(M_{\mathrm{A}},M_{\mathrm{B}}) = (64,32)$.]{
	\includegraphics[width=0.98\columnwidth,keepaspectratio=true]{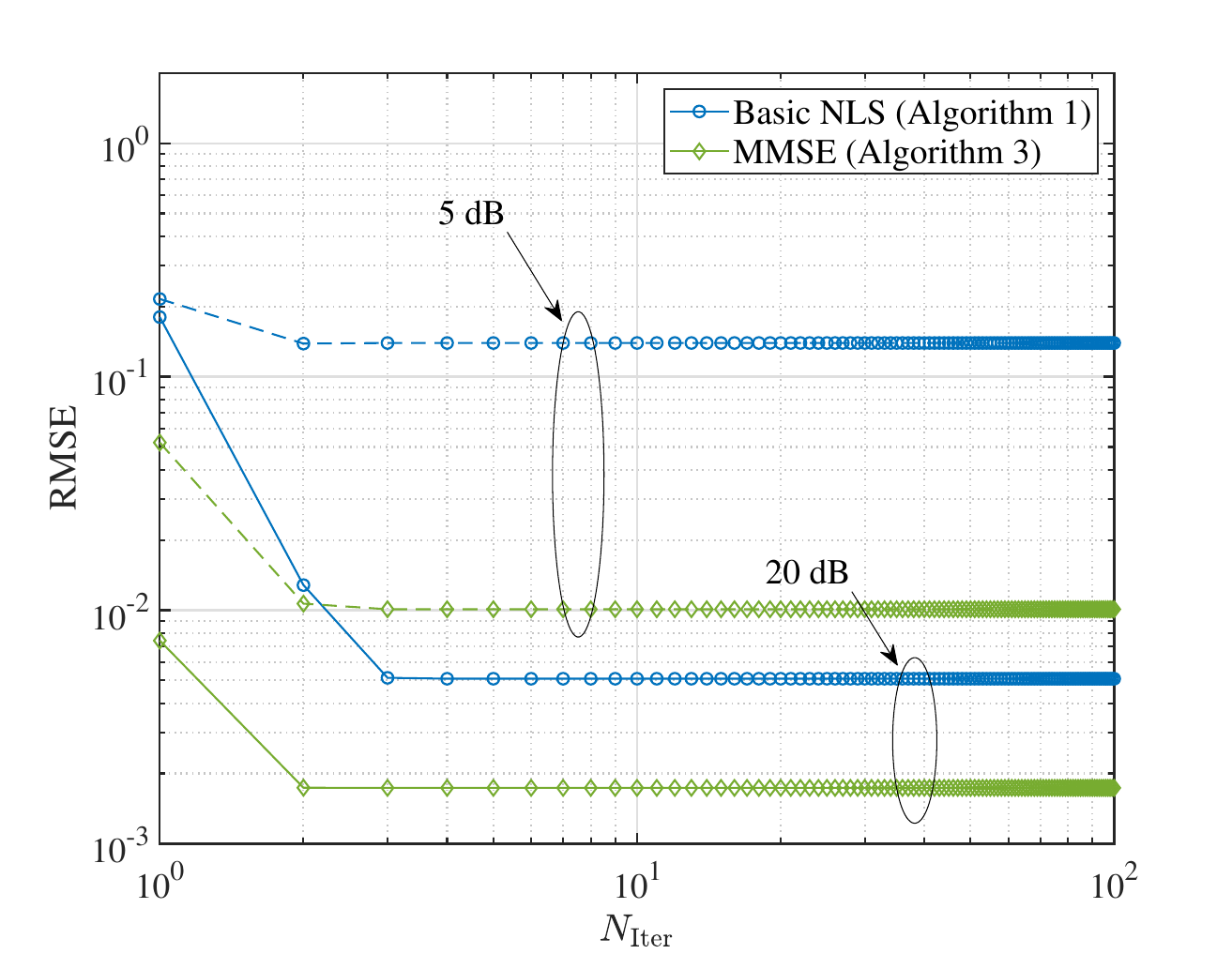}
	\label{fig:20dB_(64-32)}
	}
    \vspace{2mm}
	\caption{Iterative convergence behavior of \ac{NLS}-based and \ac{MMSE}-based reciprocity calibration algorithms.}
	\label{fig:convergence}
	\vspace{-3mm}
\end{center}
\end{figure}
%%%%%%%%%%%%%%%%%%%%%%%%%%%%%%%%%%%%%%

To further clarify the behavior of the proposed algorithms, we next investigate their iterative convergence by focusing on the bilinear inference of $\bm{A}$ and $\bm{B}$.
Fig.~\ref{fig:convergence} shows the \ac{RMSE} of $\hat{\gamma}$ as a function of the number of iterations $N_\mathrm{Iter}$.
For the relatively small array size shown in Fig.~\ref{fig:convergence}(a), the estimation performance of the basic \ac{NLS} improves monotonically with the number of iterations and eventually converges to a stable solution.
However, even in the high-\ac{SNR} regime (\ac{SNR} $=20$ dB), approximately $20$ iterations are required to reach convergence.
For the large array configuration shown in Fig.~\ref{fig:convergence}(b), the convergence behavior of the \ac{NLS} deteriorates significantly in the low-\ac{SNR} regime.
In this case, increasing the number of iterations does not lead to substantial performance improvements, and the estimation error often saturates at a relatively high level.
This phenomenon can be attributed to the intrinsic nonconvexity of the bilinear \acs{LS} problem and the high-dimensional geometric effects discussed previously.
%
%Specifically, the likelihood function becomes nearly flat along many directions, and the optimization landscape contains numerous shallow local minima, so that iterative NLS algorithms fail to extract additional useful information from further iterations.

In contrast, the proposed \ac{MMSE} method exhibits a fundamentally different convergence behavior.
Since the bilinear inference step is formulated as a sequence of Bayesian updates, each iteration effectively refines the posterior distribution by incorporating additional statistical information.
As a result, the estimation performance improves steadily with the number of iterations and converges rapidly, typically within $4$ iterations for all considered scenarios.
In particular, for the large-scale array shown in Fig.~\ref{fig:convergence}(b), an even faster convergence is observed due to the self-averaging effect induced by high dimensionality, which is consistent with the results discussed in the previous subsection.

%These results indicate that the proposed \ac{MMSE} algorithm converges within a small number of iterations and maintains robust \ac{w.r.t.} both the array size and the \ac{SNR}.
%
%Therefore, from both computational and performance perspectives, the proposed approach provides a highly efficient and reliable solution for reciprocity calibration.

\section{Conclusion}
\label{chap:concl}
% 
%Root ../main.tex
% 
%===============
% Section 5.
%===============
%
This paper proposed a novel Bayesian reciprocity calibration framework for repeater-assisted \ac{MIMO}. %  systems under a unified \ac{MMSE} criterion.
By explicitly incorporating a prior statistical model of the reciprocity coefficients and the measurement noise, the proposed method enables a principled Bayesian formulation of the calibration problem, resulting in a robust estimation framework.
To effectively exploit the phase structure of the reciprocity coefficients, a von Mises denoiser was developed.
In addition, a \ac{MoM}-based approach was introduced to adaptively estimate the required second-order statistics, avoiding the need for unavailable long-term prior information.
Simulation results demonstrated that the proposed \ac{MMSE} algorithm can significantly outperform existing deterministic \ac{NLS}-based methods in terms of estimation accuracy, at comparable computational complexity.
Moreover, the proposed method exhibits fast and stable convergence and remains robust \ac{w.r.t.} both the array size and the \ac{SNR}.
Overall, it provides an efficient and practically implementable solution for reciprocity calibration in repeater-assisted \ac{MIMO} systems.

\appendices
{%\color{blue}
\section{Denoiser Design for von Mises Prior Distribution}
\label{chap: deriv_gabp}

%This subsection derives the \ac{MMSE} estimator for complex random variables that are uniformly distributed on the unit circle. 
%
%This estimator plays a central role in the proposed \ac{NMMSE} algorithm, where the diagonal elements of matrices such as $\bm{A}$ and $\bm{B}$ are constrained to have unit magnitude. 

Consider the following \ac{AWGN} observation model:
\begin{equation}
\label{eq:vonmodel}
    y = x + w,\quad w \sim \mathcal{CN}(0,v),
\end{equation}
where $x \triangleq r e^{\mathrm{j}\theta}$ is a point on the circle of radius $r$ in the complex plane, and its phase $\theta$ follows the von Mises distribution
\begin{equation}
\label{eq:von}
p_\mathrm{\theta}(\theta) = \frac{1}{2\pi I_0(\beta)}\exp(\beta\cos(\theta-\mu)),
\end{equation}
with $I_n(\cdot)$ denoting the modified Bessel function of the first kind of order $n$ and the parameters $\mu$ and $\beta$ representing the location and concentration of the distribution, respectively.
These parameters are analogous to the mean and precision of a Gaussian distribution.

Under the assumption that the observation $y$ and the prior distribution in \eqref{eq:von} are known, we derive the \ac{MMSE} estimator of 
$x$ based on the model in \eqref{eq:vonmodel} as
\begin{equation}
\label{eq:mmse}
    \hat{x}_\mathrm{MMSE}
    =
    \mathbb{E}\left[x|y\right]
    =
    r\cdot \mathbb{E}\left[e^{\mathrm{j}\theta}|y\right].
\end{equation}

First, the likelihood function for a given observation $y$ is given by
\begin{eqnarray}
    p_{\mathsf{y|\theta}}(y|\theta)
    \!\!&\!\!=\!\!&\!\!
    \frac{1}{\pi v}
    \exp\left[-\frac{|y - re^{\mathrm{j}\theta}|^{2}}{v}\right] \nonumber \\
    \!\!&\!\!\propto\!\!&\!\!
    \exp\left[
    \frac{2r}{v}\Re\left\{ye^{-\mathrm{j}\theta}\right\}
    \right].
\end{eqnarray}

Next, using Bayes’ rule, the posterior distribution can be expressed as
\begin{eqnarray}
\label{eq:post}
    p_{\mathsf{\theta|y}}(\theta|y)
    \!\!&\!\!\propto\!\!&\!\!
    p_{\mathsf{y|\theta}}(y|\theta)p_\mathsf{\theta}(\theta) \nonumber \\
    \!\!&\!\!\propto\!\!&\!\!
    \exp\left[
    \frac{2r}{v}\Re\left\{ye^{-\mathrm{j}\theta}\right\}
    +
    \beta\cos\left(\theta-\mu\right)
    \right] \nonumber \\
    \!\!&\!\!=\!\!&\!\!
    \exp\left[
    \Re\left\{
    \left(
    \frac{2r}{v}y
    +
    \beta e^{\mathrm{j}\mu} 
    \right)
    e^{-\mathrm{j}\theta}
    \right\}
    \right]
    \nonumber \\
    \!\!&\!\!=\!\!&\!\!
    \exp\left[
    \Re\left\{
    \zeta\left(y,r,v,\mu,\beta\right)
    e^{-\mathrm{j}\theta}
    \right\}
    \right],
\end{eqnarray}
where
\begin{equation}
\label{eq:zeta}
    \zeta\left(y,r,v,\mu,\beta\right)
    \triangleq
    \frac{2r}{v}y
    +
    \beta e^{\mathrm{j}\mu}.
\end{equation}
From \eqref{eq:post} and \eqref{eq:zeta}, the posterior distribution can be again represented by the von Mises distribution as
\begin{equation}
\label{eq:post_von}
    p_{\mathsf{\theta|y}}(\theta|y)
    =
    \frac{1}{2\pi I_0(|\zeta|)}\exp\left[|\zeta|\cos\left(\theta-\mathrm{arg}\left(\zeta\right)\right)\right].
\end{equation}

Finally, substituting \eqref{eq:post_von} into \eqref{eq:mmse} yields
\begin{eqnarray}
\!\!&\!\!\!\!&\!\!
    \hat{x}_\mathrm{MMSE}
    =
    r\cdot \mathbb{E}\left[e^{\mathrm{j}\theta}|y\right] \nonumber \\
\!\!&\!\!\!\!&\!\!
\quad=
    r\cdot
    \int_0^{2\pi}
    e^{\mathrm{j}\theta}\cdot p_{\mathsf{\theta|y}}(\theta|y)d\theta \nonumber \\
\!\!&\!\!\!\!&\!\!
\quad=
    \frac{r e^{\mathrm{j}\cdot\mathrm{arg}\left(\zeta\right)} }{I_0(|\zeta|)}\cdot
    \underbrace{
    \frac{1}{2\pi}\int_{0}^{2\pi}
    e^{\mathrm{j}\theta}
    \exp\left[|\zeta|\cos\left(\theta\right)\right]d\theta
    }_{=I_1(|\zeta|)} \nonumber \\
\!\!&\!\!\!\!&\!\!
\quad=
r\frac{I_1(|\zeta|)}{I_0(|\zeta|)}
e^{\mathrm{j}\cdot\mathrm{arg}\left(\zeta\right)}
\triangleq \eta\left(y;v\right),
\end{eqnarray}
where the function $\eta$ represents the \ac{MMSE} solution in \eqref{eq:vonmodel} expressed as a function of $y$ and $v$, and such a formulation is often referred to as a \textit{\ac{BOD}}.
The mathematical manipulations involving the modified Bessel functions are provided in Appendix B.

For the \ac{BOD}, it is known that the following identity holds between the \ac{MMSE} solution and its \ac{MSE}:
\begin{equation}
        v\cdot\frac{\partial \eta(y;v)}{\partial y}
    = \mathbb{E}\left[|x-\eta(y;v)|^{2}|y\right].
\end{equation}
Thus, the posterior variance is expressed as
\begin{equation}
\mathbb{E}\left[|x-\hat{x}_\mathrm{MMSE}|^{2}|y\right]
=
    r^2\left(1 - \left(\frac{I_1(|\zeta|)}{I_0(|\zeta|)}\right)^{2}\right).
\end{equation}

\section{Derivation of the Modified Bessel Function}

For completeness, we derive the analytical form of the modified Bessel function $I_n(\kappa)$.
Starting from the expansion of the exponential term, we obtain
\begin{equation}
\label{eq:exp}
    \int_0^{2\pi} e^{\mathrm{j}n\psi}e^{\kappa\cos\psi}d\psi = 
\sum_{m=0}^\infty\frac{\kappa^m}{m!}\int_0^{2\pi} e^{\mathrm{j}n\psi}\cos^m\psi\ d\psi.
\end{equation}
Using Euler’s formula, $\cos^m\psi$ can be expressed as a sum of complex exponentials:
\begin{equation}
\cos^m\psi = \frac{1}{2^m}(e^{\mathrm{j}\psi} + e^{-\mathrm{j}\psi})^m =
\frac{1}{2^m}\sum_{k=0}^m
\begin{pmatrix}
m \\
k
\end{pmatrix}
e^{\mathrm{j}(2k-m)\psi}.
\end{equation}
Substituting this into the integral yields
\begin{equation}
    \int_0^{2\pi} e^{\mathrm{j}n\psi}\cos^m\psi\ d\psi
    =
    \frac{1}{2^m}\sum_{k=0}^m
\begin{pmatrix}
m \\
k
\end{pmatrix}
\int_0^{2\pi}
e^{\mathrm{j}(n+2k-m)\psi} d\psi.
\end{equation}
The integral
\begin{equation}
    \int_0^{2\pi}
    e^{\mathrm{j}q\psi} d\psi
    =
    \begin{cases}
        2\pi & q=0,\\
        0 & q\neq 0,
    \end{cases}
\end{equation}
implies that only the terms satisfying $n + 2k - m = 0$ contribute to the integral.
This condition can be parameterized as $m=n+2l$, and the integral is zero for all other values of $m$.
Therefore, for $m=n+2l$ we obtain
\begin{eqnarray}
    \int_0^{2\pi} e^{\mathrm{j}n\psi}\cos^m\psi d\psi = 
2\pi\cdot
\frac{(n+2l)!}{2^{n+2l}\cdot l!\cdot(n+l)!}.
\end{eqnarray}
Substituting this result back into \eqref{eq:exp}, we obtain
\begin{eqnarray}
\int_0^{2\pi}e^{\mathrm{j}n\psi}e^{\kappa\cos\psi}d\psi \!\!\!&=&\!\!\!
2\pi\sum_{l=0}^\infty \frac{1}{l!}\cdot\frac{1}{(n+l)!}\cdot\left(\frac{\kappa}{2}\right)^{n+2l} \nonumber\\
\!\!\!&=&\!\!\!
2\pi\left(\frac{\kappa}{2}\right)^{n}\sum_{l=0}^\infty \frac{1}{l!\cdot (n+l)!}\left(\frac{\kappa}{2}\right)^{2l} \nonumber\\
\!\!\!&=&\!\!\!
2\pi I_n(\kappa).
\end{eqnarray}
Using the standard definition, we thus arrive at the well-known series expansion.

% \section{\Acf{MoM} Based on \eqref{eq:vonmodel}}

% We again consider the model in \eqref{eq:vonmodel},
% \begin{equation}
%     y = r e^{j\theta} + w, \quad w \sim \mathcal{CN}(0,v),
% \end{equation}
% %
% where the phase $\theta$ is unknown and modeled as a random variable.
% %
% In this setting, when only independent samples $\{y_n\}_{n=1}^{N}$ of $y$ are available, and no realization of $\theta$ is observed, the \ac{MoM} can be employed to estimate the most plausible value of $r$ from these observations.

% Even though $\theta$ is unknown, the second moment of $y$ satisfies
% \begin{align}
%     \mathbb{E}\big[|y|^{2}\big]
%     &= \mathbb{E}\big[\,|r e^{j\theta} + w|^{2}\,\big] \\
%     &= r^{2} + v,
% \end{align}
% because $w$ is zero-mean and independent of $\theta$.  
% Hence,
% \begin{equation}
%     r^{2} = \mathbb{E}[|y|^{2}] - v.
% \end{equation}

% Replacing the expectation by the empirical average yields the following method-of-moments estimator:
% \begin{equation}
%     \hat{r}
%     = \sqrt{
%         \frac{1}{N}\sum_{n=1}^{N} |y_{n}|^{2} - v
%       }.
% \end{equation}

% This estimator can also be interpreted as a \ac{MoM}-based estimator for the Rice-distributed amplitude $|y|$.

% \section{Derivation of the CRLB}

% \appendices
% \input{TXT/appdix_calc_H}

\bibliographystyle{REF/IEEEtran}
\bibliography{REF/IEEEabrv,REF/conf_abbrv,REF/ref}

@string{ globecom = {Proc. GLOBECOM}}

@string{ icc = {Proc. ICC}}

@string{ pimrc = {Proc. PIMRC}}

@ARTICLE{Larsson2024,
  author={Larsson, Erik G. and Vieira, Joao and Frenger, Pål},
  journal={IEEE Wireless Commun. Lett.}, 
  title={Reciprocity Calibration of Dual-Antenna Repeaters}, 
  year={2024},
  volume={13},
  number={6},
  pages={1606-1610},
  doi={10.1109/LWC.2024.3382929}}

@ARTICLE{SaraComMag25,
  author  = {Sara Willhammar and Hiroki Iimori and Joao Vieira and
             Lars Sundstr{\"o}m and Fredrik Tufvesson and Erik G. Larsson},
  title   = {Achieving Distributed {MIMO} Performance with Repeater-Assisted Cellular Massive {MIMO}},
  journal = {IEEE Commun. Mag.},
  volume  = {63},
  number  = {3},
  pages   = {114--119},
  year    = {2025},
  month   = mar,
  doi     = {10.1109/MCOM.2025.10908557}
}

@article{Carvalho2025NCRIntro,
  author  = {Fco Italo G. Carvalho and Raul Victor de O. Paiva and
             Tarcisio F. Maciel and Victor F. Monteiro and
             Fco. Rafael M. Lima and Darlan C. Moreira and
             Diego A. Sousa and Behrooz Makki and
             Magnus {\AA}str{\"o}m and Lei Bao},
  title   = {{Network-controlled repeater -- An introduction}},
  journal = {{IEEE Commun. Stand. Mag.}},
  note    = {Early Access},
  year    = {2025},
  month   = aug,
  doi     = {10.1109/MCOMSTD.2025.3595035},
  publisher = {{IEEE}}
}

@INPROCEEDINGS{Gustavsson2014,
  author={Gustavsson, Ulf and Sanchéz-Perez, Cesar and Eriksson, Thomas and Athley, Fredrik and Durisi, Giuseppe and Landin, Per and Hausmair, Katharina and Fager, Christian and Svensson, Lars},
  booktitle={Proc. IEEE Glob. Commun. Conf. Workshops (GLOBECOM WS)}, 
  title={On the impact of hardware impairments on massive {MIMO}}, 
  year={2014},
  doi={10.1109/GLOCOMW.2014.7063447}}

@INPROCEEDINGS{Athley2015,
  author={Athley, Fredrik and Durisi, Giuseppe and Gustavsson, Ulf},
  booktitle={Proc. Eur. Conf. Antennas Propag. (EuCAP)}, 
  title={Analysis of Massive {MIMO} with hardware impairments and different channel models}, 
  year={2015},
  address={Lisbon, Portugal},
  doi={}}

@article{Marzetta2010,
  author  = {T. L. Marzetta},
  title   = {Noncooperative Cellular Wireless with Unlimited Numbers of Base Station Antennas},
  journal = {IEEE Trans. Wireless Commun.},
  volume  = {9},
  number  = {11},
  pages   = {3590--3600},
  year    = {2010}
}

@article{Larsson2014,
  author  = {E. G. Larsson and O. Edfors and F. Tufvesson and T. L. Marzetta},
  title   = {Massive {MIMO} for Next Generation Wireless Systems},
  journal = {IEEE Commun. Mag.},
  volume  = {52},
  number  = {2},
  pages   = {186--195},
  year    = {2014}
}

@inproceedings{Kaltenberger2010,
  author    = {F. Kaltenberger and D. Gesbert and R. Knopp},
  title     = {Relative Channel Reciprocity Calibration in {MIMO/TDD} Systems},
  booktitle = {Proc. Future Netw. Mobile Summit},
  address={Florence, Italy},
  month={Jun.},
  year      = {2010}
}

@article{Vieira2017,
  author  = {J. Vieira and F. Rusek and O. Edfors and S. Malkowsky and L. Liu and F. Tufvesson},
  title   = {Reciprocity Calibration for Massive {MIMO}: Proposal, Modeling, and Validation},
  journal = {IEEE Trans. Wireless Commun.},
  vol.    = {16},
  no.     = {5},
  pp.     = {3042--3056},
  year    = {2017}
}

@article{Bjornson2014_IT,
  author  = {E. Bj{\"o}rnson and J. Hoydis and M. Kountouris and M. Debbah},
  title   = {Massive {MIMO} Systems With Non-Ideal Hardware: Energy Efficiency, Estimation, and Capacity Limits},
  journal = {IEEE Trans. Inf. Theory},
  vol.    = {60},
  no.     = {11},
  pp.     = {7112--7139},
  year    = {2014}
}

@inproceedings{BourdouxRAWCON03,
  author    = {A. Bourdoux and B. Come and N. Khaled},
  title     = {Non-reciprocal transceivers in {OFDM/SDMA} systems: Impact and mitigation},
  booktitle = {Proc. IEEE Radio Wireless Conf. (RAWCON)},
  address   = {Boston, MA, USA},
  year      = {2003},
  doi       = {10.1109/RAWCON.2003.1227923}
}

@article{JiangOTACalibration,
  author  = {X. Jiang and A. Decurninge and K. Gopala and F. Kaltenberger and M. Guillaud and D. Slock},
  title   = {A Framework for Over-the-Air Reciprocity Calibration for {TDD} Massive {MIMO} Systems},
  journal = {IEEE Trans. Wireless Commun.},
  vol.    = {17},
  no.     = {9},
  pages   = {5975--5990},
  month   = sep,
  year    = {2018}
}

@inproceedings{MaPIMRC2015,
  author    = {Y. Ma and D. Zhu and B. Li and P. Liang},
  title     = {Channel Estimation Error and Beamforming Performance in Repeater-Enhanced Massive {MIMO} Systems},
  booktitle = {Proc. IEEE 26th Annu. Int. Symp. Pers., Indoor, Mobile Radio Commun. (PIMRC)},
  address   = {Hong Kong, China},
  month     = Aug,
  year      = {2015}
}

@ARTICLE{Ammar2022,
  author={Ammar, Hussein A. and Adve, Raviraj and Shahbazpanahi, Shahram and Boudreau, Gary and Srinivas, Kothapalli Venkata},
  journal={IEEE Commun. Surv. Tutor.}, 
  title={User-Centric Cell-Free Massive {MIMO} Networks: {A} Survey of Opportunities, Challenges and Solutions}, 
  year={2022},
  volume={24},
  number={1},
  pages={611-652},
  doi={10.1109/COMST.2021.3135119}
}

@ARTICLE{Ngo2017,
  author={Ngo, Hien Quoc and Ashikhmin, Alexei and Yang, Hong and Larsson, Erik G. and Marzetta, Thomas L.},
  journal={IEEE Trans. Wireless Commun.}, 
  title={Cell-Free Massive {MIMO} Versus Small Cells}, 
  year={2017},
  volume={16},
  number={3},
  pages={1834-1850},
  doi={10.1109/TWC.2017.2655515}
}

@ARTICLE{Emil2020cellfree,
  author={Bj\"ornson, Emil and Sanguinetti, Luca},
  journal={IEEE Trans. Wireless Commun.}, 
  title={Making Cell-Free Massive {MIMO} Competitive With {MMSE} Processing and Centralized Implementation}, 
  year={2020},
  volume={19},
  number={1},
  pages={77-90},
  doi={10.1109/TWC.2019.2941478}}

@ARTICLE{Papaza2020cellfree,
  author={Papazafeiropoulos, Anastasios and Kourtessis, Pandelis and Renzo, Marco Di and Chatzinotas, Symeon and Senior, John M.},
  journal={IEEE Trans. Veh. Technology}, 
  title={Performance Analysis of Cell-Free Massive {MIMO} Systems: A Stochastic Geometry Approach}, 
  year={2020},
  volume={69},
  number={4},
  pages={3523-3537},
  doi={10.1109/TVT.2020.2970018}}

@ARTICLE{Nie2020,
  author={Nie, Rongjiang and Chen, Li and Zhao, Nan and Chen, Yunfei and Yu, F. Richard and Wei, Guo},
  journal={IEEE Trans. Commun.}, 
  title={Relaying Systems With Reciprocity Mismatch: Impact Analysis and Calibration}, 
  year={2020},
  volume={68},
  number={7},
  pages={4035-4049},
  doi={10.1109/TCOMM.2020.2982632}
}

@INPROCEEDINGS{Iimori2023GLOBECOM,
  author    = {H. Iimori and E. Kurihara and T. Yoshida and J. Vieira and S. Malomsoky},
  title     = {Amplification Strategy in Repeater-Assisted {MIMO} Systems via Minorization Maximization},
  booktitle = {Proc. IEEE Global Communications Conf. (GLOBECOM)},
  year      = {2023},
  month     = dec,
  address   = {Kuala Lumpur, Malaysia},
  doi       = {10.1109/GLOBECOM54140.2023.10437535}
}

@ARTICLE{Vu2025RSMA,
  author    = {T.-H. Vu and N. H. Tu and V. N. Q. Bao},
  title     = {Study on Reconfigurable Repeater-Based {RSMA} Systems},
  journal   = {IEEE Wireless Commun. Lett.},
  volume    = {14},
  number    = {5},
  pages     = {1371--1375},
  year      = {2025},
  month     = may,
  doi       = {10.1109/LWC.2025.3542613}
}

@ARTICLE{Jopanya2025SWARM,
  author    = {P. Jopanya and D. P. M. Osorio},
  title     = {Enabling Drone Detection with SWARM Repeater-Assisted {MIMO ISAC}},
  journal   = {arXiv preprint arXiv:2509.19119},
  year      = {2025},
  month     = sep,
  url       = {https://arxiv.org/abs/2509.19119}
}

@ARTICLE{Andersson2026DynamicTDD,
  author    = {M. Andersson and A. Chowdhury and E. G. Larsson},
  title     = {Is Repeater-Assisted Massive {MIMO} Compatible with Dynamic {TDD}?},
  journal   = {arXiv preprint arXiv:2510.20998},
  year      = {2026},
  month     = jan,
  url       = {https://arxiv.org/abs/2510.20998}
}

@ARTICLE{Topal2025RA_MIMO,
  author    = {O. A. Topal and \"O. T. Demir and E. Bj\"ornson and C. Cavdar},
  title     = {Fair and Energy-Efficient Activation Control Mechanisms for Repeater-Assisted Massive {MIMO}},
  journal   = {arXiv preprint arXiv:2504.03428},
  year      = {2025},
  month     = apr,
  url       = {https://arxiv.org/abs/2504.03428}
}

@ARTICLE{Chowdhury2026DualAntenna,
  author    = {A. Chowdhury and E. G. Larsson},
  title     = {On the Performance of Dual-Antenna Repeater Assisted Bi-Static {MIMO ISAC}},
  journal   = {arXiv preprint arXiv:2511.17980},
  year      = {2026},
  month     = jan,
  url       = {https://arxiv.org/abs/2511.17980}
}

@ARTICLE{Bai2026RepeaterSwarm,
  author  = {J. Bai and A. Chowdhury and A. Hansson and E. G. Larsson},
  title   = {Repeater Swarm-Assisted Cellular Systems: {Interaction} Stability and Performance Analysis},
  journal = {IEEE Trans. Wireless Commun.},
  volume  = {25},
  pages   = {10018--10034},
  year    = {2026},
  month   = jan,
  doi     = {10.1109/TWC.2025.3647294}
}

@ARTICLE{Le2025ARRNOMA,
  author  = {A.-T. Le and T.-H. Vu and N. H. Tu and T. N. Nguyen and L.-T. Tu and M. Voznak},
  title   = {Active-Reconfigurable-Repeater-Assisted {NOMA} Networks in Internet of Things: Reliability, Security, and Covertness},
  journal = {IEEE Internet Things J.},
  volume  = {12},
  number  = {7},
  pages   = {8759--8772},
  year    = {2025},
  month   = apr,
  doi     = {10.1109/JIOT.2024.3503278}
}

@INPROCEEDINGS{Larsson2024StabilityRepeaters,
  author    = {E. G. Larsson and J. Bai},
  title     = {Stability Analysis of Interacting Wireless Repeaters},
  booktitle = {Proc. IEEE Int. Workshop Signal Process. Adv. Wireless Commun. (SPAWC)},
  year      = {2024},
  month     = sep,
  address   = {Lucca, Italy},
  doi       = {10.1109/SPAWC60668.2024.10694183}
}

@INPROCEEDINGS{Ito2025bipda,
  author={Ito, Kenta and Takahashi, Takumi and Ibi, Shinsuke and Sampei, Seiichi},
  booktitle={Proc. IEEE Int. Conf. Commun. (ICC)}, 
  title={Bayesian Joint Channel and Data Estimation for Correlated Large {MIMO} with Non-orthogonal Pilots}, 
  year={2021},
  address={Montreal, Canada},
  month=Jun,
  doi={10.1109/ICC42927.2021.9500313}}

@ARTICLE{Takahashi2024JCTDD,
  author={Takahashi, Takumi and Iimori, Hiroki and Ishibashi, Koji and Ibi, Shinsuke and de Abreu, Giuseppe Thadeu Freitas},
  journal={IEEE Trans. Wireless Commun.}, 
  title={Bayesian Bilinear Inference for Joint Channel Tracking and Data Detection in Millimeter-Wave {MIMO} Systems}, 
  year={2024},
  volume={23},
  number={9},
  pages={11136-11153},
  doi={10.1109/TWC.2024.3379122}}

@ARTICLE{Rayan2025JCDRE,
  author={Ranasinghe, Kuranage Roche Rayan and Seok Rou, Hyeon and Thadeu Freitas de Abreu, Giuseppe and Takahashi, Takumi and Ito, Kenta},
  journal={IEEE Trans. Wireless Commun.}, 
  title={Joint Channel, Data, and Radar Parameter Estimation for {AFDM} Systems in Doubly-Dispersive Channels}, 
  year={2025},
  volume={24},
  number={2},
  pages={1602-1619},
  doi={10.1109/TWC.2024.3510935}}

@article{Iimori2021GrantFree,
  author    = {Hiroki Iimori and Takumi Takahashi and Koji Ishibashi and
               Giuseppe Thadeu Freitas de Abreu and Wei Yu},
  title     = {Grant-Free Access via Bilinear Inference for Cell-Free {MIMO} With Low-Coherence Pilots},
  journal   = {IEEE Trans. Wireless Commun.},
  volume    = {20},
  number    = {11},
  pages     = {7694--7710},
  year      = {2021},
  month     = nov,
  doi       = {10.1109/TWC.2021.3088125},
  issn      = {1536-1276},
}

@ARTICLE{Takahashi2023ADCcellfree,
  author={Takahashi, Takumi and Iimori, Hiroki and Ando, Kengo and Ishibashi, Koji and Ibi, Shinsuke and de Abreu, Giuseppe Thadeu Freitas},
  journal={IEEE Trans. Wireless Commun.}, 
  title={Bayesian Receiver Design via Bilinear Inference for Cell-Free Massive {MIMO} With Low-Resolution {ADC}s}, 
  year={2023},
  volume={22},
  number={7},
  pages={4756-4772},
  doi={10.1109/TWC.2022.3228326}}

@Book{Chockalingam2014,
Title                    = {Large {MIMO} {S}ystems},
Author                   = {Chockalingam, A. and Rajan, B. Sundar},
Publisher                = {Cambridge University Press},
Year                     = {2014}
}

@ARTICLE{Ito2024,
  author={Ito, Kenta and Takahashi, Takumi and Ibi, Shinsuke and Sampei, Seiichi},
  journal={IEEE Trans. Commun.}, 
  title={Bilinear Gaussian Belief Propagation for Massive {MIMO} Detection With Non-Orthogonal Pilots}, 
  year={2024},
  volume={72},
  number={2},
  pages={1045-1061},
  doi={10.1109/TCOMM.2023.3325479}}

@ARTICLE{Iimori2023JACE,
  author={Iimori, Hiroki and Takahashi, Takumi and Ishibashi, Koji and de Abreu, Giuseppe Thadeu Freitas and González G., David and Gonsa, Osvaldo},
  journal={IEEE Trans. Wireless Commun.}, 
  title={Joint Activity and Channel Estimation for Extra-Large {MIMO} Systems}, 
  year={2022},
  volume={21},
  number={9},
  pages={7253-7270},
  doi={10.1109/TWC.2022.3157271}}

@INPROCEEDINGS{Vieira2014GCOM,
  author={Vieira, Joao and Rusek, Fredrik and Tufvesson, Fredrik},
  booktitle={Proc. IEEE Glob. Commun. Conf. (GLOBECOM)}, 
  title={Reciprocity calibration methods for massive {MIMO} based on antenna coupling}, 
  year={2014},
  month={Dec.},
  address={Austin, USA},
  doi={10.1109/GLOCOM.2014.7037384}}

\end{document}